\newcites{Main}{References}
\newtheorem{theorem}{Theorem}
\newtheorem{definition}{Definition}
\newtheorem{lemma}{Lemma}
\newtheorem{proposition}{Proposition}
\newtheorem{corollary}{Corollary}
\newtheorem{remark}{Remark}
\newtheorem{assumption}{Assumption}
\newcommand{\eqn}[1]{(\ref{eqn:#1})}
\newcommand{\rem}[1]{\hyperref[rem:#1]{Remark~\ref*{rem:#1}}}
\newcommand{\thm}[1]{\hyperref[thm:#1]{Theorem~\ref*{thm:#1}}}
\newcommand{\cor}[1]{\hyperref[cor:#1]{Corollary~\ref*{cor:#1}}}
\newcommand{\defn}[1]{\hyperref[defn:#1]{Definition~\ref*{defn:#1}}}
\newcommand{\lem}[1]{\hyperref[lem:#1]{Lemma~\ref*{lem:#1}}}
\newcommand{\prop}[1]{\hyperref[prop:#1]{Proposition~\ref*{prop:#1}}}
\newcommand{\fig}[1]{\hyperref[fig:#1]{Figure~\ref*{fig:#1}}}
\newcommand{\tab}[1]{\hyperref[tab:#1]{Table~\ref*{tab:#1}}}
\newcommand{\algo}[1]{\hyperref[algo:#1]{Algorithm~\ref*{algo:#1}}}
\renewcommand{\sec}[1]{\hyperref[sec:#1]{Section~\ref*{sec:#1}}}
\newcommand{\append}[1]{\hyperref[append:#1]{Appendix~\ref*{append:#1}}}
\newcommand{\fac}[1]{\hyperref[fac:#1]{Fact~\ref*{fac:#1}}}
\newcommand{\lin}[1]{\hyperref[lin:#1]{Line~\ref*{lin:#1}}}
\newcommand{\fnote}[1]{\hyperref[fnote:#1]{Footnote~\ref*{fnote:#1}}}
\newcommand{\assump}[1]{\hyperref[assump:#1]{Assumption~\ref*{assump:#1}}}
\def\>{\rangle}
\def\<{\langle}
\newcommand{\vect}[1]{\ensuremath{\boldsymbol{#1}}}
\newcommand{\x}{\ensuremath{\mathbf{x}}}
\newcommand{\Z}{\mathbb{Z}}
\newcommand{\R}{\mathbb{R}}
\newcommand{\C}{\mathbb{C}}
\newcommand{\Herm}{\text{Herm}}
\renewcommand{\S}{\mathcal{S}}
\newcommand{\simH}{\widetilde{H}}
\renewcommand{\H}{\mathcal{H}}
\newcommand{\Hs}{\widetilde{\H}}
\renewcommand{\d}{\mathrm{d}}
\def\:{\hbox{\bf:}}
\let\oldnl\nl
\newcommand{\nonl}{\renewcommand{\nl}{\let\nl\oldnl}}
\begin{document}

\title{Quantum Hamiltonian Descent\thanks{This work was partially funded by the U.S. Department of Energy, Office of Science, Office of
Advanced Scientific Computing Research, Quantum Testbed Pathfinder Program under Award
Number DE-SC0019040, Accelerated Research in Quantum Computing under Award Number
DE-SC0020273, and the U.S. National Science Foundation
grant CCF-1816695 and CCF-1942837 (CAREER). We also acknowledge the research credits from Amazon Web Services. An accompanying website is at \url{https://jiaqileng.github.io/quantum-hamiltonian-descent/}.}}

\author{Jiaqi Leng$^{1,3}$ \ \ \ Ethan Hickman$^{2,3}$ \ \ \ Joseph Li$^{2,3}$ \ \ \  Xiaodi Wu$^{2,3,\dagger}$ \\
\small{$^1$Department of Mathematics, University of Maryland, College Park, USA} \\
\small{$^2$Department of Computer Science, University of Maryland} \\
\small{$^3$Joint Center for Quantum Information and Computer Science, University of Maryland} \\
\small{$^\dagger$Corresponding author. \href{mailto:xiaodiwu@umd.edu}{xiaodiwu@umd.edu}}
}

\normalsize

\date{}

\maketitle
\thispagestyle{empty}

\begin{abstract}
Gradient descent is a fundamental algorithm in both theory and practice for continuous optimization. Identifying its quantum counterpart would be appealing to both theoretical and practical quantum applications. A conventional approach to quantum speedups in optimization relies on the quantum acceleration of intermediate steps of classical algorithms, while keeping the overall algorithmic trajectory and solution quality unchanged. We propose Quantum Hamiltonian Descent (QHD), which is derived from the path integral of dynamical systems referring to the continuous-time limit of classical gradient descent algorithms, as a truly quantum counterpart of classical gradient methods where the contribution from classically-prohibited trajectories can significantly boost QHD's performance for non-convex optimization. Moreover, QHD is described as a Hamiltonian evolution efficiently simulatable on both digital and analog quantum computers. By embedding the dynamics of QHD into the evolution of the so-called Quantum Ising Machine (including D-Wave and others), we empirically observe that the D-Wave-implemented QHD outperforms a selection of state-of-the-art gradient-based classical solvers and the standard quantum adiabatic algorithm, based on the time-to-solution metric, on non-convex constrained quadratic programming instances up to 75 dimensions. Finally, we propose a “three-phase picture” to explain the behavior of QHD, especially its difference from the quantum adiabatic algorithm.  
\end{abstract}

\vspace{4mm}
Continuous optimization, stemming from the mathematical modeling of real-world systems, is ubiquitous in applied mathematics, operations research, and computer science~\citeMain{nocedal1999numerical,weinan2020machine,burer2012non,lecun2015deep,jain2017non}.
These problems often come with high dimensionality and non-convexity, posing great challenges for the design and implementation of optimization algorithms~\citeMain{glorot2010understanding,chen2019deep}. 
It is a natural question to investigate potential quantum speedups for continuous optimization, which has been actively studied in past decades. 
A conventional approach toward this end is to quantize existing classical algorithms by replacing their components with quantum subroutines, while carefully balancing the potential speedup and possible overheads.
However, proposals (e.g.~\citeMain{brandao2017quantum, van2020quantum, kalev2019quantum, chakrabarti2020quantum, van2020convex, li2019sublinear, zhang2021quantum}) following this approach usually achieve only moderate quantum speedups, and more importantly, they rarely improve the quality of solutions because they essentially follow the same solution trajectories in the original classical algorithms.

Gradient descent (and its variant) is arguably the most fundamental optimization algorithm in continuous optimization, both in theory and in practice, due to its simplicity and efficiency in converging to critical points. 
However, many real-world problems have spurious local optima~\citeMain{safran2018spurious}, for which gradient descent is subject to slow convergence since it only leverages first-order information~\citeMain{ruder2016overview}.
On the other hand, quantum algorithms have the potential to escape from local minima and find near-optimal solutions by leveraging the \textit{quantum tunneling} effect~\citeMain{farhi:quantum, boixo2014evidence}. Therefore, it is desirable to identify a quantum counterpart of gradient descent that is simple and efficient on quantum computers while leveraging the quantum tunneling effect to escape from spurious local optima. With such features, the quality of the solutions is improved.
Prior attempts (e.g.,~\citeMain{rebentrost:quantum}) to quantize gradient descent, which followed the conventional approach, unfortunately fail to achieve the aforementioned goal, which seems to require a completely new approach to quantization. 

\begin{figure}[ht!]
\centering
     \includegraphics[width=16cm]{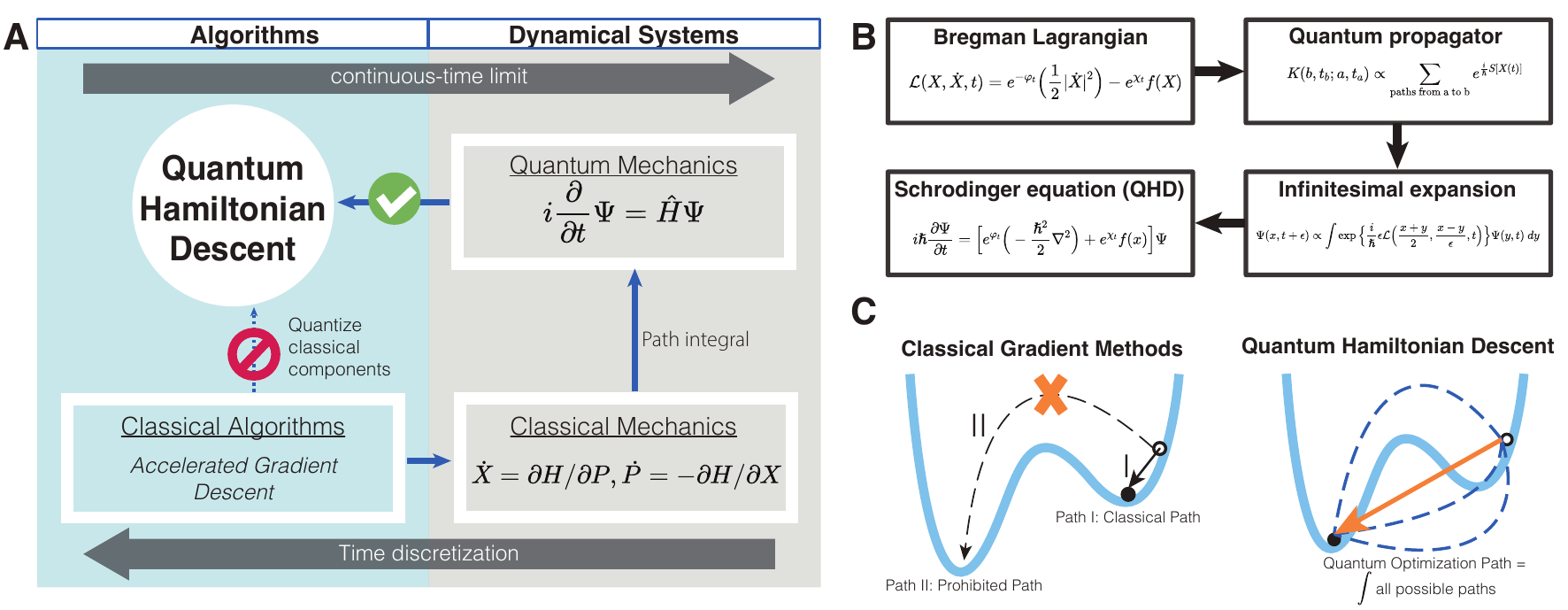}
     \caption{\textbf{Schematic of Quantum Hamiltonian Descent (QHD).}
     \textbf{A.} Road map showing the conventional and our approach of quantizing classical gradient descent (GD). Specifically, QHD is derived through the path-integral quantization of the dynamical system corresponding to the continuous-time limit of classical GD, and hence can be deemed as the path integral of the algorithmic trajectories of classical GD. 
     \textbf{B.} Four major technical steps in the derivation of QHD.
     \textbf{C.} An illustrative example where classical GD with bad initialization will be trapped in a local minimum, while QHD can easily escape and find near-optimal solutions by taking the path integral of trajectories prohibited by classical mechanics.}
     \label{fig:fig1}
\end{figure}

\section*{A genuine quantum gradient descent}
 
Our main observation is a perhaps unintuitive connection between gradient descent and dynamical systems satisfying classical physical laws.
Precisely, it is known that the continuous-time limit of many gradient-based algorithms can be understood as classical physical dynamical systems, e.g., the Bregman-Lagrangian framework derived in~\citeMain{wibisono:variational} to model accelerated gradient descent algorithms. 
Conversely, variants of gradient-based algorithms could be emerged through the time discretization of these continuous-time dynamical systems~\citeMain{betancourt:symplectic}.
This two-way correspondence inspired us a second approach to quantization: instead of quantizing a subroutine in gradient descent, we can quantize the continuous-time limit of gradient descent as a whole, and the resulting quantum dynamical systems lead to quantum algorithms as seen in \fig{fig1}A.
Using the path integral formulation of quantum mechanics (\fig{fig1}B), we quantize the Bregman-Lagrangian framework to obtain a quantum-mechanical system governed by the Schr\"odinger equation $i \frac{\d}{\d t} \Psi(t) = \hat{H}(t) \Psi(t)$, where $\Psi(t)$ is the quantum wave function, and the quantum Hamiltonian reads: 
\begin{align}\label{eqn:main1}
   \hat{H}(t) =e^{\varphi_t}\left(-\frac{1}{2}\Delta\right) + e^{\chi_t}f(x),
\end{align}
where $e^{\varphi_t}$, $e^{\chi_t}$ are \textit{damping parameters} that control the energy flow in the system. We require $e^{\varphi_t/\chi_t}\to 0$ for large $t$ so the kinetic energy is gradually drained out from the system, which is crucial for the long-term convergence of the evolution. Just like the Bregman-Lagrangian framework, different damping parameters in $\hat{H}(t)$ correspond to different prototype gradient-based algorithms. $\Delta$ is the Laplacian operator over Euclidean space. $f(x)$, the objective function to minimize, is assumed to be unconstrained and continuously differentiable. More details of the derivation of QHD are available in \sec{derivation}. The Schr\"odinger dynamics in \eqn{main1} hence generates a family of quantum gradient descent algorithms that we will refer to as \textbf{Quantum Hamiltonian Descent}, or simply \textbf{QHD}.

As desired, QHD inherits simplicity and efficiency from classical gradient descent. QHD takes in an easily prepared initial wave function $\Psi(0)$ and evolves the quantum system described by \eqn{main1}. 
The solution to the optimization problem is obtained by measuring the position observable $\hat{x}$ at the end of the algorithm (i.e., at time $t=T$).
In other words, QHD is no different from a basic Hamiltonian simulation task, which can be done on digital quantum computers using standard techniques as shown in \algo{trotter_qhd} with provable efficiency (\thm{gate-complexity}).
The simplicity and efficiency of QHD on quantum machines potentially make it as widely applicable as classical gradient descent. 

For convex problems, we prove in \thm{convex_convergence} that QHD is guaranteed to find the global solution. In this case, the solution trajectory of QHD is analogous to that of a classical algorithm. Non-convex problems, known to be \texttt{NP}-hard in general, are much harder to solve. Under mild assumptions on a non-convex $f$, we show the global convergence of QHD given appropriate damping parameters and sufficiently long evolution time (\thm{adiabatic_limit_qhd}). \fig{fig1}C shows a conceptual picture of QHD's quantum speedup: intuitively, QHD can be regarded as a \textit{path integral} of solution trajectories, some of which are prohibited in classical gradient descent. Interference among all solution trajectories gives rise to a unique quantum phenomenon called \textit{quantum tunneling}, which helps QHD overcome high-energy barriers and locate the global minimum.

\section*{Performance of QHD on hard optimization problems}

To visualize the difference between QHD and other classical and quantum algorithms, we test four algorithms (QHD, Quantum Adiabatic Algorithm (QAA), Nesterov's accelerated gradient descent (NAGD), and stochastic gradient descent (SGD)) via classical simulation on 22 optimization instances with diversified landscape features selected from benchmark functions for global optimization problems.\footnote{More details of the 22 optimization instances can be found in \sec{2d-test-problems}. {In our experiment, we implement SGD by adding a small Gaussian perturbation to the analytical gradient. Unlike deterministic algorithms like NAGD, this stochastic perturbation seems to help with non-convex optimization.}} QAA solves an optimization problem by simulating a quantum adiabatic evolution~\citeMain{farhi:quantum}, and it has mostly been applied to discrete optimization in the literature. To solve continuous optimization with QAA, a common approach is to represent each continuous variable with a finite-length bitstring so the original problem is converted to combinatorial optimization defined on the hypercube $\{0,1\}^N$, where $N$ is the total number of bits (e.g.,~\citeMain{cohen:portfolio,potok:adiabatic}). In our experiment, we adopt the radix-2 representation (i.e., binary expansion) and assign $7$ bits for each continuous variable. This allows QAA to handle the optimization instances as discrete problems over $\{0,1\}^{14}$.\footnote{Effectively, this means we discretize the continuous domain $[0,1]^2$ into a $128 \times 128$ mesh grid.}  See \sec{qaa} for the detailed setup. 

\begin{figure}[ht!]
    \centering
    \includegraphics[width=16cm]{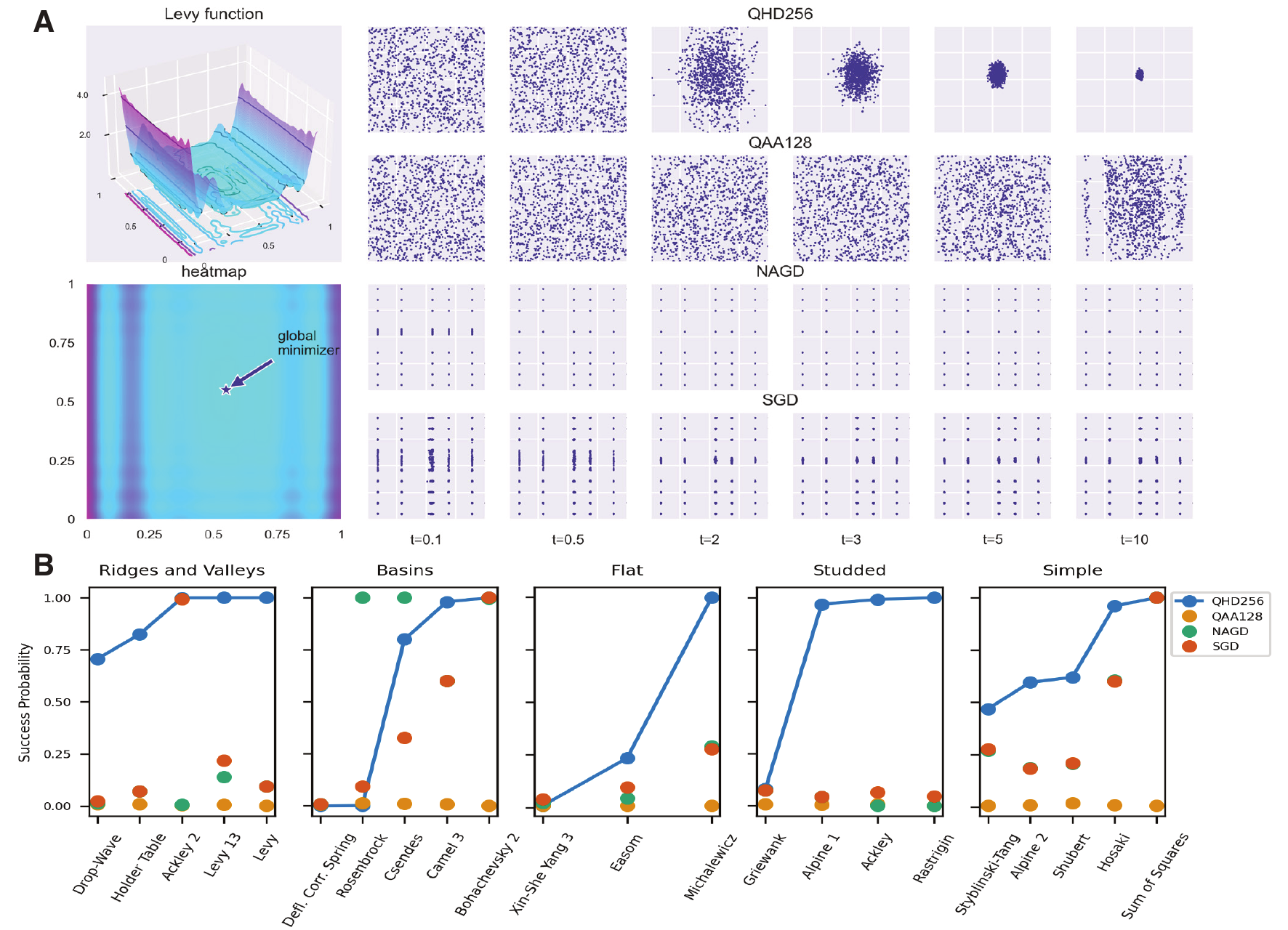}
    \caption{\textbf{Quantum and classical optimization methods for two-dimensional test problems.}
    \textbf{A.} Surface and heatmap plots of the Levy function. Samples from the distributions of QHD, QAA, NAGD, and SGD at different (effective) evolution times $t=0.1, 0.5, 2, 3, 5, 10$ are shown as scatter plots.
    \textbf{B.} Final success probabilities of QHD, QAA, NAGD, and SGD for all 22 instances. {A final solution $x_k$ is considered ``successful'' if $|x_k-x^*|<0.1$, where $x^*$ is the (unique) global minimizer of $f$.} Data are categorized into five groups by landscape features of the objective functions.}
    \label{fig:fig2}
    \end{figure}

In \fig{fig2}A, we plot the landscape of Levy function, and the solutions from the four algorithms are shown for different evolution times $t$. For QHD and QAA, $t$ is the evolution time of the quantum dynamics; for the two classical algorithms, the effective evolution time $t$ is computed by multiplying the step size and the numer of iterations so that it is comparable to the one used in QHD and QAA.
Compared with QHD, QAA converges at a much slower speed and little apparent convergence is observed within the time window.
Although the two classical algorithms seem to converge faster than quantum algorithms, they have lower success probability because many solutions are trapped in spurious local minima. Our observation made with Levy function is consistent with the results of other functions: as shown in \fig{fig2}B, QHD has a higher success probability in most optimization instances at the same choice of total effective evolution time.

\begin{figure}[ht!]
    \centering
    \includegraphics[width=16cm]{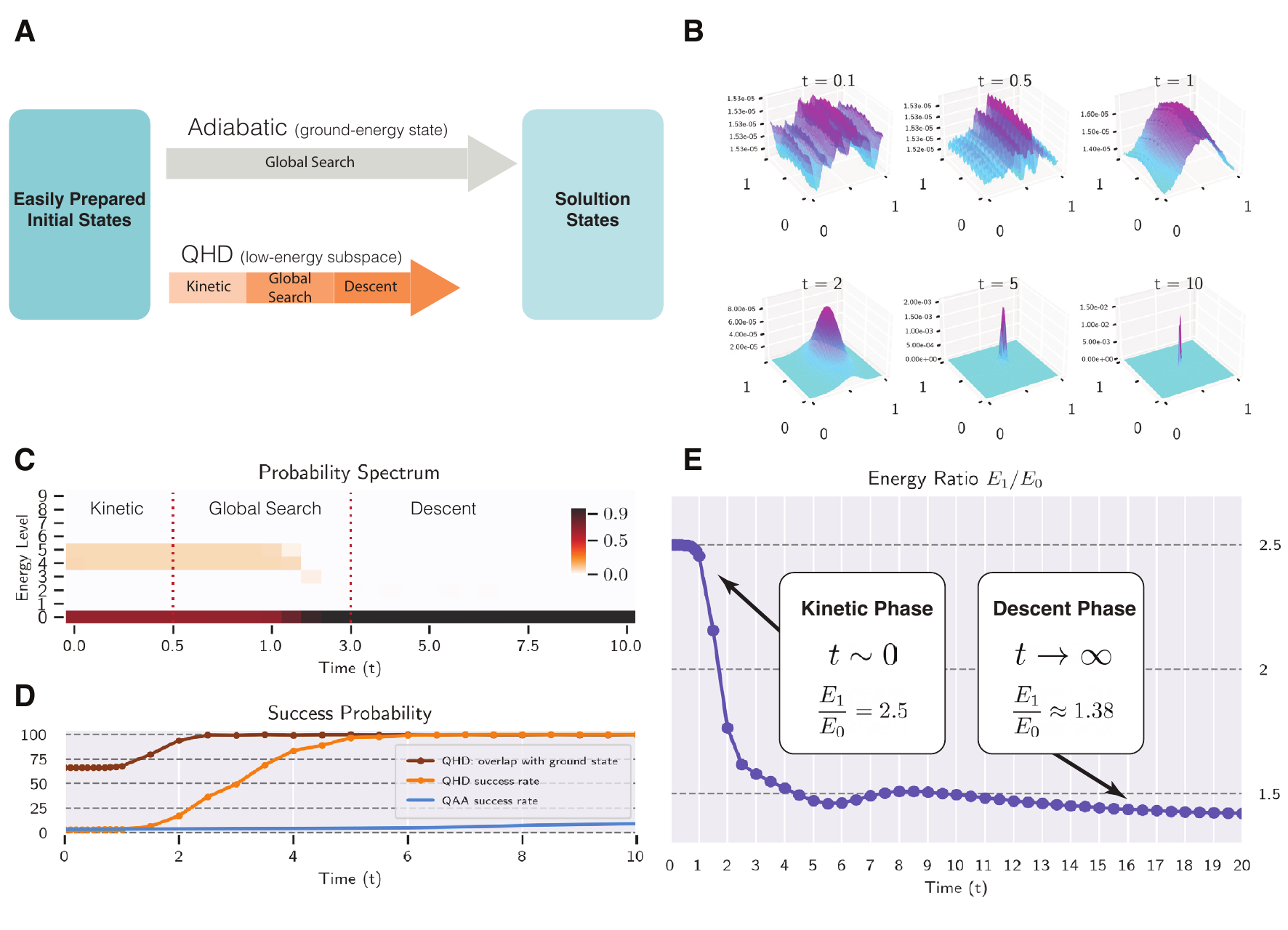}
    \caption{\textbf{The three-phase picture of QHD.}
    \textbf{A.} Schematic of the three-phase picture of QHD. QAA is a long-lasting global search procedure. QHD experiences three different phases and it has faster convergence for continuous problems.
    \textbf{B.} Surface plots of the probability density in QHD for the Levy function.
    \textbf{C.} Probability spectrum of QHD.
    \textbf{D.} Success probabilities of QHD and QAA.
    \textbf{E.} The energy ratio $E_1/E_0$ in QHD shown as a function of time $t$.}
    \label{fig:fig3}
\end{figure}

Focusing on the QHD dynamics, we find rich dynamical properties at different stages of evolution. \fig{fig3}B shows the quantum probability densities of QHD at different evolution times. First, the initial wave function becomes highly oscillatory and spreads to the full search space ($t=0.1,0.5$). Then, the wave function \textit{sees} the landscape of $f$ and starts moving towards the global minimum ($t=1,2$). Finally, the wave packet is clustered around the global minimum and converges like a classical gradient descent ($t=5,10$). This three-stage evolution is not only seen for the Levy function but also observed in many other instances (for details, see our website\footnote{\url{https://jiaqileng.github.io/quantum-hamiltonian-descent/}.}). 
We thus propose to divide QHD's evolution in solving optimization problems into three consecutive phases called the \textbf{kinetic phase}, the \textbf{global search phase}, and the \textbf{descent phase} according to the above observations. 

The three-phase picture of QHD could be supported by several quantitative characterizations of the QHD evolution. One such characterization is the probability spectrum of QHD, which shows the decomposition of the wave function to different energy levels (\fig{fig3}C). 
QHD begins with a major ground-energy component and a minor low-energy component.\footnote{When applied to the Levy function, no high-energy component with energy level $\ge 10$ is found in QHD.} During the global search phase, the low-energy component is absorbed into the ground-energy component, indicating that QHD finds the global minimum (\fig{fig3}D). The energy ratio $E_1/E_0$ is another characterization of the three phases in QHD (\fig{fig3}E), where $E_0$ (or $E_1$) is the ground (or first excited) energy of the QHD Hamiltonian $\hat{H}(t)$. In the kinetic phase, the kinetic energy $-\frac{1}{2}\Delta$ dominates in the system Hamiltonian so we have $E_1/E_0\approx 2.5$, which is the same as in a free-particle system. In the descent phase, the QHD Hamiltonian enters the ``semi-classical regime'' and the energy ratio can be theoretically computed based on the objective function.\footnote{For Levy function, the predicted semi-classical energy ratio reads $E_1/E_0\approx1.38$, which matches our numerical data.}

The three-phase picture of QHD sheds light on why QAA has slower convergence. Compared to QHD, QAA has neither a kinetic phase nor a descent phase. In the kinetic phase, QHD averages the initial wave function over the whole search space to reduce the risk of poor initialization, while QAA remains in the ground state throughout its evolution, so it never gains as much kinetic energy. In the descent phase, QHD exhibits convergence similar to classical gradient descent and  is insensitive to spatial resolution; such fast convergence is not seen in QAA.

In QAA, the use of the radix-2 representation scrambles the Euclidean topology so that the resulting discrete problem is even harder than the original problem (e.g., see \fig{graph_comparison}). Failing to incorporate the continuity structure, QAA is hence sensitive to the resolution of spatial discretization; we observe that higher resolutions often cause worse QAA performance, see \fig{compare_qaa_qhd}.\footnote{Of course, a radix-2 representation is not the only way to discretize a continuous problem. One can lift QAA to the continuous domain by choosing its Hamiltonian over a continuous space in a general way. From this perspective, QHD could be interpreted as a special version of the general QAA with a particular choice of the Hamiltonian. However, some existing results \citeMain{nenciu:linear} suggest that QHD may have fast convergence properties that the general theory of QAA fails to explain.}  See~\sec{three-phase} for details. 

\section*{Large-scale empirical study based on analog implementation}

One great promise of QHD lies in solving high-dimensional non-convex problems in the real world. However, a large-scale empirical study is infeasible with classical simulation due to the curse of dimensionality. 
Although theoretically efficient, implementation of QHD instances of reasonable sizes on digital quantum computers would cost a gigantic number of fault-tolerant quantum gates\footnote{In \tab{resource-analysis}, we show the count of T gates in the digital implementation of QHD. It turns out that solving 50-dimensional problems with low resolution will cost hundreds of millions of fault-tolerant T gates.}, rendering an  empirical study based on digital implementation a dead end in the near term.

Analog quantum computers (or quantum simulators) are alternative devices that directly emulate certain quantum Hamiltonian evolutions without using quantum gates, though they usually have more limited programmability.
However, recent experimental results suggest a great advantage of continuous-time analog quantum devices over the digital ones for quantum simulation in the NISQ era due to their scalability and lower overhead for some simulation tasks. 
Compared with other quantum algorithms, typically described by circuits of quantum gates, a unique feature of QHD is that its description is itself a Hamiltonian simulation task, which makes it possible to leverage near-term analog devices for its implementation. 

A conceptually simple analog implementation of QHD would be building a quantum simulator whose Hamiltonian exactly matches the target QHD Hamiltonian \eqn{main1}, which is, however, less feasible in practice. A more pragmatic strategy is to \textit{embed} the QHD Hamiltonian into existing analog simulators so we can emulate QHD as part of the full dynamics.

To this end, we introduce the \textbf{Quantum Ising Machine} (or simply \textbf{QIM}) as an abstract model for some of the most powerful analog quantum simulators nowadays. It is described by the following quantum Ising Hamiltonian:
\begin{align}\label{eqn:main2}
    H(t) = - \frac{A(t)}{2} \left(\sum_j \sigma^{(j)}_x\right) + \frac{B(t)}{2} \left(\sum_j h_j \sigma^{(j)}_z + \sum_{j>k} J_{j,k} \sigma^{(j)}_z \sigma^{(k)}_z\right),
\end{align}
where $\sigma^{(j)}_x$ and $\sigma^{(j)}_z$ are the Pauli-X and Pauli-Z operator acting on the $j$-th qubit, $A(t)$ and $B(t)$ are time-dependent control functions. The controllability of $A(t), B(t), h_j, J_{j,k}$ represents the programmability of QIMs, which would depend on the specific instantiation of QIM such as the D-Wave systems \citeMain{dwave-sys-doc}, QuEra neutral-atom system \citeMain{wurtz2022industry}, or otherwise.

At a high level, our Hamiltonian embedding technique is as follows: (i) discretize the QHD Hamiltonian \eqn{main1} to a finite-dimensional matrix; (ii) identify an invariant subspace $\mathcal{S}$ of the simulator Hamiltonian for the evolution;
(iii) program the simulator Hamiltonian \eqn{main2} so its restriction to the invariant subspace $\mathcal{S}$ matches the discretized QHD Hamiltonian. 
In this way, we effectively simulate the QHD Hamiltonian in the subspace $\mathcal{S}$ (called the \textit{encoding} subspace) of the simulator's full Hilbert space.
By measuring the encoding subspace at the end of the analog emulation, we obtain solutions to an optimization problem.

Precisely, consider the one-dimensional case of QHD Hamiltonian that is $\hat{H}(t) = e^{\varphi_t}(-\frac{1}{2}\frac{\partial^2}{\partial x^2})+e^{\chi_t}f$. 
Following a standard discretization by the finite difference method \citeMain{morton:numerical}, QHD Hamiltonian becomes $\hat{H}(t)= -\frac{1}{2}e^{\varphi_t}\hat{L}+e^{\chi_t}\hat{F}$ where the second-order derivative $\frac{\partial^2}{\partial x^2}$ becomes a tridiagonal matrix (denoted by $\hat{L}$), and the potential operator $f$ is reduced to a diagonal matrix (denoted by $\hat{F}$).

We identify the so-called \textit{Hamming encoding} subspace $\mathcal{S}_H$ which is spanned by $(n+1)$ \textit{Hamming states} $\{\ket{H_j}:j=0,1,\dots,n\}$ for any $n$-qubit QIM.\footnote{Precisely, 
The $j$-th Hamming state $\ket{H_j}$ is the uniform superposition of bitstring states with Hamming weight (i.e., the number of ones in a bitstring) $j$:
\begin{align*}
    \ket{H_j} = \frac{1}{\sqrt{C_j}}\sum_{|b|=j}\ket{b},
\end{align*}
where $C_j$ is the number of states with Hamming weight $j$. For example, there are $n$ bitstring states with Hamming weight $1$: $\ket{0\dots001}$,$\ket{0\dots010}$,$\dots$,$\ket{1\dots000}$, and the Hamming-$1$ state $\ket{H_1}$ is the uniform superposition of all the $n$ states.}
By choosing appropriate parameters $h_j$, $J_{j,k}$ in \eqn{main2}, the subspace $\mathcal{S}_H$ is invariant under the QIM Hamiltonian. Moreover, the restriction of the first term $\sum^r_{j=1} \sigma^{j}_x$ onto $\mathcal{S}_H$ resembles the tridiagonal matrix $\hat{L}$, and the restriction of the second term in the QIM Hamiltonian (with Pauli-Z and -ZZ operators) represents a discretized quadratic function $\hat{F}$. 
A measurement on $\mathcal{S}_H$ can be conducted by measuring the full simulator Hilbert space in the computational basis and simple post-processing. 
The Hamming encoding construction is readily generalizable to higher-dimensional Laplacian operator $\Delta$ and quadratic polynomial functions $f$. See \sec{analog-implementation} for details. 

Our Hamming encoding enables an empirical study of an interesting optimization problem called quadratic programming (QP) on quantum simulators. Specifically, we consider QP with box constraints:
\begin{subequations}
\begin{align}
    \text{minimize}&\qquad f(\x)=\frac{1}{2} \x^\top \mathbf{Q} \x + \mathbf{b}^\top \x,\\
    \text{subject to}
    &\qquad \mathbf{0} \preccurlyeq \x \preccurlyeq \mathbf{1},
\end{align}
\end{subequations}
where $\mathbf{0}$ and $\mathbf{1}$ are $n$-dimensional vectors of all zeros and all ones, respectively. QP problems are the simplest case of nonlinear programming and they appear in almost all major fields in the computational sciences \citeMain{nocedal1999numerical,dostal2009optimal}. Despite of their simplicity and ubiquity, non-convex QP problems (i.e., ones in which the Hessian matrix $\mathbf{Q}$ is indefinite) are known to be NP-hard \citeMain{burer:nonconvex} in general.

We implement QHD on the D-Wave system\footnote{We access the D-Wave \texttt{advantage\_system6.1} through \href{https://aws.amazon.com/braket/}{Amazon Braket}.}, which instantiates a QIM and allows for the control of thousands of physical qubits with decent connectivity~\citeMain{main:dwave-qpu-characteristics}. While existing libraries of QP benchmark instances (e.g., \citeMain{furini:qplib}) are natural candidates for our empirical study, most of them can not be mapped to the D-Wave system because of its limited connectivity. We instead create a new test benchmark with 160 randomly generated QP instances in various dimensions (5, 50, 60, 75) whose Hessian matrices are indefinite and sparse (see \sec{qp-test-problems} for details) and for which analog implementations are possible on the D-Wave machine (referred as DW-QHD).

We compare DW-QHD with 6 other state-of-the-art solvers in our empirical study: DW-QAA (baseline QAA implemented on D-Wave), IPOPT~\citeMain{kawajir2006introduction}, SNOPT~\citeMain{gill2005snopt}, MATLAB's \texttt{fmincon} (with \texttt{SQP} solver), QCQP~\citeMain{park:general}, and a basic Scipy \texttt{minimize} function (with \texttt{TNC} solver). 
In the two quantum methods (DW-QHD, DW-QAA), we discretize the search space $[0,1]^d$ into a regular mesh grid with $8$ cells per edge due to the limited number of qubits in the D-Wave machine. 
To compensate for the loss of resolution, we post-process the coarse-grained D-Wave results by the Scipy \texttt{minimize} function, which is a local gradient solver mimicking the descent phase of a higher-resolution QHD and only has mediocre performance by itself. 
The choice of classical solvers covers a variety of state-of-the-art optimization methods, including gradient-based local search (Scipy \texttt{minimize}), interior-point method (IPOPT), sequential quadratic programming (SNOPT, MATLAB), and heuristic convex relaxation (QCQP).
Finally, to investigate the quality of the D-Wave machine in implementing QHD and QAA, we also classically simulate QHD and QAA for the 5-dimensional instances (Sim-QHD, Sim-QAA).\footnote{Note that we numerically compute Sim-QHD and Sim-QAA for $t_f = 1 \mu s$, which is much shorter than the time we set in the D-Wave experiment (in DW-QHD and DW-QAA, we choose $t_f = 800 \mu s$.}

We use the \emph{time-to-solution} (TTS) metric \citeMain{ronnow:defining} to compare the performance of solvers. TTS is the number of trials (i.e., initializations for classical solvers or shots for quantum solvers) required to obtain the correct global solution\footnote{For each test instance, the global solution is obtained by Gurobi~\citeMain{gurobi}.} up to $0.99$ success probability:
\begin{align}
    \text{TTS} = t_f \times \Big\lceil\frac{\ln(1-0.99)}{\ln(1-p_s)}\Big\rceil,
\end{align}
where $t_f$ is the average runtime per trial, and $p_s$ is the success probability of finding the global solution in a given trial. We run 1000 trials per instance and compute the TTS for each solver.

\begin{figure}[ht!]
\centering
     \includegraphics[width=16cm]{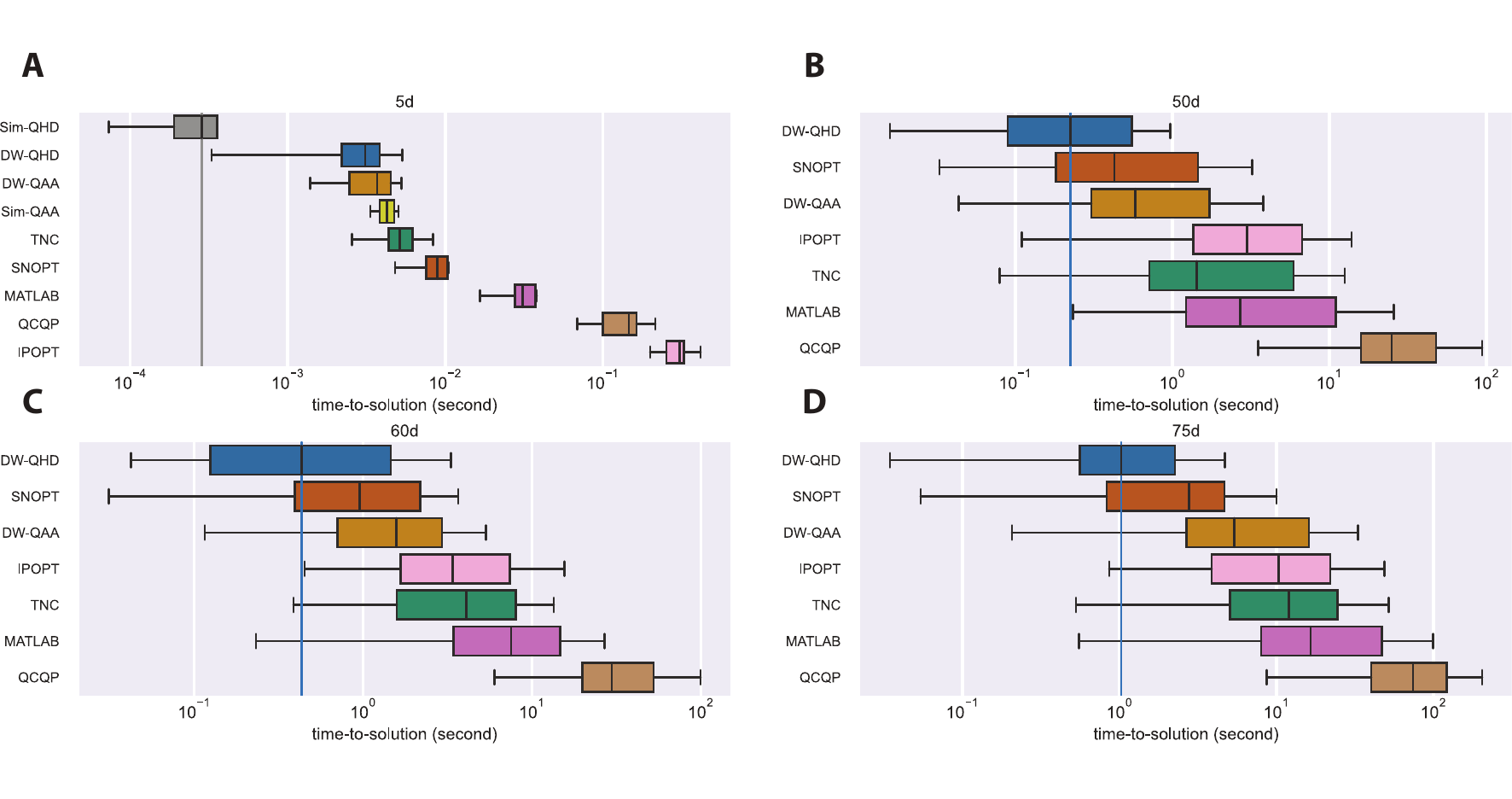}
     \caption{\textbf{Experiment results for quadratic programming problems.} Box plots of the time-to-solution (TTS) of selected quantum/classical solvers, gathered from four randomly generated quadratic programming benchmarks (\textbf{A}: 5-dimensional, \textbf{B}: 50-dimensional, \textbf{C}: 60-dimensional, \textbf{D}: 75-dimensional). The left and right boundaries of a box show the lower and upper quartiles of the TTS data, while the whiskers extend to show the rest of the TTS distribution. The median of the TTS distribution is shown as a black vertical line in the box. In each panel, the median line of the best solver extends to show the comparison with all other solvers.
     }
     \label{fig:fig4}
\end{figure}

In \fig{fig4}, we show the distribution of TTS for different solvers.\footnote{We also provide spreadsheets that summarize the test results for the QP benchmark, see \url{https://github.com/jiaqileng/quantum-hamiltonian-descent/tree/main/plot/fig4/qp_data}.} In the 5-dimensional case (\fig{fig4}A), Sim-QHD has the lowest TTS, and the quantum methods are generally more efficient than classical solvers. Note that with a much shorter annealing time ($t_f=1\mu s$ for Sim-QHD and $t_f=800\mu s$ for DW-QHD) Sim-QHD still does better than DW-QHD, indicating the D-Wave system is subject to significant noise and decoherence. Interestingly, Sim-QAA ($t_f=1\mu s$) is worse than DW-QAA ($t_f=800\mu s$), which shows QAA indeed has much slower convergence. In the higher dimensional cases (\fig{fig4}B,C,D), DW-QHD has the lowest median TTS among all tested solvers. 
Despite of the infeasibility of running Sim-QHD in high dimensions, 
our observation in the 5-dimensional case suggests that an ideal implementation of QHD could perform much better than DW-QHD, and therefore all other tested solvers in high dimensions. See \sec{qp-benchmark} for details. 

It is worth noting that DW-QHD does not outperform industrial-level nonlinear programming solvers such as Gurobi~\citeMain{gurobi} and CPLEX~\citeMain{cplex2009v12}. In our experiment, Gurobi usually solves the high-dimensional QP problems with TTS no more than $10^{-2}$ s. 
These solvers approximate the nonlinear problem by potentially exponentially many linear programming subroutines and use a branch-and-bound strategy for a smart but exhaustive search of the solution.\footnote{In \fig{branch_and_bound}, we show that the runtime of Gurobi scales exponentially with respect to the problem dimension for QP.}
However, the restrictions of the D-Wave machine (e.g., programmability and decoherence) force us to test on very sparse QP instances, which can be efficiently solved by highly-optimized industrial-level branch-and-bound solvers. 
On the other hand, we believe that QHD should be more appropriately deemed as a quantum upgrade of classical GD, which would more conceivably replace the role of GD rather than the entire branch-and-bound framework in classical optimizers.

\vspace*{-0.45em}
\section*{Conclusions}
In this work, we propose Quantum Hamiltonian Descent as a genuine quantum counterpart of classical gradient descent through a path-integral quantization of classical algorithms. 
Similar to classical GD, QHD is simple and efficient to implement on quantum computers. It leverages the quantum tunneling effect to escape from spurious local minima and has a superior performance than the standard quantum adiabatic algorithm, so we believe it could replace the role of classical GD in many optimization algorithms. 
Moreover, with the newly developed Hamiltonian embedding technique, we conduct a large-scale empirical study of QHD on non-convex quadratic programming instances up to 75 dimensions via an analog implementation of QHD on the D-Wave instantiation of a quantum Ising Hamiltonian simulator.  
We believe that QHD could be readily used as a benchmark algorithm for other quantum (e.g., \citeMain{wurtz2022industry,killoran2019strawberry}) or semi-quantum analog devices (e.g.,~\citeMain{inagaki2016coherent}) for both testing the quality of these devices and conducting more empirical study of QHD.

\vspace*{-0.45em}
\subsection*{Acknowledgment}
We thank Christian Borgs, Lei Fan, Shruti Puri, Andre Wibisono, Aram Harrow, Kyle Booth, Peter McMahon, Tamas Terlaky, Daniel Lidar, Wuchen Li, and Tongyang Li for helpful discussions at various stages of the development of this project. 

\bibliographystyleMain{myhamsplain}
\bibliographyMain{ref-main}

\cleardoublepage

\appendix
\numberwithin{equation}{section}

\begin{center}
{\bf\huge Supplementary Materials}    
\end{center}

\tableofcontents

\section{Derivation of QHD}\label{sec:derivation}
\subsection{Review of the Bregman-Lagrangian framework}\label{sec:connections}

The Bregman-Lagrangian framework is a variational formulation for accelerated gradient methods proposed by Wibisono, Wilson and Jordan \cite{wibisono2016variational}. This framework is formulated using Lagrangian mechanics. By defining the Lagrangian function

\begin{align}\label{eqn:bregman_lagrangian}
    \mathcal{L}(X,\dot{X},t) = e^{\alpha_t + \gamma_t}\left(\underbrace{\frac{1}{2}|e^{-\alpha_t }\dot{X}|^2}_\text{kinetic energy}  - \underbrace{e^{\beta_t}f(X)}_\text{potential energy}\right),
\end{align}
where $t \ge 0$ is the time, $X \in \R^d$ is the position, $\dot{X} \in \R^d$ is the velocity, and $\alpha_t, \beta_t, \gamma_t$ are arbitrary smooth functions that control the damping of energy in the system.  

Lagrangian mechanics is formulated by variational principle. For a trajectory of motion \\
$X(t)\colon [0,T] \to \R^d$, we define a functional $S$ as the \textit{action} of this trajectory

\begin{align}
    S[X(t)] = \int^T_{0} \mathcal{L}(X, \dot{X}, t)~ \d t,
\end{align}
where $\mathcal{L}$ is the Lagrangian of the system. A \textit{physical} path $X(t)$ from $a$ to $b$ is the curve $X(t)$ such that $X(0) = a$, $X(T) = b$, and it minimizes the action functional $S[X(t)]$. By calculus of variations, a least-action curve necessarily solves the Euler-Lagrange equation
\begin{align}\label{eqn:euler_lagrange}
    \frac{\d}{\d t} \left(\frac{\partial \mathcal{L}}{\partial \dot{X}}\right) - \frac{\partial \mathcal{L}}{\partial X} = 0.
\end{align}

With the Lagrangian function \eqn{bregman_lagrangian}, the resulting Euler-Lagrange equation is a second-order differential eqaution,
\begin{align}\label{eqn:second-order-ode}
    \ddot{X} + (\dot{\gamma}_t - \dot{\alpha}_t) \dot{X} + e^{2\alpha_t+\beta_t} \nabla f(X)=0.
\end{align}

The convergence of the Bregman-Lagrangian framework for continuously differentiable convex functions is established by constructing a Lyapunov function. Suppose the global minimizer of $f$ is $x^*$, we define the following function $\mathcal{E}_t$:
\begin{align}
    \mathcal{E}_t = \frac{1}{2}\left|e^{-\gamma_t}\dot{X}_t + X_t-x^*\right|^2 + e^{\beta_t}\left(f(X_t) - f(x^*)\right).
\end{align}
When $f$ is convex, it is shown that $\mathcal{E}_t$ is a Lyapunov function of \eqn{second-order-ode}, i.e., non-increasing along the solution trajectory $X_t$. The monotonicity of $\mathcal{E}_t$ leads to the convergence result \cite[Theorem 2.1]{wibisono2016variational},
\begin{align}\label{eqn:bregman-convergence}
    f(X(t)) - f(x^*) \le O(e^{-\beta_t}).
\end{align}

At a first glance, \eqn{bregman-convergence} says that any desired convergence rate can be achieved by choosing different $\beta_t$. This is true for continuous-time flows, as different time-dependent parameters $\alpha_t,\beta_t,\gamma_t$ result in the same path in spacetime while converging at different speeds\footnote{This invariance is called ``time dilation'' and it also holds in the quantum case, see \sec{time_dilation}.}. However, it does not imply that gradient-based methods generated by this dynamics can achieve any arbitrary convergence rate, which contradicts the fact that gradient-based methods can not converge faster than $O(t^{-2})$ in the worst case~\cite{nesterov2013introductory}. The devil is in time discretization: when translating the ODE model to practical gradient-based algorithms, we have to discretize the continuous time into discrete steps. The authors devise a family of gradient-based algorithms via an \textit{ad-hoc} discretization of \eqn{second-order-ode}, and these algorithms fail to convergence when $e^{-\beta_t}$ decays too fast. This matches our intuition that gradient-based optimization is bounded in convergence rate.

It is worth noting that the variational framework in \cite{wibisono2016variational} can be reformulated via Hamiltonian mechanics. The Lagrangian system is equivalently described by its Hamiltonian, the Legendre conjugate of the Lagrangian. In particular, the Hamiltonian corresponding to \eqn{bregman_lagrangian} takes the formal
\begin{align}\label{eqn:bregman_ham}
    \mathcal{H}(X,P,t) = e^{\alpha_t+\gamma_t}\left(\underbrace{\frac{1}{2}|e^{-\gamma_t }P|^2}_\text{kinetic energy}  + \underbrace{e^{\beta_t}f(X)}_\text{potential energy}\right),
\end{align}
where $X$ is the position, and $P$ is the momentum. This Hamiltonian function has the form of the sum of the kinetic and potential energy. The dynamics of the Hamiltonian system is then given by the Hamilton's equations,

\begin{subequations}\label{eqn:hamilton}
    \begin{align}
        \frac{\d X}{\d t} &= \frac{\partial \mathcal{H}}{\partial P} = e^{\alpha_t-\gamma_t} P,\label{eqn:hamilton-a}\\
        \frac{\d P}{\d t} &= -\frac{\partial \mathcal{H}}{\partial X} = - e^{\alpha_t+\beta_t+\gamma_t} \nabla f(X). \label{eqn:hamilton-b}
    \end{align}
  \end{subequations}
  One can show this system of ODEs is identical to the Euler-Lagrange equation \eqn{second-order-ode}.

\vspace{4mm}
Besides the variational formulation of accelerated gradient descent in \cite{wibisono2016variational}, Maddison et. al. \cite{maddison2018hamiltonian} propose a family of gradient-based methods via discretizations of conformal Hamiltonian dynamics, known as \textit{Hamiltonian descent methods}. This framework assumes an additional access to a kinetic energy $k$ that incorporates information about $f$. The resulting continuous-time trajectories achieve linear convergence on convex functions: $f(x_t) - f(x^*) \le O(e^{-ct})$, where $c$ depends on $k$ and $f$. They consider one implicit and two explicit discretization schemes in order to obtain gradient-based optimization algorithms. These algorithms exhibit similar convergence rates as the continuous-time flows.

\subsection{Derivation of QHD}\label{sec:quantization}

Now, we quantize the Lagrangian formulation of accelerated gradient methods by introducing the path integral formulation of quantum mechanics. For simplicity, the derivation works with a one-dimensional objective function $f\colon \R \to \R$, while the generalization to higher dimensions is trivial.

In classical mechanics, only the curves that solve the Euler-Lagrange equation are of interests because they are predicted by the variational principle. All other curves are considered ``unphysical'' because they are not stationary points of the action function $S[X(t)]$. 

For quantum mechanics, however, Feynman postulates that not only the ``physical'' trajectory but all trajectories contribute to the quantum evolution. These trajectories contribute equal magnitudes but different phases to the total amplitude. More precisely, the probability to go from a point $a$ at time $t_a$ to the point $b$ at time $t_b$ is $P(b,a) = |K(b, t_b;a, t_a)|^2$, where the amplitude function $K(b, t_b;a, t_a)$ is the sum of contributions of all paths from $a$ to $b$:

\begin{align}\label{eqn:propagator}
    K(b, t_b;a, t_a) \propto \sum_{\text{paths from $a$ to $b$}} e^{(i/\hbar)S[X(t)]}.
\end{align}
Here, $i$ is the imaginary unit and $\hbar$ is the Planck constant. The amplitude $K(b, t_b;a, t_a)$ is also known as the \textit{propagator} of the quantum dynamics because it can be used to compute the evolution of the wave function from $t_a$ to $t_b$:

\begin{align}\label{eqn:kernel_equation}
    \Psi(x,t_b) = \int_{\R} K(x, t_b;y,t_a) \Psi(y, t_a)~\d y.
\end{align}

To quantize accelerated gradient methods, we begin with the Lagrangian formulation,

\begin{align}\label{eqn:acc-lagrangian}
    \mathcal{L}(X,\dot{X},t) =  e^{-\varphi_t}\left(\frac{1}{2}|\dot{X}|^2\right)  - e^{\chi_t}f(X),
\end{align}
where $\varphi_t$ and $\chi_t$ are time-dependent functions that control the energy dissipation in the system.\footnote{In \cite{wibisono2016variational}, the authors introduced three time-dependent functions $\alpha_t, \beta_t, \gamma_t$ in the Bregman Lagrangian framework. Here, we use a simplified description with only two time-dependent parameters. The original formulation can be recovered by setting $\varphi_t = \alpha_t - \gamma_t$, $\chi_t = \alpha_t + \beta_t +  \gamma_t$.}

Suppose the quantum particle at time $t$ is described by the wave function $\Psi(x,t)$. To get an differential equation of $\Psi(x,t)$, we consider an infinitesimal time interval $[t, t+\epsilon]$. In this short time, the action $S[X(t)]$ is approximately $\epsilon$ times the Lagrangian, 

\begin{align}
    S[X(t)] = \epsilon \mathcal{L}\left(\frac{a+b}{2},\frac{b-a}{\epsilon},t\right),
\end{align}
which is correct correct to first order in $\epsilon$ \cite[2.5]{feynman2010quantum}. And the propagator can be evaluated by 

\begin{align}\label{eqn:first-order-propagator}
    K(x,t+\epsilon;y,t) = \frac{1}{A} \exp{\frac{i}{\hbar}\epsilon \mathcal{L}\left(\frac{x+y}{2},\frac{x-y}{\epsilon},t\right)},
\end{align}
where $A$ is a normalization factor that will be specified later. Plugging \eqn{first-order-propagator} into the \eqn{kernel_equation}, we obtain the following equation:

\begin{align}\label{eqn:infinitesimal}
    \Psi(x,t+\epsilon) = \frac{1}{A}\int^\infty_{-\infty} \exp{\frac{i}{\hbar} \epsilon \mathcal{L}\left(\frac{x+y}{2}, \frac{x-y}{\epsilon},t\right)} \Psi(y,t) ~\d y.
\end{align}

In the infinitesimal time interval, the smooth time-dependent functions  $\varphi_t$, $\chi_t$ in \eqn{acc-lagrangian} can be treated as constant functions. We plug the optimization Lagrangian \eqn{acc-lagrangian} into \eqn{infinitesimal}, and introduce the change of variable $y = x+\eta$. It follows that,

\begin{align}
    \Psi(x,t+\epsilon) = \frac{1}{A} \int^\infty_{-\infty} \exp{\frac{i}{\hbar}\frac{e^{-\varphi_t} \eta^2}{2 \epsilon}}  \exp{-\frac{i}{\hbar} \epsilon f\left(x+\frac{\eta}{2}, t\right)} \Psi(x+\eta,t) ~\d \eta.
\end{align}
Note that we absorb the $e^{\chi_t}$ coefficient into potential field $f(x)$ and write them simply as $f(x,t)$. Then, we expand the wave function $\Psi$ to first order in $\epsilon$ and second order in $\eta$. Note that the term $\epsilon f(x+\eta/2,t)$ is replaced by $\epsilon f(x,t)$ since the error term is of higher order than $\epsilon$. It turns out that
\begin{align}\label{eqn:expansion}
    \Psi(x,t) + \epsilon \frac{\partial \Psi}{\partial t} = \frac{1}{A} \int^\infty_{-\infty} \exp{\frac{i}{\hbar}\frac{e^{-\varphi_t} \eta^2}{2\epsilon}} \left[1 - \frac{i}{\hbar} \epsilon f(x,t)\right] \left[\Psi(x,t) + \eta \frac{\partial \Psi}{\partial x} + \frac{\eta^2}{2} \frac{\partial^2 \Psi}{\partial x^2}\right]~\d \eta.
\end{align}

On the left-hand side of \eqn{expansion}, the $O(\epsilon^0)$\footnote{We use $O(\epsilon^k)$ to indicate the correction terms in the infinitesimal expansion \eqn{expansion}. For example, $O(\epsilon^0)$ means constant term, $O(\epsilon^1)$ means the first-order correction term, etc.} term is $\Psi(x,t)$; meanwhile, the $O(\epsilon^0)$ term on the right-hand side is $\Psi(x,t)$ times the coefficient:
\begin{align}\label{eqn:zero-order-coeff}
    \frac{1}{A}\int^\infty_{-\infty} \exp{\frac{i}{\hbar}\frac{e^{-\varphi_t} \eta^2}{2\epsilon}}~\d \eta = \frac{1}{A}(2\pi i \hbar e^{\varphi_t}\epsilon)^{1/2}.
\end{align}
To match the $O(\epsilon^0)$ term on both sides of \eqn{expansion}, we must have the coefficient \eqn{zero-order-coeff} equal to $1$, i.e., $A = (2\pi i \hbar e^{\varphi_t}\epsilon)^{1/2}$.

Next, we match the higher order terms in \eqn{expansion} (up to $O(\epsilon^1)$ and $O(\eta^2)$). With some algebraic manipulation, we end up with

\begin{align}
    \label{eqn:higher-order-expansion}
    \epsilon \frac{\partial \Psi}{\partial t} = C_1 \frac{\partial \Psi}{\partial x} + C_2 \frac{\partial^2 \Psi}{\partial x^2} + C_3 f\Psi,
\end{align}
where the coefficients $C_1$, $C_2$, $C_3$ can be explicitly evaluated:
\begin{align}
    C_1 &= \frac{1}{A}\int_\R \eta \exp{ie^{-\varphi_t} \eta^2/2 \hbar \epsilon} ~\d \eta = 0,\\
    C_2 &= \frac{1}{A}\int_\R \frac{\eta^2}{2} \exp{ie^{-\varphi_t} \eta^2/2 \hbar \epsilon} ~\d \eta = \frac{1}{2} i \hbar \epsilon e^{\varphi_t},\\
    C_3 &= - \frac{i}{\hbar}\epsilon.
\end{align}

Substituting the coefficients $C_1$, $C_2$, $C_3$ to \eqn{higher-order-expansion}, we obtain the QHD dynamics described by the Schr\"odinger equation,

\begin{align}\label{eqn:schrodinger}
    i\hbar \frac{\partial \Psi}{\partial t} = \left[e^{\varphi_t}\left(-\frac{\hbar^2}{2} \frac{\partial^2 }{\partial x^2}\right) + e^{\chi_t}f(x) \right]\Psi,
\end{align}
which defines the QHD dynamics.

\vspace{4mm}In higher dimensions, the kinetic operator will be replaced by $-\frac{\hbar^2}{2}\nabla^2$. We set $\hbar = 1$, and the general QHD Hamiltonian operator is given by

\begin{align}\label{eqn:qhd_operator}
    \hat{H}(t) = e^{\varphi_t}\left(-\frac{1}{2}\nabla^2\right) + e^{\chi_t} f(x).
\end{align}

In the rest of the paper, we sometimes also consider the QHD with three time-dependent parameters,
\begin{align}\label{eqn:qhd_operator_convex}
    \hat{H}(t) = e^{\alpha_t - \gamma_t}\left(-\frac{1}{2}\nabla^2\right) + e^{\alpha_t+\beta_t+\gamma_t} f(x),
\end{align}
where the parameters $\alpha_t$, $\beta_t$, and $\gamma_t$ satisfy the \textit{ideal scaling condition}:
\begin{subequations}\label{eqn:ideal_scaling}
    \begin{align}
        \dot{\beta}_t \le e^{\alpha_t},\label{eqn:condition-a}\\
        \dot{\gamma}_t = e^{\alpha_t}.\label{eqn:condition-b}
    \end{align} 
\end{subequations}
This setting is closely related to classical accelerated gradient descent because it has the same time-dependent parameters as in the variational formulation in \cite{wibisono2016variational}; see \eqn{bregman_ham}. As we will show in \sec{nonconvex}, quantum and classical gradient descent have radically different behavior even with the same time-dependent parameters. 

The QHD Hamiltonian can be seen as a weighted sum of the kinetic and potential operator. The time-dependent functions $\varphi_t$ and $\chi_t$ contribute to the limiting behavior of this quantum system. If these functions are constant in time, the system Hamiltonian $\hat{H}(t)$ is time-independent and the system energy is conserved. In this case, the wave function is indefinitely oscillatory. This is not what we want: as the word ``descent'' suggests, we want to gradually drain out the kinetic energy from the system so that the quantum particle will eventually land still on the rock bottom of the potential landscape $f(x)$. To achieve this goal, we will consider $\varphi_t$ as a decreasing function and $\chi_t$ as an increasing function so that the potential energy will dominate in the long run.

\paragraph{Connections to Quantum Dynamical Descent (QDD).}
Verdon et. al. propose the \textit{Quantum Dynamical Descent} (QDD) algorithm~\cite{verdon2018universal}, which is formally analogous to our QHD algorithm.
Given the similarity in their formulation, QHD and QDD have little in common in their intuition, derivation, and application. QDD is devised for quantum parametric optimization, which is a special case of continuous optimization. We derive QHD from first principles by quantizing the Bregman-Lagrangian framework, while QDD is constructed by heuristics. We have a systematic theoretical study of QHD, including the convergence for convex and non-convex problems; the analysis of QDD is bound to the adiabatic approximation framework. Moreover, we conduct a large-scale empirical study of QHD based on analog implementation. QDD is implemented on digital quantum computers as like a variational quantum algorithm. To our best knowledge, QDD has not been implemented on any real-world quantum computers.

\subsection{Time dilation}\label{sec:time_dilation}

In this section, we show that QHD is closed under time dilation, i.e., time-dilated QHD evolution is also described by the QHD equation, but with different time-dependent parameters. This means the continuous-time QHD can converge at any speed along the same evolution path. This result generalizes \cite[Theorem 2.2]{wibisono2016variational}.

The quantum evolution $\Psi(x,t)$ for $t\in[0,T]$ forms a curve in the Hilbert space $L^2(\R^d)$. We introduce a smooth increasing function $\tau\colon[0,T] \to \R$ to represent the reparametrization of time. The reparametrized wave function is 
\begin{align}\label{eqn:reparametrized}
    \widetilde{\Psi}(x,t) := \Psi(x, \tau(t)),
\end{align}
which is the same curve in the Hilbert space $L^2(\R^d)$ but with a different speed.

\begin{proposition}[Time dilation]
\label{prop:time_dilation} 
If $\Psi(x,t)$ satisfies the Schr\"odinger equation $i\frac{\d}{\d t}\ket{\Psi_t} = H(t)\ket{\Psi_t}$ with the QHD Hamiltonian 
\begin{align}
    H(t) = e^{\varphi_t}\left(-\frac{1}{2}\nabla^2\right) + e^{\chi_t} f(x),
\end{align}
then the reparametrized wave function $\tilde{\Psi}(x,t)$ defined as in \eqn{reparametrized} satisfies the Schr\"odinger equation $i\frac{\d}{\d t}\ket{\widetilde{\Psi}_t} = \widetilde{H}(t)\ket{\widetilde{\Psi}_t}$ with the time-dilated QHD Hamiltonian 
\begin{align}\label{eqn:rescale-h}
    \widetilde{H}(t) = e^{\varphi_{\tau_t}+\log(\dot{\tau})}\left(-\frac{1}{2}\nabla^2\right) + e^{\chi_{\tau_t}+\log(\dot{\tau})} f(x).
\end{align}
\end{proposition}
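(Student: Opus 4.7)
The plan is to apply the chain rule to the reparametrization \eqn{reparametrized} and then match the resulting prefactors against the definition of $\widetilde{H}(t)$ in \eqn{rescale-h}. First I would differentiate $\widetilde{\Psi}(x,t)=\Psi(x,\tau(t))$ directly in $t$, treating the spatial argument as fixed. Since $\tau$ is smooth and strictly increasing, this yields
\begin{align*}
i\frac{\partial}{\partial t}\widetilde{\Psi}(x,t) \;=\; i\,\dot{\tau}(t)\,\frac{\partial \Psi}{\partial s}(x,s)\Big|_{s=\tau(t)}.
\end{align*}

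Next, I would invoke the hypothesis that $\Psi$ satisfies the QHD Schr\"odinger equation, evaluated at $s=\tau(t)$, to rewrite the inner time derivative as an application of the Hamiltonian:
\begin{align*}
i\frac{\partial \Psi}{\partial s}(x,s)\Big|_{s=\tau(t)} \;=\; H(\tau(t))\,\Psi(x,\tau(t)) \;=\; \left[e^{\varphi_{\tau(t)}}\!\left(-\tfrac{1}{2}\nabla^2\right) + e^{\chi_{\tau(t)}} f(x)\right]\widetilde{\Psi}(x,t),
\end{align*}
where I have used the definition of $\widetilde{\Psi}$ and the fact that $H$ acts purely on the spatial variable, so it commutes with the substitution $s\mapsto \tau(t)$.

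Combining the two displays, the factor $\dot{\tau}(t)$ multiplies the Hamiltonian $H(\tau(t))$ on the right. The final step is to absorb this scalar into the exponential prefactors using $\dot{\tau}(t)=e^{\log \dot{\tau}(t)}$, giving
\begin{align*}
\dot{\tau}(t)\,H(\tau(t)) \;=\; e^{\varphi_{\tau(t)}+\log\dot{\tau}(t)}\!\left(-\tfrac{1}{2}\nabla^2\right) + e^{\chi_{\tau(t)}+\log\dot{\tau}(t)} f(x) \;=\; \widetilde{H}(t),
\end{align*}
which is exactly the stated time-dilated Hamiltonian.

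This argument is essentially a one-line chain rule computation, so there is no real obstacle; the only mild subtlety to flag is that $\tau$ must be smooth and strictly increasing (so $\dot{\tau}>0$ and $\log\dot{\tau}$ is well-defined), which is already assumed in the statement. No regularity beyond what is needed to make sense of the original Schr\"odinger equation is required, and the proof carries over verbatim from one to $d$ dimensions because the Hamiltonian's spatial part is untouched by the time reparametrization.
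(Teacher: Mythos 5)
Your proof is correct and follows exactly the same route as the paper's: chain rule on $\Psi(x,\tau(t))$, substitution of the original Schr\"odinger equation at $s=\tau(t)$, and absorption of the scalar $\dot{\tau}(t)$ into the exponential prefactors via $\dot{\tau}=e^{\log\dot{\tau}}$. Your additional remarks on $\dot{\tau}>0$ and dimension-independence are fine but not needed beyond what the paper assumes.
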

\begin{proof}
    \begin{align}
        i \frac{\d}{\d t}\widetilde{\Psi}(x,t) &= i\frac{\d}{\d t}\Psi(x,\tau(t)) = i\frac{\d}{\d \tau}\Psi(x,\tau(t))\frac{\d \tau}{\d t}\\
        &= H(\tau(t))\Psi(x,\tau(t))\dot{\tau}(t)\\
        &= \widetilde{H}(t)\widetilde{\Psi}(x,t),\label{eqn:rescale-last-step}
    \end{align}
    where $\widetilde{H}(t)$ in \eqn{rescale-last-step} is the same as \eqn{rescale-h}.
\end{proof}

\begin{corollary}
If we choose $\varphi_t = \alpha_t - \gamma_t$, $\chi_t = \alpha_t + \beta_t + \gamma_t$ (as in \eqn{qhd_operator_convex}), then the rescaled QHD Hamiltonian is also described by three parameters:
\begin{subequations}\label{eqn:param_dilation}
        \begin{equation}
            \tilde{\alpha}_t = \alpha_{\tau(t)} + \log(\dot{\tau}(t))
        \end{equation}
        \begin{equation}
            \tilde{\beta}_t = \beta_{\tau(t)}
        \end{equation}
        \begin{equation}
            \tilde{\gamma}_t = \gamma_{\tau(t)}.
        \end{equation}
      \end{subequations}
Furthermore, $\alpha, \beta, \gamma$ satisfy the ideal scaling condition \eqn{ideal_scaling} if and only if $\tilde{\alpha}, \tilde{\beta}, \tilde{\gamma}$ do.
\end{corollary}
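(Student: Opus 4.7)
The plan is to deduce the corollary directly from \prop{time_dilation}, then handle the ideal scaling equivalence with a one-line chain rule. First I would plug the substitutions $\varphi_t = \alpha_t - \gamma_t$ and $\chi_t = \alpha_t + \beta_t + \gamma_t$ into the conclusion of \prop{time_dilation}, which gives $\tilde{\varphi}_t = \alpha_{\tau(t)} - \gamma_{\tau(t)} + \log(\dot{\tau}(t))$ and $\tilde{\chi}_t = \alpha_{\tau(t)} + \beta_{\tau(t)} + \gamma_{\tau(t)} + \log(\dot{\tau}(t))$. I would then verify by inspection that the proposed $\tilde{\alpha}_t = \alpha_{\tau(t)} + \log(\dot{\tau}(t))$, $\tilde{\beta}_t = \beta_{\tau(t)}$, $\tilde{\gamma}_t = \gamma_{\tau(t)}$ reproduce these through $\tilde{\varphi}_t = \tilde{\alpha}_t - \tilde{\gamma}_t$ and $\tilde{\chi}_t = \tilde{\alpha}_t + \tilde{\beta}_t + \tilde{\gamma}_t$. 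Note that the $\log(\dot{\tau})$ correction is absorbed entirely into $\tilde{\alpha}$, which is why $\tilde{\beta}$ and $\tilde{\gamma}$ look like plain pullbacks.

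For the ideal scaling equivalence, I would differentiate the defining relations and apply the chain rule:
\begin{align*}
  \dot{\tilde{\beta}}_t = \dot{\beta}_{\tau(t)}\,\dot{\tau}(t), \qquad \dot{\tilde{\gamma}}_t = \dot{\gamma}_{\tau(t)}\,\dot{\tau}(t), \qquad e^{\tilde{\alpha}_t} = e^{\alpha_{\tau(t)}}\,\dot{\tau}(t).
\end{align*}
Since $\tau$ is smooth and increasing (so $\dot{\tau}(t) > 0$), dividing through by $\dot{\tau}(t)$ shows that the condition $\dot{\tilde{\gamma}}_t = e^{\tilde{\alpha}_t}$ is equivalent to $\dot{\gamma}_s = e^{\alpha_s}$ at $s = \tau(t)$, and similarly $\dot{\tilde{\beta}}_t \le e^{\tilde{\alpha}_t}$ is equivalent to $\dot{\beta}_s \le e^{\alpha_s}$ at $s = \tau(t)$. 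As $\tau$ is a bijection from $[0,T]$ onto its image, the two families of inequalities hold pointwise for all $t$ if and only if they hold for all $s$ in the image, which gives the stated ``if and only if''.

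There isn't really a hard step here; the whole corollary is a bookkeeping consequence of \prop{time_dilation} plus the chain rule. The only mild subtlety is making sure the extra $\log(\dot{\tau}(t))$ term produced by reparametrization is attributed entirely to $\tilde{\alpha}$ and not split between $\tilde{\alpha}$ and $\tilde{\gamma}$; this is forced by matching both $\tilde{\varphi}$ and $\tilde{\chi}$ simultaneously, since their sum $\tilde{\varphi}_t + \tilde{\chi}_t$ changes by $2\log(\dot{\tau})$ while their difference changes by $0$, which pins down the redistribution uniquely.
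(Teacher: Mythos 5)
Your proposal is correct and takes the same route as the paper: the paper's entire proof is the one-line instruction to plug $\varphi_t = \alpha_t - \gamma_t$ and $\chi_t = \alpha_t + \beta_t + \gamma_t$ into the dilated Hamiltonian \eqn{rescale-h}, which is exactly your first step. You simply carry out the check the paper leaves implicit, including the chain-rule verification of the ideal scaling equivalence (where your observation that $\dot{\tau}(t) > 0$ lets the factor cancel) and the remark about where the $\log(\dot{\tau})$ term must be absorbed --- all consistent with, and somewhat more complete than, the paper's own argument.
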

\begin{proof}
    One can check the corollary by plugging $\varphi_t = \alpha_t - \gamma_t$ and $\chi_t = \alpha_t + \beta_t + \gamma_t$ into \eqn{rescale-h}.
\end{proof}

\begin{remark}
Although we show that QHD convergence speed can be arbitrary fast by dilating the time-dependent functions in the QHD Hamiltonian, it does not mean we have a quantum algorithm that converges at arbitrary fast rate. Too fast time-dependent functions in QHD can make the dynamics unstable in time discretization (in digital implementation) or analog emulation (in analog implementation), thus unable to solve the optimization problem.
\end{remark}

\section{Convergence of QHD}\label{sec:convergence}

In this section, we discuss the convergence of QHD for optimization problems. In \sec{convergence-convex}, we show that QHD has fast convergence in convex optimization problems. This quantum convergence can be seen as a generalization of the classical convergence rate of accelerated gradient descent algorithms \cite{wibisono2016variational}. Besides the convex case, we also prove a global convergence result for QHD under mild assumptions of the objective $f$, see \sec{convergence-nonconvex}. To our best knowledge, the global convergence behavior is not observed in the classical counterpart of QHD. Both convergence results in this section are formulated in continuous time. 

\paragraph{Notations.}
We denote the position and momentum operators as (choosing $\hbar = 1$):
\begin{align}
    \hat{x} = x,~\hat{p} = -i\nabla.
\end{align}
Given a quantum observable $\hat{O}$, its expectation value at time $t$ with respect to the quantum wave function $\ket{\Psi_t}$ is computed by
\begin{align}\label{eqn:expect_O}
    \<\hat{O}\>_t \coloneqq \braket{\Psi_t}{\hat{O}\Big|\Psi_t} = \int \overline{\Psi(t,x)}\hat{O}\Psi(t,x)~\d x. 
\end{align}
In particular, when $\hat{O} = f(x)$ is the objective function, we define
\begin{align}
    \mathbb{E}[f]_{\sim \Psi_t} \coloneqq \<f\>_t
\end{align}
as the (average) loss function at time $t$. 

\subsection{Fast convergence in the convex case}\label{sec:convergence-convex}
To recover the convergence rate shown in~\cite{wibisono2016variational} for QHD, we consider the QHD Hamiltonian with three parameters \eqn{qhd_operator_convex} and these parameters satisfy the ideal scaling condition \eqn{ideal_scaling}.

\begin{restatable}{theorem}{convex-convergence}\label{thm:convex_convergence}
    Assume that $f$ is a continuously differentiable convex function and the ideal scaling condition \eqn{ideal_scaling} holds. Let $x^*$ be the unique local minimizer of $f$. Then, for any smooth initial wave function $\ket{\Psi_0}$, the solution $\ket{\Psi_t}$ to the Schr\"odinger equation $i\ket{\Psi_t} = \hat{H}(t)\ket{\Psi_t}$ with Hamiltonian \eqn{qhd_operator_convex} satisfies
        \begin{align}
            \mathbb{E}[f]_{\sim \Psi_t} - f(x^*) \le O(e^{-\beta_t}).
        \end{align}
\end{restatable}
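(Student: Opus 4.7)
The strategy is to quantize the classical Lyapunov argument of \cite{wibisono2016variational}. First I would introduce a quantum Lyapunov functional
\begin{equation*}
    \mathcal{W}_t \;:=\; \tfrac{1}{2}\langle \hat{Z}_t^2\rangle_t \;+\; e^{\beta_t}\bigl(\mathbb{E}[f]_{\sim \Psi_t} - f(x^*)\bigr),
\end{equation*}
the natural quantum analog of the classical $\mathcal{E}_t$, where $\hat{Z}_t := (\hat{x}-x^*) + e^{-\gamma_t}\dot{\hat{x}}$ and the Heisenberg velocity operator, read off from the QHD Hamiltonian \eqn{qhd_operator_convex}, is $\dot{\hat{x}} := i[\hat{H}(t),\hat{x}] = e^{\alpha_t-\gamma_t}\hat{p}$. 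Because all coefficients are real, $\hat{Z}_t$ is Hermitian; the only operator-ordering subtlety in $\hat{Z}_t^2$ is the cross term $\{\hat{x}-x^*,\hat{p}\}$, which is kept in symmetric form, and the residual $c$-number arising from $[\hat{x},\hat{p}]=i$ contributes only a smooth time-dependent scalar that will cancel against the explicit time-dependence below.

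Next I would differentiate $\mathcal{W}_t$ using the quantum Ehrenfest relation $\tfrac{d}{dt}\langle\hat{O}\rangle_t = i\langle [\hat{H}(t),\hat{O}]\rangle_t + \langle\partial_t \hat{O}\rangle_t$. The explicit time dependence enters through $\alpha_t,\gamma_t$ in $\hat{Z}_t$ and through $\beta_t$ in the potential term, while the dynamical contributions are driven by $[\hat{p}^2,\hat{x}] = -2i\hat{p}$ and $[f(\hat{x}),\hat{p}] = i\nabla f(\hat{x})$. Collecting the resulting terms and applying the ideal scaling identity $\dot{\gamma}_t = e^{\alpha_t}$ from \eqn{condition-b}, the kinetic-energy and position-momentum cross contributions telescope exactly as in the classical Bregman-Lagrangian calculation, leaving
\begin{equation*}
    \frac{d}{dt}\mathcal{W}_t \;=\; e^{\alpha_t+\beta_t}\Bigl(\dot{\beta}_t e^{-\alpha_t}\bigl(\mathbb{E}[f]_{\sim \Psi_t} - f(x^*)\bigr) \;-\; \bigl\langle \nabla f(\hat{x})\cdot(\hat{x}-x^*)\bigr\rangle_t\Bigr).
\end{equation*}

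Convexity of $f$ then gives the pointwise gradient inequality $\nabla f(x)\cdot(x-x^*)\ge f(x)-f(x^*)$; integrating against the density $|\Psi_t(x)|^2$ yields $\langle \nabla f(\hat{x})\cdot(\hat{x}-x^*)\rangle_t \ge \mathbb{E}[f]_{\sim \Psi_t}-f(x^*)$, and combining with $\dot{\beta}_t\le e^{\alpha_t}$ from \eqn{condition-a} forces $\tfrac{d}{dt}\mathcal{W}_t\le 0$. Hence $\mathcal{W}_t\le \mathcal{W}_0$, and dropping the non-negative kinetic piece $\tfrac{1}{2}\langle\hat{Z}_t^2\rangle_t\ge 0$ produces $\mathbb{E}[f]_{\sim \Psi_t} - f(x^*) \le \mathcal{W}_0\, e^{-\beta_t}$, which is the desired conclusion.

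The hard part is the operator-ordering bookkeeping in step two: unlike the classical case where $X$ and $\dot{X}$ are commuting scalars, the quantum operators satisfy $[\hat{x},\hat{p}]=i$, so squaring $\hat{Z}_t$ and applying the product rule under $\partial_t$ generates extra $c$-number pieces that must be either absorbed by the explicit derivative of $\gamma_t$ via the ideal scaling condition or shown to integrate against $|\Psi_t|^2$ to zero. A secondary concern is regularity: the Lyapunov requires the moments $\langle\hat{x}^2\rangle_t$, $\langle\hat{p}^2\rangle_t$, and $\langle\{\hat{x},\hat{p}\}\rangle_t$ to remain finite along the evolution, which should propagate from a smooth initial $\Psi_0$ via a standard a priori energy estimate for the QHD Schr\"odinger equation but deserves explicit verification given that the theorem only assumes $\Psi_0$ is ``smooth''.
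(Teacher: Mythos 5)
Your overall strategy is exactly the paper's: construct a quantum analog of the classical Lyapunov functional, differentiate it with the Ehrenfest relation (the paper's \lem{der_expect}), simplify with commutator identities (\lem{commutation}), and close with convexity plus the ideal scaling condition (\prop{Wdot}). However, your Lyapunov operator is the wrong one, and with it the central cancellation you assert does not happen. You take $\hat{Z}_t = (\hat{x}-x^*) + e^{-\gamma_t}\dot{\hat{x}}$ with the Heisenberg velocity $\dot{\hat{x}} = e^{\alpha_t-\gamma_t}\hat{p}$, i.e.\ $\hat{Z}_t = (\hat{x}-x^*) + e^{\alpha_t-2\gamma_t}\hat{p}$. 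The operator that makes the argument work is the paper's $\hat{J} = \hat{x} + e^{-\gamma_t}\hat{p}$, which weights the \emph{momentum} by $e^{-\gamma_t}$, equivalently the velocity by $e^{-\alpha_t}$ (this is Wibisono et al.'s Lyapunov function, since $e^{-\alpha_t}\dot{\hat{x}} = e^{-\gamma_t}\hat{p}$). The formula for $\mathcal{E}_t$ displayed in \sec{connections}, which has $e^{-\gamma_t}\dot{X}_t$, appears to be a typo — it is inconsistent both with Wibisono et al.\ and with the paper's own quantum proof — and your definition propagates that typo; the two choices agree only when $\alpha_t \equiv \gamma_t$.

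To see the failure concretely, set $x^*=0$, $f(x^*)=0$, write $\hat{Z}_t = \hat{x} + \mu_t\hat{p}$, and expand $\hat{Z}_t^2/2 = \hat{x}^2/2 + \mu_t^2\hat{p}^2/2 + \mu_t(\hat{x}\hat{p}-i/2)$. The Ehrenfest relation and the commutators $i[\hat{p}^2,\hat{x}^2]=4\hat{x}\hat{p}-2i$, $i[\hat{p}^2,\hat{x}\hat{p}]=2\hat{p}^2$, $i[f,\hat{x}\hat{p}]=-\hat{x}\cdot\nabla f$, $i[\hat{p}^2,f]=\hat{p}\nabla f+\nabla f\,\hat{p}$ give, for a general weight $\mu_t$,
\begin{align*}
\frac{\d}{\d t}\mathcal{W}_t ={}& \left(\dot{\mu}_t + e^{\alpha_t-\gamma_t}\right)\Big[\mu_t\<\hat{p}^2\>_t + \<\hat{x}\hat{p}-i/2\>_t\Big]
+ \frac{1}{2}\left(e^{\alpha_t+\beta_t-\gamma_t} - e^{\alpha_t+\beta_t+\gamma_t}\mu_t^2\right)\<\hat{p}\nabla f+\nabla f\,\hat{p}\>_t\\
&+ \dot{\beta}_te^{\beta_t}\<f\>_t - e^{\alpha_t+\beta_t+\gamma_t}\mu_t\<\hat{x}\cdot\nabla f\>_t.
\end{align*}
With the paper's choice $\mu_t = e^{-\gamma_t}$, ideal scaling $\dot{\gamma}_t=e^{\alpha_t}$ annihilates the first bracket, the second coefficient vanishes identically, and the last term has coefficient $e^{\alpha_t+\beta_t}$, yielding exactly the expression you wrote down. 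With your choice $\mu_t = e^{\alpha_t-2\gamma_t}$, the first bracket vanishes only if $e^{\gamma_t} = 2e^{\alpha_t}-\dot{\alpha}_t$ and the second only if $\alpha_t\equiv\gamma_t$; neither follows from \eqn{ideal_scaling}, and both fail, e.g., for the Nesterov schedule $\alpha_t=\log(2/t)$, $\beta_t=\gamma_t=2\log t$. The leftover terms involve $\<\hat{x}\hat{p}-i/2\>_t$ and $\<\hat{p}\nabla f+\nabla f\,\hat{p}\>_t$, which are sign-indefinite, so monotonicity of your $\mathcal{W}_t$ — and hence the theorem — cannot be salvaged from it. The repair is a one-line change: take $\hat{Z}_t = (\hat{x}-x^*) + e^{-\alpha_t}\dot{\hat{x}} = (\hat{x}-x^*) + e^{-\gamma_t}\hat{p}$; then your remaining steps (the pointwise convexity inequality, $\dot{\beta}_t\le e^{\alpha_t}$, and dropping the nonnegative term $\<\hat{Z}_t^2/2\>_t$) go through verbatim and coincide with the paper's proof.
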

\begin{proof}
    Without loss of generality, we may assume $x^* = 0$ and $f(x^*)=0$. It suffices to prove that 
    \begin{align}
        \mathbb{E}[f]_{\sim \Psi_t}\le O(e^{-\beta_t}).
    \end{align}
    We take a Lyapunov function approach to prove \thm{convex_convergence}. We construct the following quantum Lyapunov function:
    \begin{align}\label{eqn:quantum-energy-functional}
        \mathcal{W}(t)= \<\hat{J}^2/2\>_t +  e^{\beta_t}\<f\>_t,
    \end{align}
    in which we introduce the new operator $\hat{J}\coloneqq e^{-\gamma_t} \hat{p} + \hat{x}$. $\hat{J}$ is a legal quantum observable because both $\hat{p}$ and $\hat{x}$ are Hermitian operators. In \prop{Wdot}, we show that $\mathcal{W}(t) \le \mathcal{W}(0)$ for any $t \ge 0$. Meanwhile, notice that $\hat{J}^2/2$ is a positive-definite operator, so $\<\hat{J}^2/2\>_t \ge 0$, and we have:
    \begin{align}
        e^{\beta_t}\mathbb{E}[f]_{\sim\Psi_t} \le \mathcal{W}(t) \le \mathcal{W}(0),
    \end{align}
    or equivalently, 
    \begin{align}
        \mathbb{E}[f]_{\sim\Psi_t} \le \mathcal{W}(0)e^{-\beta_t} \le O(e^{-\beta_t}).
    \end{align}
\end{proof}

\begin{lemma}\label{lem:der_expect}
    Given a (time-dependent) quantum observable $\hat{O}(t)$ and let $\ket{\Psi_t}$ be the solution of the Schr\"odinger equation $i \ket{\Psi_t} = \hat{H}(t)\ket{\Psi_t}$, we have
    \begin{align}
        \frac{\d}{\d t}\<\hat{O}(t)\>_t = \<\frac{\d}{\d t}\hat{O}(t)\>_t + \<i[\hat{H}(t), \hat{O}(t)]\>_t.
    \end{align}
\end{lemma}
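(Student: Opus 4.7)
The plan is to prove this lemma by a direct product-rule computation, using the Schr\"odinger equation (and its adjoint) to eliminate the time derivatives of the bra and ket.

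First, I would write the expectation value explicitly as
\begin{align*}
\langle \hat{O}(t)\rangle_t = \langle \Psi_t \mid \hat{O}(t) \mid \Psi_t\rangle,
\end{align*}
and apply the product rule to obtain three contributions: one in which the time derivative falls on the bra $\langle \Psi_t|$, one in which it falls on the operator $\hat{O}(t)$, and one in which it falls on the ket $|\Psi_t\rangle$. The middle term is manifestly $\langle \tfrac{d}{dt}\hat{O}(t)\rangle_t$, so the work is to identify the other two with the commutator term.

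Next, from the Schr\"odinger equation $i\tfrac{d}{dt}|\Psi_t\rangle = \hat{H}(t)|\Psi_t\rangle$ I get $\tfrac{d}{dt}|\Psi_t\rangle = -i\hat{H}(t)|\Psi_t\rangle$. Taking the Hermitian adjoint, and using that $\hat{H}(t)$ is self-adjoint (so that it generates unitary evolution), I obtain $\tfrac{d}{dt}\langle \Psi_t| = i\langle \Psi_t|\hat{H}(t)$. Substituting these into the two boundary terms yields
\begin{align*}
\tfrac{d}{dt}\langle \Psi_t|\cdot \hat{O}(t)|\Psi_t\rangle + \langle \Psi_t|\hat{O}(t)\cdot\tfrac{d}{dt}|\Psi_t\rangle = i\langle \Psi_t|\hat{H}(t)\hat{O}(t)|\Psi_t\rangle - i\langle \Psi_t|\hat{O}(t)\hat{H}(t)|\Psi_t\rangle,
\end{align*}
which is exactly $\langle i[\hat{H}(t),\hat{O}(t)]\rangle_t$. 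Adding back the middle term gives the claimed identity.

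There is essentially no hard step here; the only thing to be careful about is the sign produced when taking the adjoint of the Schr\"odinger equation (the $i$ flips sign, which is what ultimately produces the commutator rather than an anticommutator) and the implicit assumption that $|\Psi_t\rangle$ lies in the domain of $\hat{H}(t)$ and $\hat{O}(t)$ so that all formal manipulations are justified. Since the wave functions of interest to us later in the paper are smooth and rapidly decaying, this regularity is not an obstacle in practice, and one could note this in passing without a detailed functional-analytic digression.
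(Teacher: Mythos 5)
Your proposal is correct and follows essentially the same route as the paper's proof: both apply the product rule to $\braket{\Psi_t}{\hat{O}(t)\Big|\Psi_t}$, substitute the Schr\"odinger equation and its adjoint (using self-adjointness of $\hat{H}(t)$) into the bra and ket derivatives, and recombine the two resulting terms into $\<i[\hat{H}(t),\hat{O}(t)]\>_t$. Your added remark about domain/regularity issues is a reasonable point the paper leaves implicit, but it does not change the argument.
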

\begin{proof}
    \begin{align}
        \frac{\d}{\d t}\<\hat{O}(t)\>_t &= \frac{\d}{\d t}\braket{\Psi_t}{\hat{O}(t)\Big|\Psi_t}\\
        &= \braket{\dot{\Psi}_t}{\hat{O}(t)\Big|\Psi_t} + \braket{\Psi_t}{\hat{O}(t)\Big|\dot{\Psi}_t} + \braket{\dot{\Psi}_t}{\frac{\d}{\d t}\hat{O}(t)\Big|\Psi_t}\\
        &= i \braket{\Psi_t}{\hat{H}(t)\hat{O}(t)\Big|\Psi_t} - i \braket{\Psi_t}{\hat{O}(t)\hat{H}(t)\Big|\Psi_t} + \<\frac{\d}{\d t}\hat{O}(t)\>_t\\
        &= \braket{\Psi_t}{i[\hat{H}(t),\hat{O}(t)]\Big|\Psi_t} + \<\frac{\d}{\d t}\hat{O}(t)\>_t.
    \end{align}
\end{proof}

\begin{lemma}[Commutation relations]\label{lem:commutation}
    Let $\hat{x}$, $\hat{p}$ be the position and momentum operators, we have 
    \begin{subequations}
    \begin{align}
        i[\hat{p}^2, \hat{x}^2] &= 4\hat{x}\hat{p} - 2i,\label{eqn:commutation_a}\\
        i[\hat{p}^2, \hat{x}\hat{p}] &= 2\hat{p}^2,\label{eqn:commutation_b}\\
        i[f, \hat{x}\hat{p}] &= -\hat{x}\cdot \nabla f.\label{eqn:commutation_c} 
    \end{align}
    \end{subequations}
\end{lemma}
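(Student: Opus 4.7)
The plan is to derive all three identities directly from the canonical commutation relation $[\hat{x},\hat{p}] = i$ (with $\hbar = 1$) together with the Leibniz-type identities $[A,BC] = [A,B]C + B[A,C]$ and $[AB,C] = A[B,C] + [A,C]B$. As a warm-up that will feed into the first two parts, I first establish
\begin{align}
    [\hat{p}^2,\hat{x}] = \hat{p}[\hat{p},\hat{x}] + [\hat{p},\hat{x}]\hat{p} = -2i\hat{p},
\end{align}
using $[\hat{p},\hat{x}] = -i$.

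For \eqn{commutation_a}, I apply the Leibniz rule to get $[\hat{p}^2,\hat{x}^2] = \hat{x}[\hat{p}^2,\hat{x}] + [\hat{p}^2,\hat{x}]\hat{x} = -2i(\hat{x}\hat{p} + \hat{p}\hat{x})$. Using $\hat{p}\hat{x} = \hat{x}\hat{p} - i$, this becomes $-2i(2\hat{x}\hat{p} - i) = -4i\hat{x}\hat{p} - 2$, and multiplying by $i$ gives $4\hat{x}\hat{p} - 2i$. For \eqn{commutation_b}, the computation is even cleaner: $[\hat{p}^2,\hat{x}\hat{p}] = [\hat{p}^2,\hat{x}]\hat{p} + \hat{x}[\hat{p}^2,\hat{p}] = -2i\hat{p}^2 + 0$, so $i[\hat{p}^2,\hat{x}\hat{p}] = 2\hat{p}^2$.

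For \eqn{commutation_c}, the key subroutine is the identity $[f,\hat{p}] = i\nabla f$, which I obtain by acting on a test function $\psi$: $\hat{p}(f\psi) - f\hat{p}\psi = -i(\nabla f)\psi$, hence $[\hat{p},f] = -i\nabla f$ and so $[f,\hat{p}] = i\nabla f$. Since $f$ and $\hat{x}$ both act by multiplication and therefore commute, the Leibniz rule gives $[f,\hat{x}\hat{p}] = [f,\hat{x}]\hat{p} + \hat{x}[f,\hat{p}] = i\hat{x}\cdot\nabla f$, and multiplying by $i$ yields $-\hat{x}\cdot\nabla f$.

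These calculations are routine and pose no serious obstacle; the only place where some care is warranted is the multi-dimensional reading of \eqn{commutation_c}, where $\hat{x}\cdot\nabla f$ should be parsed as $\sum_j \hat{x}_j\,\partial_j f$ and one invokes the componentwise relations $[\hat{x}_j,\hat{p}_k] = i\delta_{jk}$. The derivation goes through termwise without requiring any new ideas, so the main task is simply careful bookkeeping of the factors of $i$.
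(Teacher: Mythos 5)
Your proposal is correct and follows essentially the same route as the paper: both arguments reduce everything to the canonical commutation relation $[\hat{x},\hat{p}]=i$, with a test-function computation handling the commutator involving $f$. The only difference is organizational — you route the algebra through the Leibniz rules and the intermediate identities $[\hat{p}^2,\hat{x}]=-2i\hat{p}$ and $[f,\hat{p}]=i\nabla f$, whereas the paper expands the products term by term using $\hat{p}\hat{x}=\hat{x}\hat{p}-i$; your bookkeeping of the factors of $i$ checks out in all three identities.
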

\begin{proof}
    We note that $[\hat{x},\hat{p}] = i$, or equivalently, $\hat{p}\hat{x} = \hat{x}\hat{p} - i$. To show \eqn{commutation_a}, we compute:
    \begin{align*}
        i[\hat{p}^2, \hat{x}^2] &= i \left(\hat{p}^2 \hat{x}^2 - \hat{x}^2\hat{p}^2\right) =i \left(\hat{p}(\hat{x}\hat{p}-i)\hat{x} - \hat{x}^2\hat{p}^2\right)\\
        &= i\left(\hat{p}\hat{x}(\hat{x}\hat{p}-i) - i\hat{p}\hat{x} - \hat{x}^2\hat{p}^2\right)\\
        &= i\left(\hat{p}\hat{x}^2\hat{p} - 2i\hat{p}\hat{x} - \hat{x}^2\hat{p}^2\right)\\
        &= i\left((\hat{x}\hat{p}-i)\hat{x}\hat{p} - 2i\hat{p}\hat{x} - \hat{x}^2\hat{p}^2\right)\\
        &= i\left(\hat{x}(\hat{x}\hat{p}-i)\hat{p} - i \hat{x}\hat{p} -2i\hat{p}\hat{x} - \hat{x}^2\hat{p}^2\right)\\
        &= 2\hat{x}\hat{p} + 2\hat{p}\hat{x} = 4\hat{x}\hat{p} - 2i.
    \end{align*}
    
    Similarly, one can exploit the commutation relation of $\hat{x}$ and $\hat{p}$ to show \eqn{commutation_b}. To show \eqn{commutation_c}, we introduce a smooth $L^2$ test function $\varphi$:
    \begin{align*}
        i[f, \hat{x}\hat{p}]\varphi &= i\left(f\hat{x}(-i\nabla) \varphi - \hat{x}(-i\nabla)(f\varphi)\right)\\
        &= f\hat{x}\nabla \varphi - \hat{x}\left(\nabla f \varphi + f \nabla \varphi\right)\\
        &= - \hat{x}\nabla f \varphi,
    \end{align*}
    so it follows that $i[f, \hat{x}\hat{p}] = - \hat{x}\nabla f$.
\end{proof}

\begin{proposition}\label{prop:Wdot}
    The function $\mathcal{W}(t)$ is non-increasing in time, i.e., $\mathcal{W}(t) \le \mathcal{W}(0)$ for any $t \ge 0$. 
\end{proposition}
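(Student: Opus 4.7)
The plan is to differentiate $\mathcal{W}(t)$ directly using \lem{der_expect} and show that all surviving terms assemble into a manifestly non-positive expression. First I would expand
\begin{align*}
\hat{J}^2 = e^{-2\gamma_t}\hat{p}^2 + e^{-\gamma_t}(\hat{x}\hat{p}+\hat{p}\hat{x}) + \hat{x}^2
\end{align*}
and differentiate in $t$; since only the $\gamma_t$-dependent prefactors carry time dependence, the explicit piece of $\frac{\d}{\d t}\<\hat{J}^2/2\>_t$ contributes $-\dot{\gamma}_t e^{-2\gamma_t}\<\hat{p}^2\>_t - \tfrac{1}{2}\dot{\gamma}_t e^{-\gamma_t}\<\hat{x}\hat{p}+\hat{p}\hat{x}\>_t$. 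For the commutator piece, I would split $\hat{H} = \tfrac{1}{2}e^{\alpha_t-\gamma_t}\hat{p}^2 + e^{\alpha_t+\beta_t+\gamma_t}f$ and evaluate $i[\hat{H},\hat{J}^2]$ using \lem{commutation}, together with its immediate consequences $i[\hat{p}^2,\hat{x}\hat{p}+\hat{p}\hat{x}] = 4\hat{p}^2$ and $i[f,\hat{x}\hat{p}+\hat{p}\hat{x}] = -2\hat{x}\cdot\nabla f$, both of which follow by linearity from the three identities already proved in that lemma.

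For the second summand $e^{\beta_t}\<f\>_t$, applying \lem{der_expect} produces an explicit contribution $\dot{\beta}_t e^{\beta_t}\<f\>_t$ together with a commutator contribution $e^{\beta_t}\<i[\hat{H},f]\>_t = \tfrac{1}{2}e^{\alpha_t+\beta_t-\gamma_t}\<i[\hat{p}^2,f]\>_t$, which is engineered to cancel exactly the $\<i[f,\hat{p}^2]\>_t$ piece generated by $\<i[\hat{H},\hat{J}^2]\>_t$. The crucial algebraic observation is that, under the ideal-scaling condition $\dot{\gamma}_t = e^{\alpha_t}$, the $\<\hat{p}^2\>_t$ and $\<\hat{x}\hat{p}+\hat{p}\hat{x}\>_t$ contributions from the explicit time derivative also cancel their counterparts produced by the commutators with the kinetic part of $\hat{H}$. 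After collecting everything, one is left with
\begin{align*}
\frac{\d}{\d t}\mathcal{W}(t) = e^{\beta_t}\bigl(\dot{\beta}_t\<f\>_t - e^{\alpha_t}\<\hat{x}\cdot\nabla f\>_t\bigr).
\end{align*}

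The final step is to invoke convexity. Since we may assume $x^{\ast} = 0$ and $f(x^{\ast}) = 0$, the function $f$ is non-negative, and the subgradient inequality $f(0) \ge f(x) + \nabla f(x)\cdot(0-x)$ gives the pointwise bound $f(x) \le x\cdot\nabla f(x)$, so $0 \le \<f\>_t \le \<\hat{x}\cdot\nabla f\>_t$. Combining this with the remaining ideal-scaling inequality $\dot{\beta}_t \le e^{\alpha_t}$ and treating the cases $\dot{\beta}_t \le 0$ and $0 < \dot{\beta}_t \le e^{\alpha_t}$ separately, one obtains $\frac{\d}{\d t}\mathcal{W}(t) \le 0$, which yields the claimed monotonicity $\mathcal{W}(t) \le \mathcal{W}(0)$. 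The hardest part of this plan will not be any single conceptual step but the bookkeeping: verifying that the cross terms coming from the non-Hermitian products $\hat{x}\hat{p}$ and $\hat{p}\cdot\nabla f$ combine in precisely the right way for the cancellations to occur, and that the residual expression is manifestly real despite the appearance of pure imaginary constants such as $-i e^{\alpha_t-\gamma_t}$ at intermediate steps.
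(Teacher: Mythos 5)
Your proposal is correct and follows essentially the same route as the paper's proof: differentiate $\mathcal{W}(t)$ via \lem{der_expect}, cancel the kinetic-part contributions using $\dot{\gamma}_t = e^{\alpha_t}$ and the $[f,\hat{p}^2]$ cross terms, and reduce to $\dot{\beta}_t e^{\beta_t}\<f\>_t - e^{\alpha_t+\beta_t}\<\hat{x}\cdot\nabla f\>_t \le 0$ via $\dot{\beta}_t \le e^{\alpha_t}$, $f \ge 0$, and the convexity bound $f(x) \le x\cdot\nabla f(x)$. Your only deviations are cosmetic: you write the symmetrized operator $\hat{x}\hat{p}+\hat{p}\hat{x}$ where the paper writes $2(\hat{x}\hat{p}-i/2)$ (these coincide since $\hat{p}\hat{x}=\hat{x}\hat{p}-i$), and your explicit case split on the sign of $\dot{\beta}_t$ makes slightly more careful a step the paper passes over implicitly.
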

\begin{proof}
    By \lem{der_expect}, we have
    \begin{align}
        \frac{\d}{\d t}\mathcal{W}(t) &= \frac{\d}{\d t}\<e^{-2\gamma_t}\hat{p}^2/2 + \hat{x}^2/2 + e^{-\gamma_t}(\hat{x}\hat{p}-i/2) + e^{\beta_t}f\>_t\\
        &= \<-\dot{\gamma}_t e^{-2\gamma_t} (\hat{p}^2) - \dot{\gamma}_te^{-\gamma_t}(\hat{x}\hat{p}-i/2) + \dot{\beta}_t e^{\beta_t}f\>_t \\
        &+ \<i[H(t), e^{-2\gamma_t}\hat{p}^2/2 + \hat{x}^2/2 + e^{-\gamma_t}(\hat{x}\hat{p}-i/2) + e^{\beta_t}f]\>_t,\label{eqn:W_commutator}
    \end{align}
    where $H(t) = e^{\alpha_t-\gamma_t}\hat{p}^2/2 + e^{\alpha_t+\beta_t+\gamma_t}f$ is the QHD Hamiltonian as in \eqn{qhd_operator_convex}. We expand the commutator \eqn{W_commutator} and simplify it using \lem{commutation}:
    \begin{align}
        &i[H(t), e^{-2\gamma_t}\hat{p}^2/2 + \hat{x}^2/2 + e^{-\gamma_t}(\hat{x}\hat{p}-i/2) + e^{\beta_t}f]\\ 
        &= i\Big(\cancel{\frac{1}{2}e^{\alpha_t+\beta_t-\gamma_t}[f,\hat{p}^2]} + \frac{1}{4}e^{\alpha_t-\gamma_t}[\hat{p}^2,\hat{x}^2]\\ &+ \frac{1}{2}e^{\alpha_t-2\gamma_t}[\hat{p}^2, \hat{x}\hat{p}] + e^{\alpha_t+\beta_t}[f,\hat{x}\hat{p}]+\cancel{\frac{1}{2}e^{\alpha_t+\beta_t-\gamma_t}[\hat{p}^2,f]}\Big)\\
        &= e^{\alpha_t -\gamma_t}(\hat{x}\hat{p}-i/2) + e^{\alpha_t-2\gamma_t}\hat{p}^2 + e^{\alpha_t + \beta_t}\left(-\hat{x}\cdot \nabla f\right).\label{eqn:H_comm}
    \end{align}
    
    Inserting \eqn{H_comm} into \eqn{W_commutator}, we have:
    \begin{align}
        \frac{\d}{\d t}\mathcal{W}(t) = \left(e^{\alpha_t}-\dot{\gamma}_t\right)\left[e^{-2\gamma_t}\<\hat{p}^2\>_t + e^{-\gamma_t}\<\hat{x}\hat{p}-i/2\>_t\right] + \dot{\beta}_te^{\beta_t}\<f\>_t + e^{\alpha_t + \beta_t}\<-\hat{x}\cdot \nabla f\>_t.
    \end{align}
    
    By the ideal scaling condition \eqn{ideal_scaling}, we have $e^{\alpha_t}-\dot{\gamma}_t = 0$ and $\dot{\beta}_t \le e^{\alpha_t}$, therefore, we have
    \begin{align}
        \frac{\d}{\d t}\mathcal{W}(t) \le e^{\alpha_t+\beta_t}\<f - x\cdot \nabla f\>_t.
    \end{align}
    Note that $f$ is continuously differentiable and convex, and $x^* = 0$, we have $f(x) \le x \cdot \nabla f(x)$ for any $x \in \R^d$. We conclude that $\frac{\d}{\d t}\mathcal{W}(t) \le 0$,
\end{proof}

\subsection{Global convergence in the non-convex case}\label{sec:convergence-nonconvex}

Non-convex optimization problems usually have many local minimum, and the theoretical result for convex problems does not naturally generalize. Nevertheless, we can prove the global convergence of QHD for non-convex problems under certain realistic assumptions. 

In this section, we consider the QHD Hamiltonian with two parameters \eqn{qhd_operator} because the convergence rate is not guaranteed for non-convex problems. Also, for simplicity, we only consider the one-dimensional case, i.e., $f\colon \R\to\R$. However, our argument is readily generalized to arbitrary finite dimensions. We denote $\Pi_N(t)$ as the orthogonal projection onto the subspace spanned by the first $N$ eigenstates of the operator $\hat{H}(t)$, given a fixed positive integer $N$.

\begin{assumption}\label{assump:obj}
The objective function $f$ is continuously differentiable, unbounded at infinity, and has a unique global minimum at $x^*$. Without loss of generality, we assume $f(x^*)=0$.
\end{assumption}

\begin{assumption}\label{assump:semiclassical}
The time-dependent parameters are slow-varying in time (i.e., $|\dot{\varphi}_t|$,$|\dot{\chi}_t| \ll 1$) and $\lim_{t\to\infty}e^{\varphi_t-\chi_t} = 0$. The Hamiltonian $\hat{H}(t)$ has no crossing eigenvalues for $t \ge 0$. Moreover, the semi-classical approximation holds. This means that in the regime $e^{\varphi_t - \chi_t} \ll 1$, the first $N$ eigenstates of $\hat{H}(t)$ are approximated by the first $N$ eigenstates of the quantum Harmonic oscillator:
\begin{align}
    \hat{H}_{HO}\coloneqq e^{\varphi_t}\left(-\frac{1}{2}\nabla^2\right) + \frac{e^{\chi_t}}{2}f''(x^*)(x-x^*)^2.
\end{align}
\end{assumption}
\begin{remark}
    In what follows, we will call $f_q\coloneqq f(x^*) + \frac{1}{2}f''(x^*)(x-x^*)^2$ as the quadratic model of the function $f$.
\end{remark}

\begin{assumption}\label{assump:init}
The initial wave function $\ket{\Psi_0}$ is in the low-energy subspace of the QHD Hamiltonian at $t=0$, i.e., 
\begin{align}
    \Pi_N(0) \ket{\Psi_0} = \ket{\Psi_0}.
\end{align}
\end{assumption}

Now, we justify why the assumptions we made are realistic for many non-convex optimization problems.

\begin{itemize}
    \item \assump{obj} is a very standard assumption on the objective function in optimization theory. We assume $f(x)$ has a unique global minimum to avoid degenerate cases when we do semi-classical approximation. It is worth mentioning that the uniqueness of global minimum is just a technical assumption and it does not mean QHD can not solve optimization problems with multiple global minima!
    \item The semi-classical approximation in \assump{semiclassical} is a well-known result in the spectral theory of Schr\"odinger operators. For a Schr\"odinger operator $H = -\hbar \Delta + \frac{1}{\hbar}f$, it is shown that when $\hbar \to 0$, the low-energy spectrum is well approximated by that of a quantum harmonic oscillator $H' = -\hbar \Delta + \frac{1}{\hbar}f_q$ (i.e., the potential field $f$ is replaced by its quadratic model), see \cite[Theorem 11.3, informal]{hislop2012introduction}. In our assumption, we take a stronger form so not only the low-energy spectrum but also the low-energy eigenstates of $H$ are approximated by those of $H'$ as well. In \fig{low_energy_subspace}, we show the low-energy subspace of the Schr\"odinger operator $H$ and its semi-classical limit $H'$. The numerical evidence suggests that our assumption is valid in practical problems.
    \item Though it is difficult to prepare $\ket{\Psi_0}$ to satisfy \assump{init} without enough knowledge of $f$, we observe that $(I-\Pi_N(0))\ket{\Psi_0}$ is often very small for $N \approx 10$. For example, in our 2D experiments, we usually have $\|(I-\Pi_N(0))\ket{\Psi_0}\|\le 10^{-4}$ for $\ket{\Psi_0}$ as a uniform superposition state. The reason is that the eigenstates of $\hat{H}(0)$ form a complete basis set, so it is always possible to find an integer $N$ such that the projection of $\ket{\Psi_0}$ onto the low-energy subspace is close enough to $\ket{\Psi_0}$ itself.
\end{itemize}

\begin{figure}[ht]
    \centering
    \includegraphics[width=16cm]{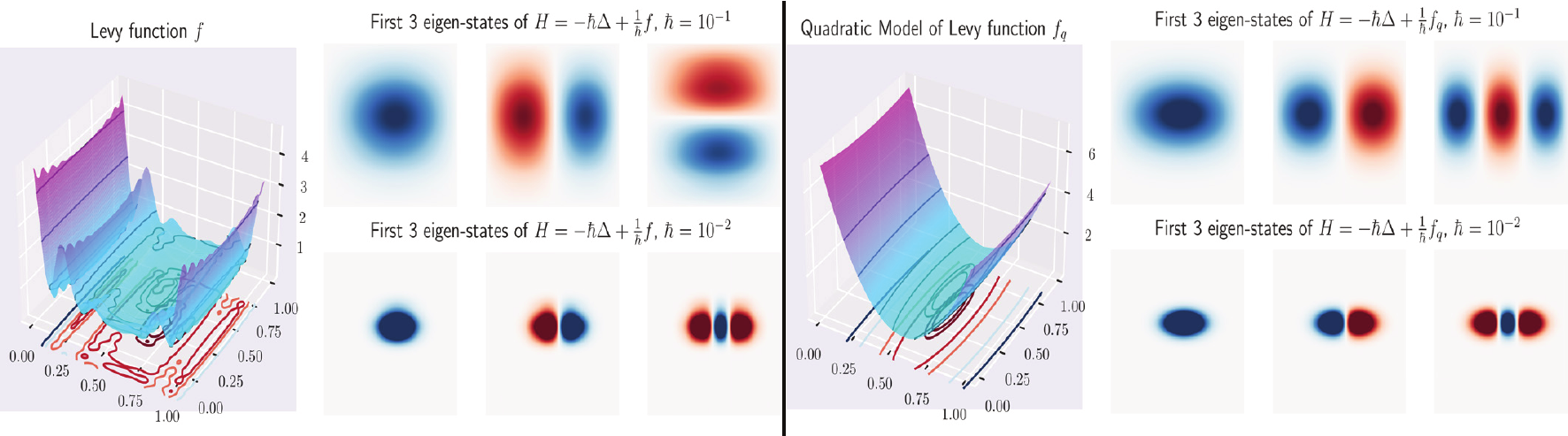}
    \caption[Low-energy subspace of QHD]{The low-energy subspace of the Schr\"odinger operator, illustrated with Levy function $f$ and its quadratic model $f_q$. The ground state, the first excited state, and the second excited eigenstate of the Schr\"odinger operator are plotted for $\hbar = 0.1, 0.01$. When $\hbar$ is small, it is clear that the low-energy subspae of the Schr\"odinger operator converges to its semi-classical limit (i.e., with the quadratic model as potential field).}
    \label{fig:low_energy_subspace}
\end{figure}

\begin{theorem}\label{thm:adiabatic_limit_qhd}
    Suppose \assump{obj}, \assump{semiclassical}, and \assump{init} are satisfied. Then, QHD will converge to the global minimum of $f$ when $t$ is large enough, i.e., 
    \begin{align}
        \lim_{t \to \infty} \mathbb{E}[f]_{\sim \Psi_t} = f(x^*).
    \end{align}
\end{theorem}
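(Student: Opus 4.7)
The plan is to combine the quantum adiabatic theorem with the semi-classical approximation hypothesized in \assump{semiclassical}. Three ingredients conspire: (i) adiabatic invariance keeps the wave function supported on the instantaneous low-energy subspace $\mathrm{Ran}(\Pi_N(t))$; (ii) the semi-classical assumption identifies this subspace, once $e^{\varphi_t-\chi_t}$ becomes small, with the span of the first $N$ harmonic-oscillator eigenstates centered at $x^*$; (iii) those harmonic-oscillator eigenstates concentrate at $x^*$ as $e^{\varphi_t-\chi_t}\to 0$, so the expectation of a continuous $f$ with $f(x^*)=0$ converges to $0$.

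First, I would invoke a quantum adiabatic theorem applied to the instantaneous spectral projector $\Pi_N(t)$. The no-crossing hypothesis and the slow-varying condition $|\dot{\varphi}_t|,|\dot{\chi}_t|\ll 1$ in \assump{semiclassical} supply the spectral gap and slowness needed to bound $\|(I-\Pi_N(t))\ket{\Psi_t}\|$ by a quantity that vanishes with the variation rate of the parameters. Combined with \assump{init}, which enforces $\Pi_N(0)\ket{\Psi_0}=\ket{\Psi_0}$, this yields $\|\ket{\Psi_t}-\Pi_N(t)\ket{\Psi_t}\|=o(1)$ for large $t$.

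Second, I would analyze the harmonic-oscillator surrogate $\hat{H}_{HO}$. Its $k$-th eigenstate is a Hermite function localized at $x^*$ with characteristic length scale $\ell_t=(e^{\varphi_t-\chi_t}/f''(x^*))^{1/4}$. Because $e^{\varphi_t-\chi_t}\to 0$ by \assump{semiclassical}, $\ell_t\to 0$; hence the first $N$ harmonic-oscillator eigenstates (and, via the semi-classical approximation, the first $N$ eigenstates of $\hat{H}(t)$) concentrate in a shrinking neighborhood of $x^*$. Note that $\nabla f(x^*)=0$ is implicit here, since $x^*$ is the minimizer, so $f_q$ really is the leading-order expansion of $f$ at $x^*$.

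Finally, decomposing $\ket{\Psi_t}=\Pi_N(t)\ket{\Psi_t}+(I-\Pi_N(t))\ket{\Psi_t}$ and expanding $\Pi_N(t)\ket{\Psi_t}$ in the harmonic-oscillator eigenbasis reduces the question to showing, for each $k<N$, that $\int f(x)|\phi_k^{(t)}(x)|^2\,\mathrm{d}x\to f(x^*)=0$. Concentration of $|\phi_k^{(t)}|^2$ at $x^*$ plus continuity of $f$ handles the bulk contribution; the growth/unboundedness hypothesis on $f$ in \assump{obj} (together with the super-exponential Gaussian decay of Hermite functions) controls the tails. The main obstacle I anticipate is synchronizing the two approximations: the adiabatic error and the semi-classical error must both stay small in the same regime, even though the relevant spectral gaps scale with $e^{\varphi_t}$ and the length scales $\ell_t$ shrink. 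Restricting everything to the finite-dimensional subspace $\mathrm{Ran}(\Pi_N(t))$ via \assump{init} is what keeps the errors manageable, since one never needs uniform control over the full continuous spectrum of $\hat{H}(t)$.
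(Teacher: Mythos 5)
Your plan follows the same two-stage skeleton as the paper's proof: first, adiabatic invariance (justified by the no-crossing and slow-variation hypotheses of \assump{semiclassical}, seeded by \assump{init}) confines $\ket{\Psi_t}$ to the instantaneous low-energy subspace --- the paper derives this by expanding $\ket{\Psi_t}$ in the instantaneous eigenbasis and showing $|c_m(t)|=|c_m(0)|$ once the coupling terms are neglected --- and second, the semi-classical approximation is used to evaluate $\mathbb{E}[f]_{\sim\Psi_t}$ on that subspace. The only genuine divergence is the finishing step. The paper computes the oscillator matrix elements exactly via ladder operators (\lem{expect_potential}), identifies $f$ with $e^{-\chi_t}$ times the quadratic potential, and bounds the off-diagonal terms $\braket{m;t}{f|n;t}$ by Cauchy--Schwarz, which yields the explicit quantitative bound $|\mathbb{E}[f]_{\sim\Psi_t}|\le \frac{N^3}{4}\sqrt{e^{\varphi_t-\chi_t}f''(x^*)}$. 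You instead argue qualitatively through concentration: the oscillator length scale $\ell_t=(e^{\varphi_t-\chi_t}/f''(x^*))^{1/4}$ (which you compute correctly) shrinks, so $\int f\,|\phi_k^{(t)}|^2\to f(x^*)$ by continuity. This buys you the ability to work with the true $f$ rather than its quadratic surrogate, but it leaves two steps unaddressed that the paper's version handles. First, $\braket{\Psi_t}{f|\Psi_t}$ contains cross terms $\braket{\phi_k^{(t)}}{f|\phi_l^{(t)}}$ with $k\neq l$, not only the diagonal integrals to which you reduce; since $f\ge 0$ (as $f(x^*)=0$ is the global minimum), Cauchy--Schwarz controls them by the diagonal ones, exactly as in the paper, but that line is needed. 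Second, your tail control is not actually supplied by \assump{obj}: unboundedness of $f$ at infinity guarantees a purely discrete spectrum but places no upper bound on the growth rate of $f$, so ``Gaussian decay beats the tails'' is not automatic for an arbitrary continuously differentiable $f$ --- this is precisely the difficulty the paper sidesteps by replacing $f$ with its quadratic model inside the semi-classical regime. Neither point changes the verdict that your proposal is the paper's argument with a more qualitative, concentration-based final step in place of the exact ladder-operator computation.
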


\begin{remark}
    In the proof of the theorem, we invoke the adiabatic approximation. Since the same technique is also used in the analysis of Quantum Adiabatic Algorithm (QAA)\footnote{For more details, see \sec{qaa}.}, we are no worse than QAA (in terms of convergence time) in the worst case. In fact, as \cite{nenciu1993linear} suggests, QHD is likely to converge much faster than the standard QAA because the QHD Hamiltonian is more regular.
\end{remark}

\begin{proof}
    Since $f$ is continuously differentiable and unbounded at infinity, by \cite[Theorem 10.7]{hislop2012introduction}, the spectrum of $\hat{H}(t)$ is purely discrete for all $t \ge 0$. Denote the eigen-pairs of $H(t)$ by $\{E_n(t), \ket{n;t}\}_n$. 
    We now represent the QHD wave function in this basis:
        \begin{align}\label{eqn:ansatz}
        \ket{\Psi_t} = \sum_n c_n(t) e^{i \theta_n(t)} \ket{n;t},
    \end{align}
    where $\theta_n(t)$ are real-valued functions.
    
    Inserting \eqn{ansatz} into the Schr\"odinger equation $i \partial_t \ket{\Psi_t} = \hat{H}(t)\ket{\Psi_t}$, we obtain the following equations:

    \begin{align}\label{eqn:ode_form}
        \dot{c}_m(t) = \underbrace{-c_m(t) \bra{m;t}\left[\frac{\partial}{\partial t} \ket{m;t}\right]}_{\text{Part~I.}} - \underbrace{\sum_{n\neq m} c_n(t) e^{i(\theta_n - \theta_m)} \frac{\bra{m;t}\dot{H}\ket{n;t}}{E_n(t) - E_m(t)}}_{\text{Part~II.}}.
    \end{align}

    No crossing eigenvalues in $\hat{H}(t)$ implies that $E_n(t) - E_m(t) \neq 0$. By the slow-varying assumption on the time-dependent parameters, we have $\braket{m;t}{\dot{H}\Big|n;t} \ll 1$. Now, we invoke the adiabatic approximation \cite[Section 5.6]{sakurai2011modern} and neglect Part II in \eqn{ode_form}. 

    As for Part I in \eqn{ode_form}, we note that $\bra{m;t}\left[\frac{\partial}{\partial t} \ket{m;t}\right]$ is purely imaginary for all $m$. To see this, we differentiate the identity $\braket{m;t}{m;t}=1$ w.r.t the time $t$ and it turns out that
    \begin{align}
        0 = \frac{\partial}{\partial t} \braket{m;t}{m;t} = 2 \Re\left\{\bra{m;t}\left[\frac{\partial}{\partial t} \ket{m;t}\right] \right\}.
    \end{align}
    Therefore, \eqn{ode_form} is simplified to the following first-order linear ODE:
    \begin{align}\label{eqn:simplified_ode}
        \dot{c}_m(t) = -i c_m(t) \mathscr{A}_m(t),
    \end{align}
    in which $\mathscr{A}_m(t)$ are real-valued functions.
    This simplification leads to a closed-form solution for $c_m(t)$:
    \begin{align}
        c_m(t) = c_m(0) e^{-i\int^t_0 \mathscr{A}(s)\d s},
    \end{align}
    which implies that $|c_m(t)| = |c_m(0)|$ for all $t \ge 0$. Due to \assump{init}, we conclude that $\ket{\Psi_t}$ remains in the low-energy subspace during the evolution:
    \begin{align}
        \ket{\Psi_t} = \sum_{m < N} c_m(t) e^{i\theta_m(t)} \ket{m;t}.
    \end{align}
    With this finite expansion of $\ket{\Psi_t}$, we compute the expectation value:
    \begin{align}\label{eqn:expectation_f}
        \mathbb{E}[f]_{\sim \Psi_t} = \braket{\Psi_t}{f|\Psi_t} = \sum_{m < N} |c_m|^2 \braket{m;t}{f|m;t} + \sum_{\substack{m\neq n\\m,n < N}} \overline{c_m}c_n e^{-i(\theta_m-\theta_n)} \braket{m;t}{f|n;t}.
    \end{align}
    
    For large enough $t$, the parameters $e^{\varphi_t-\chi_t}$ is so small that we can invoke the semi-classical approximation. We denote the eigenstates of $\hat{H}_{HO}(t)$ as $\ket{e_n;t}$. By \assump{semiclassical}), we approximate the eigenstates of $\hat{H}(t)$ by those of $\hat{H}_{HO}(t)$. We write $V = \frac{e^{\chi_t}}{2}f''(x^*)(x-x^*)^2$, by \lem{expect_potential} (using $\hbar^2 = e^{\varphi_t}$),
    \begin{align}
        \braket{m;t}{V|m;t} \approx \braket{e_m;t}{V|e_m;t} = \frac{1}{2} \sqrt{e^{\varphi_t+\chi_t}f''(x^*)}\left(m + \frac{1}{2}\right).
    \end{align}
    Note that $f = e^{-\chi_t}V$ in the semi-classical limit, then it follows that
    \begin{align}\label{eqn:main_inner}
        \braket{m;t}{f|m;t} = \frac{1}{2} \sqrt{e^{\varphi_t-\chi_t}f''(x^*)}\left(m + \frac{1}{2}\right).
    \end{align} 
    
    Next, we use the Cauchy-Schwarz inequality to bound the cross terms in \eqn{expectation_f}:
    \begin{align}
        |\braket{m;t}{V|n;t}|  &= \left|\int_{\R} \overline{e_m(x)} \left(\frac{e^{\chi_t}}{2} \omega^2 x^2\right) e_n(x)~\d x \right| \\
        & \le \left[\left(\int_{\R} V |e_m|^2~\d x\right)\left(\int_{\R} V |e_n|^2~\d x\right)\right]^{1/2} \\
        & = \frac{1}{2} \sqrt{e^{\varphi_t+\chi_t}f''(x^*)}\sqrt{(m+1/2)(n+1/2)},
    \end{align}
    and similarly, we insert $f = e^{-\chi_t}V$ and end up with
    \begin{align}\label{eqn:cross_inner}
        |\braket{m;t}{f|n;t}|\le \frac{1}{2} \sqrt{e^{\varphi_t-\chi_t}f''(x^*)}\sqrt{(m+1/2)(n+1/2)}.
    \end{align}
    
    Substituting both \eqn{main_inner} and \eqn{cross_inner} to \eqn{expectation_f}, we get an upper bound of $\mathbb{E}[f]_{\sim\Psi_t}$:
    \begin{align}
        \Big|\mathbb{E}[f]_{\sim\Psi_t}\Big| &\le \frac{1}{2} \sqrt{e^{\varphi_t-\chi_t}f''(x^*)}\left(\sum_{m <N} \left(m+\frac{1}{2}\right) + \sum_{\substack{m\neq n\\m,n < N}}\sqrt{(m+1/2)(n+1/2)}\right)\\
        & \le \frac{1}{2} \sqrt{e^{\varphi_t-\chi_t}f''(x^*)} \left(\frac{1}{2}N^2 + \frac{1}{2}N^3 - \frac{1}{2}N^2\right)= \frac{N^3}{4}\sqrt{e^{\varphi_t-\chi_t}f''(x^*)},
    \end{align}
    which will converge to the global minimum $f(x^*)=0$ because $N$ and $f''(x^*)$ are fixed positive numbers and $\lim_{t \to \infty}e^{\varphi_t - \chi_t} = 0$.
\end{proof}

\begin{lemma}\label{lem:expect_potential}
    Consider the quantum harmonic oscillator 
    \begin{align}
        \hat{H} = \hat{K} + \hat{V},
    \end{align}
    where $\hat{K} = -\frac{\hbar^2}{2} \frac{\partial^2}{\partial x^2}$ is the kinetic operator and $\hat{V} = \frac{1}{2}\omega^2 x^2$ is the quadratic potential field. We denote the eigenstates of $\hat{H}$ as $\{\ket{e_n}\}^\infty_{n=0}$. Then, we have
    \begin{align}
        \braket{e_n}{\hat{K}\Big|e_n} = \braket{e_n}{\hat{V}\Big|e_n} = \frac{1}{2}\hbar \omega \left(n+\frac{1}{2}\right),
    \end{align}
    \begin{align}
        \braket{e_n}{\hat{K}\Big|e_{n\pm 1}} =  \braket{e_n}{\hat{V}\Big|e_{n\pm 1}} = 0.
    \end{align}
\end{lemma}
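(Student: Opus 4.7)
\textbf{Proof proposal for Lemma \ref{lem:expect_potential}.} The plan is to use the standard ladder-operator formalism for the quantum harmonic oscillator, which reduces all of these matrix elements to algebraic manipulations with the creation and annihilation operators. Define
\[
    \hat{a} = \sqrt{\tfrac{\omega}{2\hbar}}\,\hat{x} + \tfrac{i}{\sqrt{2\hbar\omega}}\,\hat{p},\qquad \hat{a}^\dagger = \sqrt{\tfrac{\omega}{2\hbar}}\,\hat{x} - \tfrac{i}{\sqrt{2\hbar\omega}}\,\hat{p},
\]
where $\hat{p} = -i\hbar\,\partial_x$. A direct check using $[\hat{x},\hat{p}]=i\hbar$ gives $[\hat{a},\hat{a}^\dagger]=1$, and the eigenstates $\ket{e_n}$ satisfy $\hat{a}\ket{e_n}=\sqrt{n}\ket{e_{n-1}}$ and $\hat{a}^\dagger\ket{e_n}=\sqrt{n+1}\ket{e_{n+1}}$, with $\hat{a}^\dagger\hat{a}\ket{e_n}=n\ket{e_n}$.

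Next I would invert the definitions to obtain $\hat{x}=\sqrt{\hbar/(2\omega)}\,(\hat{a}+\hat{a}^\dagger)$ and $\hat{p}=-i\sqrt{\hbar\omega/2}\,(\hat{a}-\hat{a}^\dagger)$, so that
\[
    \hat{V} = \tfrac{1}{2}\omega^2\hat{x}^2 = \tfrac{\hbar\omega}{4}\bigl(\hat{a}^2 + (\hat{a}^\dagger)^2 + \hat{a}\hat{a}^\dagger + \hat{a}^\dagger\hat{a}\bigr),\qquad \hat{K} = \tfrac{\hat{p}^2}{2} = -\tfrac{\hbar\omega}{4}\bigl(\hat{a}^2 + (\hat{a}^\dagger)^2 - \hat{a}\hat{a}^\dagger - \hat{a}^\dagger\hat{a}\bigr).
\]
For the diagonal matrix elements, the terms $\hat{a}^2$ and $(\hat{a}^\dagger)^2$ vanish in $\braket{e_n}{\cdot|e_n}$ by orthogonality (they shift the index by $\pm 2$), while $\hat{a}\hat{a}^\dagger + \hat{a}^\dagger\hat{a} = 2\hat{a}^\dagger\hat{a} + 1$ yields expectation value $2n+1$. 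This immediately gives $\braket{e_n}{\hat{V}|e_n}=\braket{e_n}{\hat{K}|e_n}=\tfrac{1}{2}\hbar\omega(n+\tfrac{1}{2})$.

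For the off-diagonal $n\to n\pm 1$ matrix elements, the same decomposition shows that every monomial in $\hat{V}$ and $\hat{K}$ shifts the oscillator quantum number by either $0$ or $\pm 2$, so all of them are orthogonal to $\ket{e_{n\pm 1}}$. Equivalently, one can invoke the parity argument: $\ket{e_n}$ has parity $(-1)^n$, while both $\hat{V}$ and $\hat{K}$ are even operators, so $\braket{e_n}{\hat{V}|e_{n\pm 1}}$ and $\braket{e_n}{\hat{K}|e_{n\pm 1}}$ vanish by a symmetry mismatch. I expect no serious obstacles here — the only things to be careful about are fixing the normalization conventions of $\hat{a},\hat{a}^\dagger$ so that they are consistent with the mass-$1$ oscillator actually used in the lemma (the kinetic prefactor is $\tfrac{\hbar^2}{2}$, not $\tfrac{\hbar^2}{2m}$ with $m\neq 1$), and being explicit that $\ket{e_n}$ is $L^2$-normalized so the index-raising rules hold without extra constants.
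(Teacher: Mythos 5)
Your proposal is correct and follows essentially the same route as the paper's proof: both express $\hat{K}$ and $\hat{V}$ in terms of ladder operators and then use the fact that the $\hat{a}^2$ and $(\hat{a}^\dagger)^2$ terms shift the quantum number by $\pm 2$ (hence vanish by orthogonality), while $\hat{a}\hat{a}^\dagger + \hat{a}^\dagger\hat{a} = 2\hat{a}^\dagger\hat{a}+1$ yields the diagonal value $\frac{\hbar\omega}{2}\left(n+\frac{1}{2}\right)$. The only additions on your side are cosmetic (the $\hat{a},\hat{a}^\dagger$ notation versus the paper's $\hat{a}_{\pm}$, and the optional parity argument for the off-diagonal elements), so no changes are needed.
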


\begin{proof}
    We introduce the ladder operators:
    \begin{align}\label{eqn:ladder}
        \hat{a}_{\pm} \coloneqq \frac{1}{\sqrt{2\hbar \omega}} \left(\mp i\hat{p}+ \omega x\right),
    \end{align}
    where $\hat{p} = -i \frac{\partial}{\partial x}$ is the momemtum operator. The $\hat{a}_{+}$ (or $\hat{a}_{-}$) is known as the \textit{raising} (or \textit{lowering}) operator because
    \begin{align*}
        \hat{a}_{+}\ket{e_n} \propto \ket{e_{n+1}},~\hat{a}_{-}\ket{e_n} \propto \ket{e_{n-1}}.
    \end{align*}
    And in particular, we have (c.f.~\cite[Section 2.3]{griffiths2018introduction})
    \begin{align}
        \hat{a}_{+}\hat{a}_{-}\ket{e_n} = n \ket{e_n},~ \hat{a}_{-}\hat{a}_{+}\ket{e_n} = (n+1) \ket{e_n}.
    \end{align}

    We use the definition of ladder operators \eqn{ladder} to express $\hat{p}$ and $x$,
    \begin{align}
        \hat{p} = i\sqrt{\frac{\hbar \omega}{2}} \left(\hat{a}_{+} - \hat{a}_{-}\right),~x = \sqrt{\frac{\hbar}{2\omega}} \left(\hat{a}_{+} + \hat{a}_{-}\right).
    \end{align}

    We are interested in $\hat{K}$ and $\hat{V}$:
    \begin{align}
        \hat{K} &= \frac{1}{2} \hat{p}^2 = - \frac{\hbar \omega}{4}\left[(\hat{a}_{+})^2 - (\hat{a}_{+}\hat{a}_{-}) - (\hat{a}_{-}\hat{a}_{+}) + (\hat{a}_{-})^2\right],\\
        \hat{V} &= \frac{\omega^2}{2} x^2 = \frac{\hbar \omega}{4}\left[(\hat{a}_{+})^2 + (\hat{a}_{+}\hat{a}_{-}) + (\hat{a}_{-}\hat{a}_{+}) + (\hat{a}_{-})^2\right].
    \end{align}
  
    Note that $(\hat{a}_{+})^2$ in $\hat{K}$ or $\hat{V}$ will raise $\ket{e_n}$ to $\ket{e_{n+2}}$ and thus has no overlap with $\ket{e_n}$. Similarly, $\bra{e_n}(\hat{a}_{-})^2\ket{e_n}=0$. It turns out that,
    \begin{align}
        \braket{e_n}{\hat{K}\Big|e_n} = \frac{\hbar \omega}{4} \bra{e_n}\left[(\hat{a}_{+}\hat{a}_{-}) + (\hat{a}_{-}\hat{a}_{+})\right]\ket{e_n} = \frac{\hbar \omega }{2}\left(n+\frac{1}{2}\right).
    \end{align}
    Similarly, $\braket{e_n}{\hat{V}\Big|e_n} = \frac{\hbar \omega }{2}\left(n+\frac{1}{2}\right)$.

    As for $\braket{e_n}{\hat{K}\Big|e_{n+1}}$, we note that $\ket{K}$ maps the state $\ket{e_{n+1}}$ to a linear combination of $\ket{e_{n-1}}$, $\ket{e_{n +1}}$, and  $\ket{e_{n+3}}$. It has no overlap with the state $\ket{e_n}$, thus $\braket{e_n}{\hat{K}\Big|e_{n+1}}=0$. The same argument applies to show that $\braket{e_n}{\hat{K}\Big|e_{n-1}} = 0$ and $\braket{e_n}{\hat{V}\Big|e_{n\pm 1}} = 0$.
\end{proof}

\subsection{QHD for quadratic model functions}
The Schr\"odinger equation describing QHD is often too complicated to solve analytically. Fortunately, we can find a closed-form solution of QHD when $f$ is a quadratic function. In the following calculation, we consider the one-dimensional case $f(x) = \frac{1}{2}x^2$ for simplicity. It is worth noting that the same method also applies to general finite-dimensional quadratic forms $f(\vect{x}) = \frac{1}{2}\vect{x}^T A \vect{x}$ with a positive semidefinite matrix $A$\footnote{When $A$ is not positive semidefinite, the quadratic form $f(\vect{x})$ is not bounded from below. We think this is not an interesting minimization problem to study.}. It turns out that, if we choose the same time-dependent parameters as in Nesterov's accelerated gradient descent, the convergence rate is $\mathbb{E}[f]_{\sim \Psi_t} = O(t^{-3})$. This rate is at par with the convergence rate of the continuous-time model of Nesterov's method~\cite{su2014differential}.

\begin{remark}
    Our result does not mean QHD can not achieve faster convergence for quadratic objective functions. In fact, one can use a linear function for $\beta_t$, and the convergence rate can be exponentially fast. Here, our goal is to compare QHD to the classical ODE model of Nesterov's method.
\end{remark}

We choose the following time-dependent parameters in the QHD Hamiltonian \eqn{qhd_operator}:
\begin{align}\label{eqn:nesterov_params}
    \alpha_t = \log (2/t), \beta_t = 2\log(t), \gamma_t = 2\log(t),
\end{align}
and the QHD Hamiltonian is 
\begin{align}\label{eqn:quadratic_model_qhd}
    \hat{H}(t) = -\frac{1}{t^3}\frac{\partial^2}{\partial x^2} + t^3 x^2.
\end{align}

Note that our choice of the time-dependent parameters \eqn{nesterov_params} is the same as the ones in the ODE model of the Nesterov's accelerated gradient descent algorithm \cite{wibisono2016variational}. It has been shown that this ODE model exhibits $O(t^{-3})$ convergence rate for strongly convex functions such as $\frac{1}{2}x^2$~\cite{su2014differential}.

\begin{proposition}\label{prop:quadratic_rate}
    Let $f$, $\hat{H}(t)$ be the same as above, and we consider the initial wave function,
    \begin{align}\label{eqn:init_gaussian}
        \Psi_0(x) = \left(\frac{1}{2\pi}\right)^{1/4} \exp\left(\frac{-x^2}{4}\right).
    \end{align}
    Let $\ket{\Psi_t}$ be the solution to the Schr\"odinger equation $i\partial_t\ket{\Psi_t}= \hat{H}(t)\ket{\Psi_t}$, then we have
    \begin{align}
        \mathbb{E}[f]_{\sim \Psi_t} = \Theta(t^{-3}).
    \end{align}
\end{proposition}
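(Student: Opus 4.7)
The plan is to exploit that $\hat H(t) = -t^{-3}\partial_x^2 + t^3 x^2$ is quadratic in $\hat x, \hat p$ and that $\Psi_0$ is Gaussian, so the evolved state stays Gaussian and the problem collapses to a small linear ODE system. Equivalently, view $\hat H(t)$ as a time-dependent harmonic oscillator with effective mass $m(t) = t^3/2$ and constant frequency $\omega = 2$.

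The first step is to track the second moments $V_x(t) = \<\hat x^2\>_t$, $V_p(t) = \<\hat p^2\>_t$, and $V_{xp}(t) = \tfrac{1}{2}\<\hat x\hat p + \hat p\hat x\>_t$. Since $\Psi_0$ is even, $\<\hat x\>_t = \<\hat p\>_t = 0$ and therefore $\mathbb{E}[f]_{\sim\Psi_t} = \tfrac{1}{2}V_x(t)$, so the claim reduces to $V_x(t) = \Theta(t^{-3})$. Applying \lem{der_expect} together with the commutator identities of \lem{commutation} yields the closed linear system
\begin{align*}
\dot V_x = \tfrac{2}{m(t)}\, V_{xp},\qquad \dot V_p = -2 m(t)\omega^2\, V_{xp},\qquad \dot V_{xp} = \tfrac{1}{m(t)}\, V_p - m(t)\omega^2\, V_x,
\end{align*}
with initial data $V_x(0) = 1,~V_p(0) = 1/4,~V_{xp}(0) = 0$ obtained by direct integration against $\Psi_0$. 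A one-line differentiation shows the symplectic invariant $V_x V_p - V_{xp}^2$ is conserved along the flow and equals $1/4$ on $\Psi_0$ (which saturates the Heisenberg uncertainty bound).

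The second step establishes the $O(t^{-3})$ upper bound. In the Heisenberg picture, the evolution is linear, so $\hat x(t) = A_1(t)\hat x(0) + A_2(t)\hat p(0)$ where $A_1, A_2$ are two linearly independent solutions of the classical amplitude equation $\ddot X + \tfrac{3}{t}\dot X + 4X = 0$, which is the Euler--Lagrange equation for the mean trajectory. Under the substitution $X(t) = \phi(2t)/t$, this reduces to Bessel's equation of order one in $s=2t$, so the fundamental solutions are $J_1(2t)/t$ and $Y_1(2t)/t$; the standard large-$t$ asymptotics $J_1(2t), Y_1(2t) = \Theta(t^{-1/2})$ give $A_j(t) = O(t^{-3/2})$, whence $V_x = O(t^{-3})$, and analogously $V_p = O(t^3)$ via $\hat p(t) = m(t)\dot{\hat x}(t)$.

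The matching $\Omega(t^{-3})$ lower bound then follows at once from the conservation law: $V_x V_p = \tfrac{1}{4} + V_{xp}^2 \ge \tfrac{1}{4}$ combined with $V_p(t) = O(t^3)$ yields $V_x(t) \ge 1/(4 V_p(t)) = \Omega(t^{-3})$, completing the claim. The main obstacle is handling the regular singular point of $\hat H(t)$ at $t = 0$: the kinetic coefficient $1/t^3$ diverges, so one must check the initial-value problem is actually well-posed through $t = 0$ and that the Gaussian $\Psi_0$ does not couple to the singular $Y_1$-branch in the Bessel representation. A clean workaround is to start the evolution at a small $t_0 > 0$ and note that the $\Theta(t^{-3})$ asymptotic as $t \to \infty$ is insensitive to $t_0$ up to overall constants.
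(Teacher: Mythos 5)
Your proof is correct and takes a genuinely different route from the paper's. The paper assumes the Gaussian ansatz \eqn{ansatz_gaussian}, derives the Riccati equation \eqn{riccati} for the complex width $y=\delta^2$, linearizes it via $y=-i\dot u/(t^3u)$, and extracts the rate from the $t^{-1/2}$ envelope of order-two Bessel functions; its passage from $O(t^{-3})$ to $\Theta(t^{-3})$ rests on the informal assertions that $\dot v/v$ converges and that $\Re(y)$ and $\Im(y)$ remain of comparable magnitude. You instead close the dynamics on the second moments $(V_x,V_p,V_{xp})$ -- which is legitimate for any state, not only Gaussians, because $\hat H(t)$ is quadratic in $\hat x,\hat p$ -- obtain the upper bounds $V_x=O(t^{-3})$ and $V_p=O(t^3)$ from Heisenberg-picture coefficients solving $\ddot X+\tfrac{3}{t}\dot X+4X=0$ (Bessel of order one in $2t$), and, this is the real gain, get the matching lower bound rigorously from the conserved uncertainty invariant $V_xV_p-V_{xp}^2=\tfrac14$, which is immune to the zeros of the oscillating Bessel solutions -- precisely the place where the paper's ``same magnitude'' heuristic is fragile. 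What the paper's route buys in exchange is the closed-form wave function itself rather than only its moments. Two reconciling remarks: your Bessel data (order one, argument $2t$) differ from the paper's (order two, argument $t$) because you linearize the classical trajectory while the paper linearizes the width Riccati equation; moreover, for $\hat H=-t^{-3}\partial_x^2+t^3x^2$ and the ansatz \eqn{ansatz_gaussian} the Riccati equation should read $\dot y=-2it^3y^2+2it^{-3}$, so \eqn{ode-a} appears to drop a factor of $2$, which is harmless since it does not change the $t^{-1/2}$ envelope or the final rate. Finally, your concern about $t=0$ is well founded: the coefficient $t^{-3}$ is non-integrable there, so the exact initial-value problem at $t=0$ is singular (the paper's own numerics regularize it, see \eqn{nonconvex_QHD_h}); starting the evolution at $t_0>0$, as you do, is the more careful reading and leaves the $t\to\infty$ asymptotics unaffected.
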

\begin{proof}
    Since $\hat{H}(t)$ has a quadratic potential field, the quantum wave packet will remain Gaussian in the time evolution. We now introduce the following solution ansatz
    \begin{equation} \label{eqn:ansatz_gaussian}
        \psi(t,x) = \left(\frac{1}{\pi}\right)^{1/4} \frac{1}{\sqrt{\delta(t)}}\exp(-i\theta(t)) \exp\left(\frac{-x^2}{2\delta(t)^2}\right),
    \end{equation}
    with $\theta(0) = 0$, $\delta(0) = \sqrt{2}$, and the probability amplitude $p_\lambda(t,x)$ is given by
    \begin{equation} \label{eqn:density}
        p(t,x):= |\Phi(t,x)|^2 = \frac{1}{\sqrt{\pi}} \frac{1}{|\delta(t)|} \exp\Big(2 \operatorname{Im}(\theta(t))\Big) \exp\Big(-x^2 \operatorname{Re}(1/ y(t))\Big),
    \end{equation}
    where $y(t) = \delta^2(t)$. Since the wave function \eqn{ansatz_gaussian} solves the Schr\"odinger equation, we have $\|\Phi(t,x)\|^2 = 1$ for all $t \geq 0$. It turns out that \eqref{eqn:density} is the density function of a Gaussian random variable with zero mean and variance
    \begin{equation} \label{eqn:variance0}
        \sigma^2(t) = \frac{1}{2\operatorname{Re}(1/ y(t))}.
    \end{equation}
    
    Substituting the ansatz \eqn{ansatz_gaussian} to the Schr\"odinger equation $i\partial_t\ket{\Psi_t}= \hat{H}(t)\ket{\Psi_t}$, and we take change of variables $y(t) = \delta^2(t)$. The equation ends up being:
    \begin{subequations}\label{eqn:ode}
        \begin{equation}
        \label{eqn:ode-a}
         y' +i\left(t^3 y^2 - t^{-3}\right) = 0
    \end{equation}
    \begin{equation}
        \label{eqn:ode-b}
        \theta' = \frac{i}{4}\frac{y'}{y} + \frac{1}{2t^3}\frac{1}{y}
    \end{equation}
    \begin{equation}
        \label{eqn:ode-c}
        \theta(0) = 0, ~y(0) = 2.
      \end{equation}
    \end{subequations}
    
    Since the phase $\theta(t)$ merely contributes to a normalization factor in the probability density, we can ignore \eqn{ode-b} and focus on the first equation \eqn{ode-a}, which simplifies to the following Riccati equation:
    \begin{align}\label{eqn:riccati}
        \dot{y} = (-it^3)y^2 + it^{-3}.
    \end{align}
    Introducing the change of variable $y = -i\dot{u}/t^3 u$, \eqn{riccati} reduces to a linear second-order ODE:
    \begin{align}\label{eqn:intermediate}
        \ddot{u} - \frac{3}{t}\dot{u} + u = 0.
    \end{align}
    Finally, we introduce a time scaling $u = t^2v$ so \eqn{intermediate} is transformed to the standard Bessel equation:
    \begin{align}\label{eqn:bessel}
        t^2 \ddot{v} + t\dot{v} + (t^2-4)v = 0.
    \end{align}
    A solution of the Bessel equation \eqn{bessel} is a linear combination of Bessel functions $J_2(t)$ and $Y_2(t)$, i.e., 
    \begin{align}
        v(t) = c_1 J_2(t) + c_2Y_2(t).
    \end{align}
    
    It is tricky (but still possible) if we want to determine the exact coefficients $c_1$ and $c_2$ since $Y(t)$ has a singularity at $t = 0$. Fortunately, since we are more interested in the asymptotic behavior of $y(t)$, we do not have to compute the coefficients $c_1$ and $c_2$. Instead, we consider the asymptotic form of Bessel functions:
    \begin{align}
        J_k (t) = \sqrt{\frac{2}{\pi t}} \left(\cos(t- \frac{k\pi }{2} - \frac{\pi}{4}) + O(1/t)\right), \\
        Y_k (t) = \sqrt{\frac{2}{\pi t}} \left(\sin(t- \frac{k\pi }{2} - \frac{\pi}{4}) + O(1/t)\right), 
    \end{align}
    and it follows that $v(t) \propto t^{-1/2}$ when $t \to \infty$. 

    To compute the derivative of $v$, we make use of the following recursive relation for Bessel functions $Z_k$ ($Z = J$ or $Z = Y$):
    \begin{align}
        \dot{Z}_k(t) = \frac{1}{2}\left(Z_{k-1}(t) - Z_{k+1}(t)\right).
    \end{align}
    Therefore, we can expand the derivatives of Bessel functions as
    \begin{align}
        \dot{v} = c_1 \dot{J}_2 + c_2 \dot{Y}_2 = \frac{c_1}{2}(J_1 - J_3) + \frac{c_1}{2}(J_1 - J_3),
    \end{align}
    and it turns out that $\dot{v}(t) \propto t^{-1/2}$ as well. In terms of $y(t)$, we have
    \begin{align}
        y(t) = \frac{-i}{t^3}\left[\frac{2}{t} + \frac{\dot{v}}{v}\right] = \Theta(t^{-3}).
    \end{align}
    Note that $\dot{v}$ and $v$ decays at the same order so we expect $\dot{v}/v$ converges to a constant as $t \to \infty$. It is reasonable to assume that the real and imaginary part of $y(t)$ are at the same magnitude as $t$ becomes large, and it follows that $\sigma^2(t) = \Theta(t^{-3})$. Recall that the objective/potential function is $f(x) = \frac{1}{2}x^2$, the expectation value of $f$ is
    \begin{equation}\label{eqn:quadratic_convergence}
        \mathbb{E}[f]_{\sim \Psi_t} = \int_{\R}\frac{1}{2}x^2 |\Psi(x,t)|^2~\d x = \frac{1}{2} \sigma^2(t) = \Theta(t^{-3}).
    \end{equation}
\end{proof}

\prop{quadratic_rate} indicates that QHD has the same asymptotic convergence rate as its ODE counterpart in \cite[Section 3.2]{su2014differential} when applied on the quadratic model function $f$. Note that we set $\beta_t = \log(t^2)$. \thm{convex_convergence} implies that $\mathbb{E}[f]_{\sim \Psi_t} \le O(e^{-\beta_t}) = O(t^{-2})$. So QHD actually does better on the quadratic problem $f$ than the worst-case theoretical guarantee. However, \prop{quadratic_rate} also rules out faster convergence.

\section{Quantum and Classical Algorithms for Nonconvex Problems}\label{sec:nonconvex}
\subsection{Test problems}\label{sec:2d-test-problems}
We select 22 optimization instances in the literature \cite{jamil2013literature, simulationlib, Al-Roomi2015}. The optimization problems are split into five categories by the landscape features, see \tab{BenchmarkSet}.

\begin{table}[!ht]
    \centering
    \begin{tabular}{|p{0.2\textwidth}|c|p{0.6\textwidth}|}
        \hline
            \multicolumn{3}{|c|}{Custom Classification (no repetition)} \\
        \hline
            \textbf{Features} & \textbf{Count} & \textbf{Function names}\\
        \hline
            Ridges or Valleys & 5 & Ackley 2, Dropwave, Holder Table, Levy, Levy 13\\
        \hline
            Basin & 5 & Bohachevsky 2, Camel 3, Csendes, Deflected Corrugated Spring, Rosenbrock\\
        \hline
            Flat & 3 & Bird, Easom, Michalewicz\\
        \hline
            Studded & 4 & Ackley, Alpine 1, Griewank, Rastrigin\\
        \hline
            Simple & 5 & Alpine 2, Hosaki, Shubert, Styblinski-Tang, Sum of Squares\\
        \hline
    \end{tabular}
    \caption[2D test functions]{A classification of the functions, separated by features we believe to be interesting to investigate in the comparison between QHD and alternate optimization algorithms.}
    \label{tab:BenchmarkSet}
\end{table}

The category ``Ridges or Valleys'' is characterized by small deep sections and/or steep walls that divide the domain. ``Basin'' is characterized by flatness at function values near the global optimum. ``Flat'' is characterized by flatness at high function values. ``Studded'' is characterized by a base shape with higher frequency perturbation. Local search algorithms may be confused by many local minima and high gradient magnitudes. ``Simple'' functions are those on which standard gradient descent algorithms work efficiently.\footnote{These functions serve as a `sanity check' on the behavior of the gradient methods and QHD} We also plot two representative instance from each of the five categories, as shown in \fig{representative_table}.

\begin{figure}[ht!]
    \centering
     \includegraphics{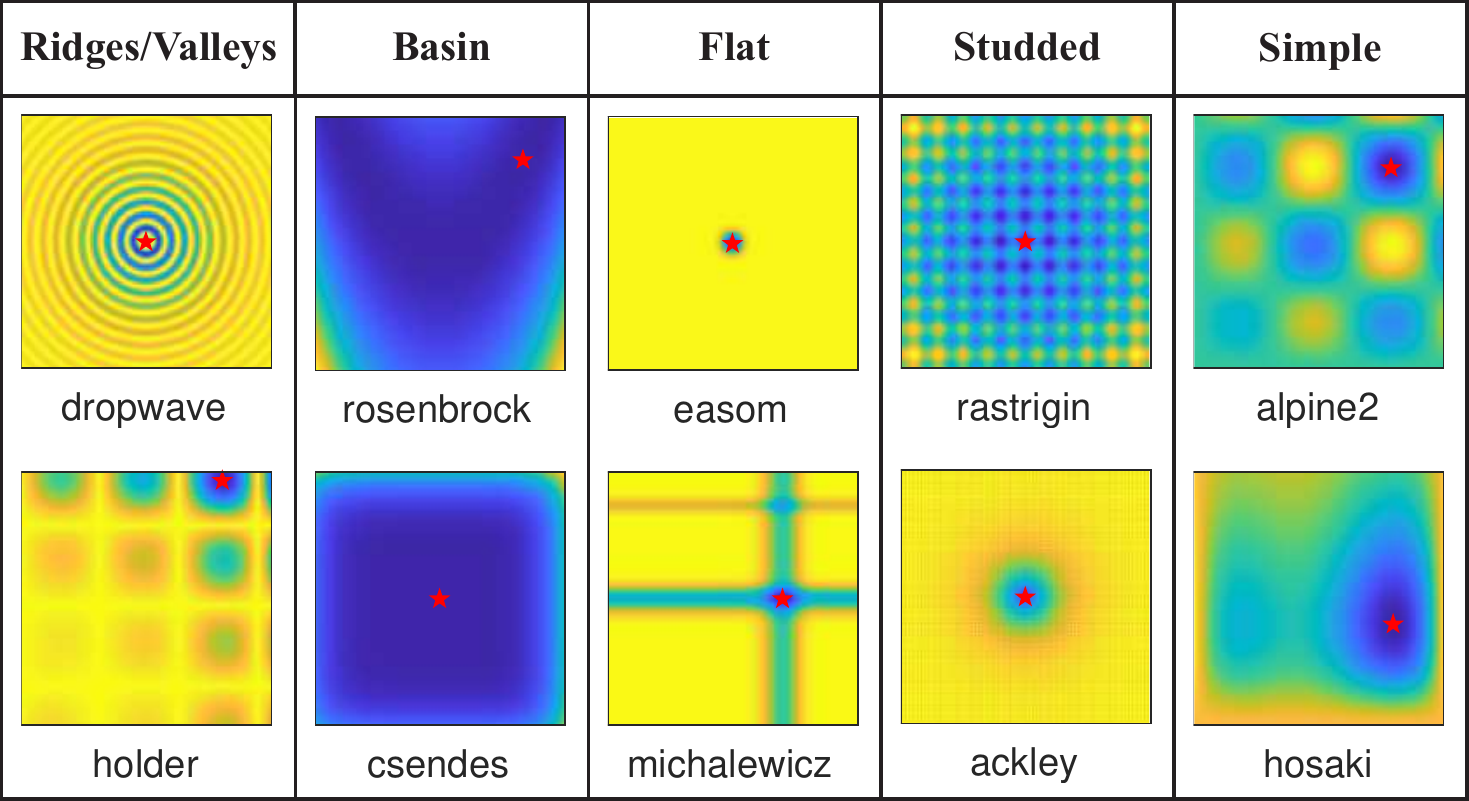}
     \caption[Representatives of 2D test functions]{Representatives from the benchmark set of test functions. High function values shown in yellow, down to low values in blue. The unique global minimum is marked with a red star.}
     \label{fig:representative_table}
\end{figure}

To unify the simulation accuracy and comparability of the functions, we truncate the objective function $f(x,y)$ over a squared domain $[a,b]^2$, and then rescale the function so it is supported on the unit suqare $[0,1]^2$. The rescaled objective function is given by:
\begin{align}
    \tilde{f}(x,y) = \frac{1}{L}f(a+Lx, b+Ly) - f_{\min},
\end{align}
where $L = b-a$ and $f_{\min}$ is the global minimum value of $f$. The global minimum of the rescaled function $ \tilde{f}$ is 0. The normalization factor $1/L$ is introduced to preserve the size of gradient and Hessian of $f$. We do this to avoid affecting the performance of the (classical) gradient-based algorithms, preserving fair comparison.

\subsection{Quantum algorithms}

We test two quantum algorithms, Quantum Hamiltonian Descent (QHD) and Quantum Adiabatic Algorithms (QAA), for the 22 two-dimensional non-convex optimization instances. Both quantum algorithms are simulated numerically on classical computers so we can visualize the evolution of the probability distribution.

\subsubsection{Quantum Hamiltonian Descent (QHD)}\label{sec:simulating-qhd}

As a reminder, the QHD Hamiltonian \eqn{qhd_operator} is 
\begin{equation*}
    H(t) = e^{\varphi_t} \left(-\frac{1}{2} \nabla^2\right) + e^{\chi_t}f(x).
\end{equation*}

The QHD evolution from $t=0$ to $t=T$ is simulated by iteratively applying the product formula, 
\begin{align}\label{eqn:prod_qhd}
    \ket{\Psi_{j+1}} = \Big[ \exp(-i s a_j (-\nabla^2/2)) \exp(-is b_j f)\Big]\ket{\Psi_j},
\end{align}
where $j = 0,...,N-1$, $s=T/N$ is the stepsize, and $a_j=e^{\varphi_{js}}$, $b_j=e^{\chi_{js}}$. We choose the initial state $\ket{\Psi_0}$ to be the the uniform random distribution over the domain $[0,1]^2$. 

We observe that the operator $f$ is diagonal in the position basis, and $\nabla^2$ is diagonalized by the Fourier basis, i.e., $-\frac{1}{2}\nabla^2 = \mathbf{F}_s \mathbf{L}\left(\mathbf{F}_s\right)^{-1}$, where $\mathbf{F}_s$ is the shifted Fourier transform (SFT)\footnote{SFT is a close variant of the Fast Fourier Transform (FFT) and can be readily implemented from FFT.}, and $\mathbf{L}$ is a diagonal matrix with eigenvalues of $(-\nabla^2/2)$. These eigenvalues can be computed from the frequency number of a Fourier basis function \cite[Eq.(26)]{childs2022quantum}. This means that each iteration step can be implemented by 
\begin{align}\label{eqn:prod_qhd_fft}
    \ket{\Psi_{j+1}} = \Big[ \mathbf{F}_s \exp(-i s a_j \mathbf{L})\left(\mathbf{F}_s\right)^{-1} \exp(-is b_j f)\Big]\ket{\Psi_j}.
\end{align}
In the literature, this method is often known as the pseudo-spectral method~\cite{boyd2001chebyshev} and it is a standard numerical method for Schr\"odinger equations. 

\paragraph{Experiment parameters.}
We choose the time-dependent functions $\varphi_t$ and $\chi_t$ so that they are the same as in the ODE model of the Nesterov's accelerated gradient descent algorithm~\cite{su2014differential}. To avoid the singularity at $t=0$ when performing numerical simulation, we slightly modify the first time-dependent function and the QHD Hamiltonian becomes
\begin{equation}\label{eqn:nonconvex_QHD_h}
    H(t) = \left(\frac{2}{s + t^3}\right) \left(-\frac{1}{2} \nabla^2\right) + 2t^3f(x),
\end{equation}
where $s$ is the stepsize. We choose $s=0.001$ for all test instances, and the total evolution time is $T =10$, hence the maximal iteration number $N = T/s = 10^4$. We try two resolutions, 128 and 256, in the spatial discretization of QHD, i.e., we simulate QHD on a $128\times128$ and a $256\times256$ mesh grid over the unit square $[0,1]^2$. The initial state is the equal uniform superposition of all points.

\subsubsection{Quantum Adiabatic Algorithm (QAA)}\label{sec:qaa}

Quantum Adiabatic Algorithm (QAA)~\cite{farhi2001quantum} is a well-known quantum algorithm for general optimization problems. QAA is formulated as a quantum adiabatic evolution described by the Schr\"odinger equation ($0 \le t \le T$):
\begin{align}
    i \frac{\d}{\d t} \ket{\psi(t)} = \Big[\big(1 - g(t)\big)H_0 + g(t)H_1\Big]\ket{\psi(t)},
\end{align}
where $g(t)$ is the annealing schedule such that $g(0) = 0$ and $g(T) = 1$, $H_0$ is the initial Hamiltonian, and $H_1$ is the problem Hamiltonian that encodes the problem of interest. If we choose the initial state $\ket{\psi(0)}$ as the ground state of $H_0$ and let the evolution time $T$ be large enough, the quantum state $\ket{\psi(t)}$ remains the ground state of the system in the whole evolution, and the final state $\ket{\psi(T)}$ corresponds to a solution to the problem of interest.

Conventionally, QAA is defined over discrete domains and thus applicable to discrete optimization problems.\footnote{Note that there is no direct or natural counterpart of gradient descent/QHD in discrete domains.} The performance of QAA heavily depends on the choice of the initial Hamiltonian $H_0$. Prior to QHD, the standard approach for solving continuous problems using QAA is to adopt Radix-2 representation of real numbers (i.e., binary numeral system) in the continuous domain so that the original problem is converted to a discrete optimization problem defined over $\{0,1\}^N$, e.g.,~\cite{potok2021adiabatic, cohen2020portfolio}. In this case, the initial Hamiltonian is chosen to be the ``sum of Pauli-X'' operator:
\begin{align}\label{eqn:qaa_hamiltonian}
    H_0 = -\sum^N_{j=1} \sigma_x^{(j)},
\end{align}
whose ground state is the uniform superposition state, and $H_1$ is a diagonal matrix whose diagonal elements are evaluated from the objective functions. We will refer to this traditional approach of solving continuous problems using QAA as the \textbf{baseline QAA}, or simply QAA.

In our experiment, we simulate the baseline QAA using the leapfrog scheme for time integration. Leapfrog integrators are time-reversible and symplectic \cite{quillen2018integrators}, so lend themselves well to simulation of unitary dynamics.

The choice of the annealing schedule $g(t)$ has a huge impact on the performance of QAA. In practice, it is common to use the linear interpolation schedule $g(t) = \frac{t}{T}$, while sometimes it fails to achieve optimal results. For example, to achieve the quadratic quantum speedup for the unstructured search problem, one needs to use the local adiabatic schedule (see \fig{qaa_schedules}A) instead of the plain linear interpolation~\cite{roland2002quantum}. For our 2D test problems, however, we find that the linear interpolation schedule works better than the local adiabatic schedule, potentially due to the limited evolution time and the continuous nature of the original problem. In \fig{qaa_schedules}B, we show the success probability of the quantum adiabatic evolution (applied to the Levy function). It is clear that the local adiabatic schedule from~\cite{roland2002quantum} does not keep the state in the ground-energy subspace and it is outperformed by the standard linear interpolation schedule at $T = 10$. Actually, we also tried the prolonged evolution time $T = 100$ for the local adiabatic schedule but it is still worse than the linear interpolation one. For other test instances, we also observe similar behavior of the local adiabatic schedule. Therefore, we choose the linear interpolation schedule in our numerical experiment.

\begin{figure}[ht!]
    \centering
    \includegraphics[width=16cm]{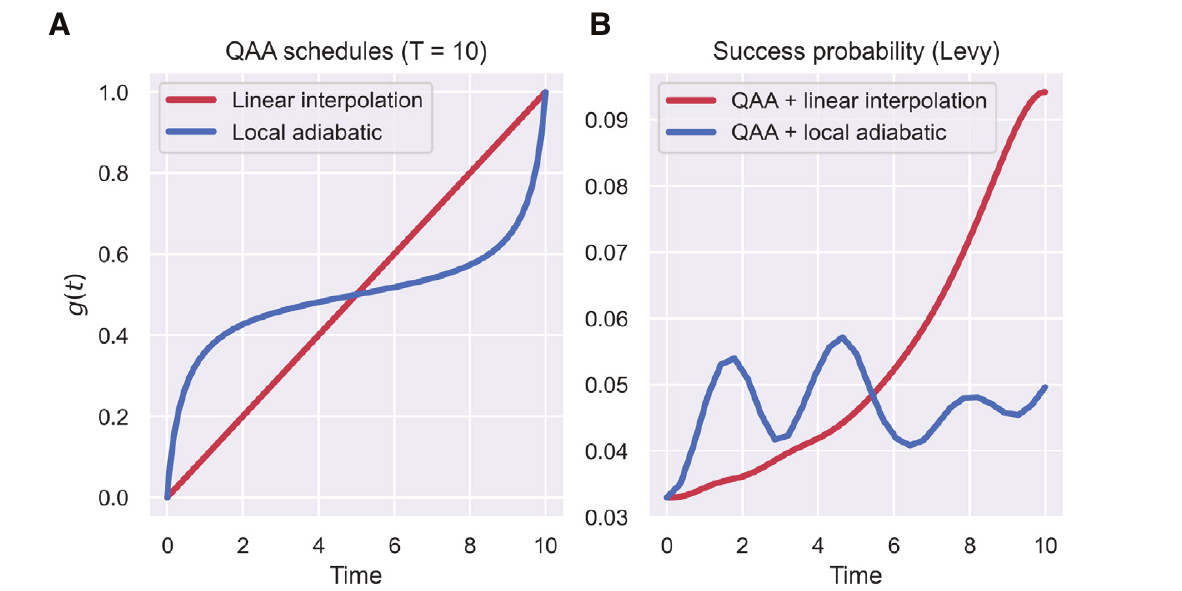}
    \caption[Annealing schedules for QAA]{\textbf{A:} the linear interpolation schedule and the local adiabatic schedule for $T = 10$.
    \textbf{B:} the success probability in the quantum adiabatic evolution. We choose spatial resolution $64$ and simulate QAA by leapfrog integrator.}
    \label{fig:qaa_schedules}
\end{figure}

\begin{remark}
Although QAA is usually applied to discrete problems, one can lift QAA to continuous domains by choosing $H_0$, $H_1$ over a continuous space. From this perspective, QHD can be regarded as a special version of this general definition of QAA (by choosing $H_0$ as the kinetic operator and $H_1$ as the potential operator). However, unlike the continuous-version (baseline) QAA, we have a more refined theoretical analysis of QHD because of the rich structures in the QHD Hamiltonian. Not as many theoretical results are known for QAA due to the weak structure of general discrete optimization problems. Meanwhile, we also observe that QHD converges much faster than the baseline QAA for continuous problems in our empirical study.
\end{remark}

\paragraph{Experiment parameters.} We try two resolutions, 64 and 128, in the simulation of QAA. A higher resolution (e.g., 256) will make the problem too large for classical computers to solve in reasonable time.\footnote{Doubling the resolution in 2D therefore requires four times the storage (so four times the work per step) and eight times the number of steps.} Also, the discrete problem becomes significantly harder with more grid points and QAA needs a super long time to return meaningful solution. To compare with the performance of QHD, we choose the total evolution time $T=10$ for all instances. The initial state is the equal superposition of all grid points, and we choose the linear interpolation schedule $g(t) = t/T$.

\subsection{Classical algorithms}
\subsubsection{Nesterov's accelerated gradient descent algrothms (NAGD)}

In 1983, Nesterov proposed an accelerated gradient method \cite{nesterov1983method}. It begins with initial guesses $x_0 = y_0$ and inductively computes the update sequence,
\begin{subequations}\label{eqn:nesterov}
    \begin{align}
        x_k &= y_{k-1} - s \nabla f(y_{k-1}),\label{eqn:nesterov-a}\\
        y_k &= x_k + \frac{k-1}{k+2}(x_k - x_{k-1}).\label{eqn:nesterov-b}
    \end{align}
\end{subequations}
We will refer to this method as Nesterov's accelerated gradient descent algrothms, or NAGD, in what follows. It has been shown that, for any step size $s \le 1/L$, where $L$ is the Lipschitz constant of $\nabla f$, this method achieves the convergence rate $O(k^{-2})$ for continuously differentiable and convex $f$. This rate is faster than the standard gradient descent. 

In the continuous-time limit (i.e., $s\to 0$), NAGD is modelled as a second-order differential equation, $ \ddot{X} + \frac{3}{t}\dot{X} + \nabla f(X) = 0$; see~\cite{su2014differential}. In the continuous-time model, the effective evolution time at the $k$-th step is $t_k = ks$. Equivalently, this system is described by the following Hamiltonian-mechanical system:
\begin{align}
    H(X,P,t) = \frac{2}{t^3}\left(\frac{1}{2}|P|^2\right) + 2t^3 f(X).
\end{align}

\paragraph{Experiment parameters.} We choose the stepsize $s = 0.001$ for all instances. We also fix the total effective evolution time $T=10$, so the maximal number of iterations is $N=T/s=10^4$. We draw 1000 initial points $x_0=y_0$ uniformly random from the feasible domain $[0,1]^2$ and compute the optimization updates paths by \eqn{nesterov}, respectively.

\subsubsection{Stochastic Gradient Descent (SGD)}

Stochastic gradient descent (SGD) is widely used in continuous  non-convex optimization, and has been very successful in training large-scale systems such as neural networks. SGD is a gradient-based method and it updates each step with the rule:
\begin{align}\label{eqn:sgd_update}
    x_{k+1} = x_k - s \tilde{\nabla}f(x_k),
\end{align}
where $s$ is a fixed stepsize (or learning rate), $\tilde{\nabla} f$ is a noisy gradient evaluated from a single data point or a mini-batch. 

SGD is a discrete-time algorithm. The continuous-time limit of SGD turns out to be a first-order stochastic differential equation (SDE), as derived in \cite{shi2020learning},
\begin{align}
    d X_t = - \nabla f(X_t)\d t + \sqrt{s}\d W_t.
\end{align}
With the SDE model, we can compare the evolution of the distribution in SGD to the evolution in QHD. In particular, we compute the effective evolution time in SGD at the $k$-th step by $t_k = sk$. 

\paragraph{Experiment parameters.} We model the noisy gradient in \eqn{sgd_update} by adding unit Gaussian noise to the exact gradient, i.e., $\tilde{\nabla} f = \nabla f + \mathcal{N}(0, 1)$. We set the stepsize $s=0.001$ for all instances, and we stop the algorithm after $10^4$ iteration steps, i.e., the total effective evolution time for SGD is $T = 10$. We draw 1000 initial points $x_0$ uniformly random from the feasible domain $[0,1]^2$ and compute the optimization updates paths by \eqn{sgd_update}, respectively.

\begin{figure}[ht!]
    \centering
    \includegraphics[width=16cm]{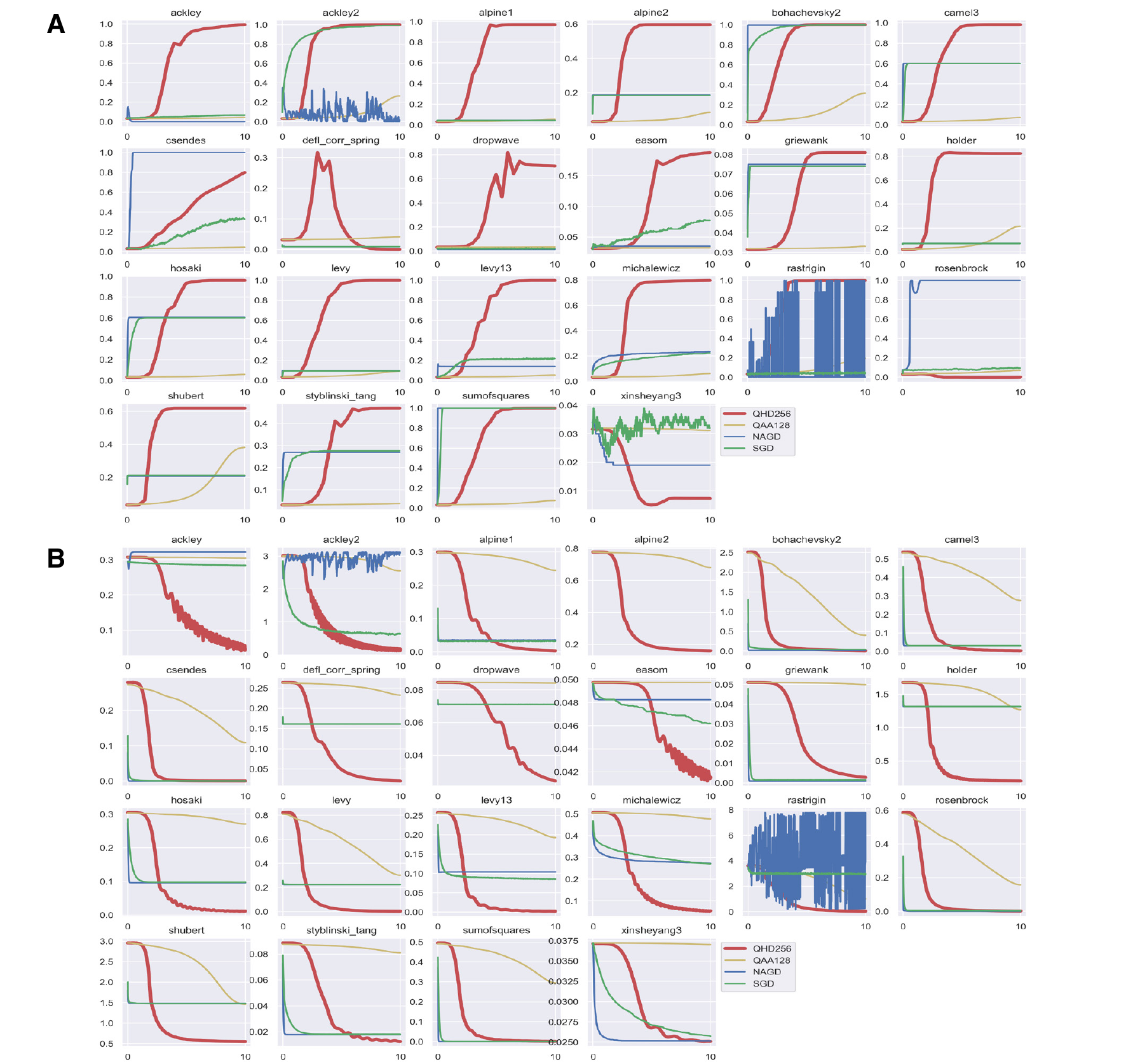}
    \caption[Experiment results for 2D test problems]{The performance of four tested algorithms for all 22 problems. \textbf{A:} success probability, \textbf{B:} loss curve (expected function value). In both panels, the x data in each subplot is the evolution time $t \in [0,10]$.}
    \label{fig:2d_experiment_result}
\end{figure}

\subsection{Experiment results}

As described above, we fix the same total evolution time $T=10$ among all tested algorithms. For Levy function, we make the scatter plot showing the distributions of solutions in panel A of \fig{fig2}. The final success probability (measured at $t=10$) for all 22 instances is shown in panel B of \fig{fig2}. We observe QHD has a higher success probability in most optimization instances. The full success probability data of all 22 instances is shown in \fig{2d_experiment_result}A.

We also find that QHD is more stable than Nesterov and SGD. We use the same stepsize $s = 0.001$ for QHD and the two classical iterative methods. However, we observe that NAGD fails to converge for Ackley2 and Rastrigin functions, while QHD manages to converge for both instances. This observation suggests that our quantization approach not only improves the solution quality, but also the stability of the resultant quantum dynamics in time discretization.

\begin{figure}[ht!]
    \centering
    \includegraphics[width=12cm]{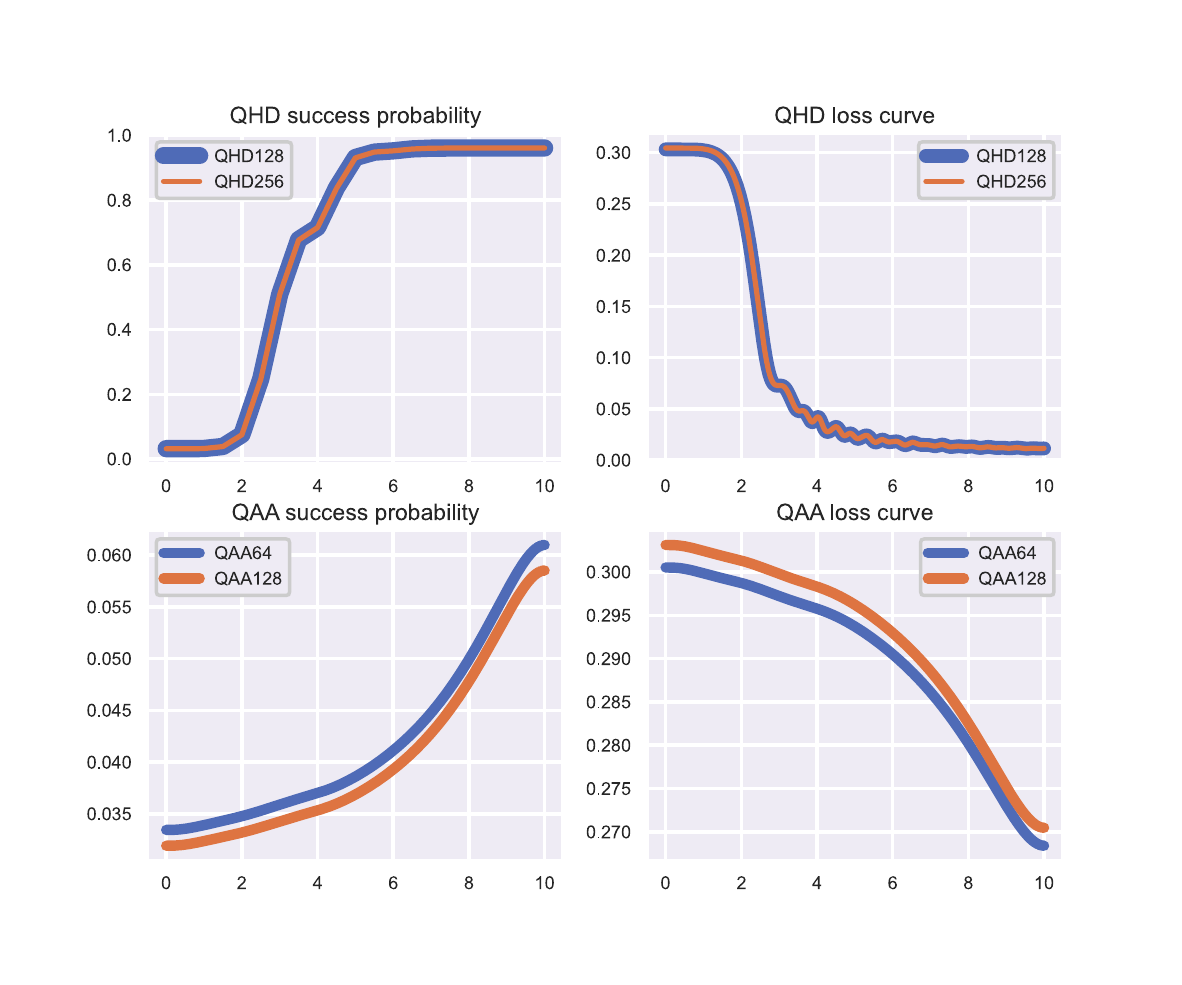}
    \caption[Performance of QHD and QAA in different resolutions]{Performance of QHD and QAA in different rsuesolution (QHD: 128, 256; QAA: 64, 128), illustrated with Hosaki function. The performance of QHD is not affected by the resolution, while QAA does worse when the resolution is higher.}
    \label{fig:compare_qaa_qhd}
\end{figure}

In \fig{compare_qaa_qhd}, we show that the performance of QAA highly depends on the spatial resolution. For Levy function, we see the high resolution (128) in QAA causes worse performance. This observation also holds for several other instances. This is because the Radix-2 representation used in QAA does not preserve the Euclidean topology in the original problem, therefore the discrete problem can become much harder for higher resolution. On the contrary, QHD does not suffer this issue -- the performance of QHD is consistent in both low and high resolution cases.

\section{Three Phases in QHD} \label{sec:three-phase}
In \thm{adiabatic_limit_qhd}, we have shown that there exists time-dependent parameters in QHD such that the convergence to global minimum is guaranteed, regardless of the shape of $f$. This is because the quantum state in QHD stays in the low-energy subspace in the evolution, and the low-energy subspace will eventually settle at the global minimizer of $f$. In this section, we introduce a more refined description of the global convergence in QHD. In particular, we look into the energy exchange between energy levels in the quantum evolution. We observe that there are \textbf{three phases} in QHD evolutions: namely, the \textbf{kinetic} phase, \textbf{global search} phase, and \textbf{descent} phase. We also develop a quantitative analysis for the three-phase picture of QHD. Serving as an explanative framework of QHD, the three-phase picture demystifies why QHD usually does better than QAA in solving non-convex problems.

\begin{figure}[ht!]
    \centering
         \includegraphics[width=1\textwidth]{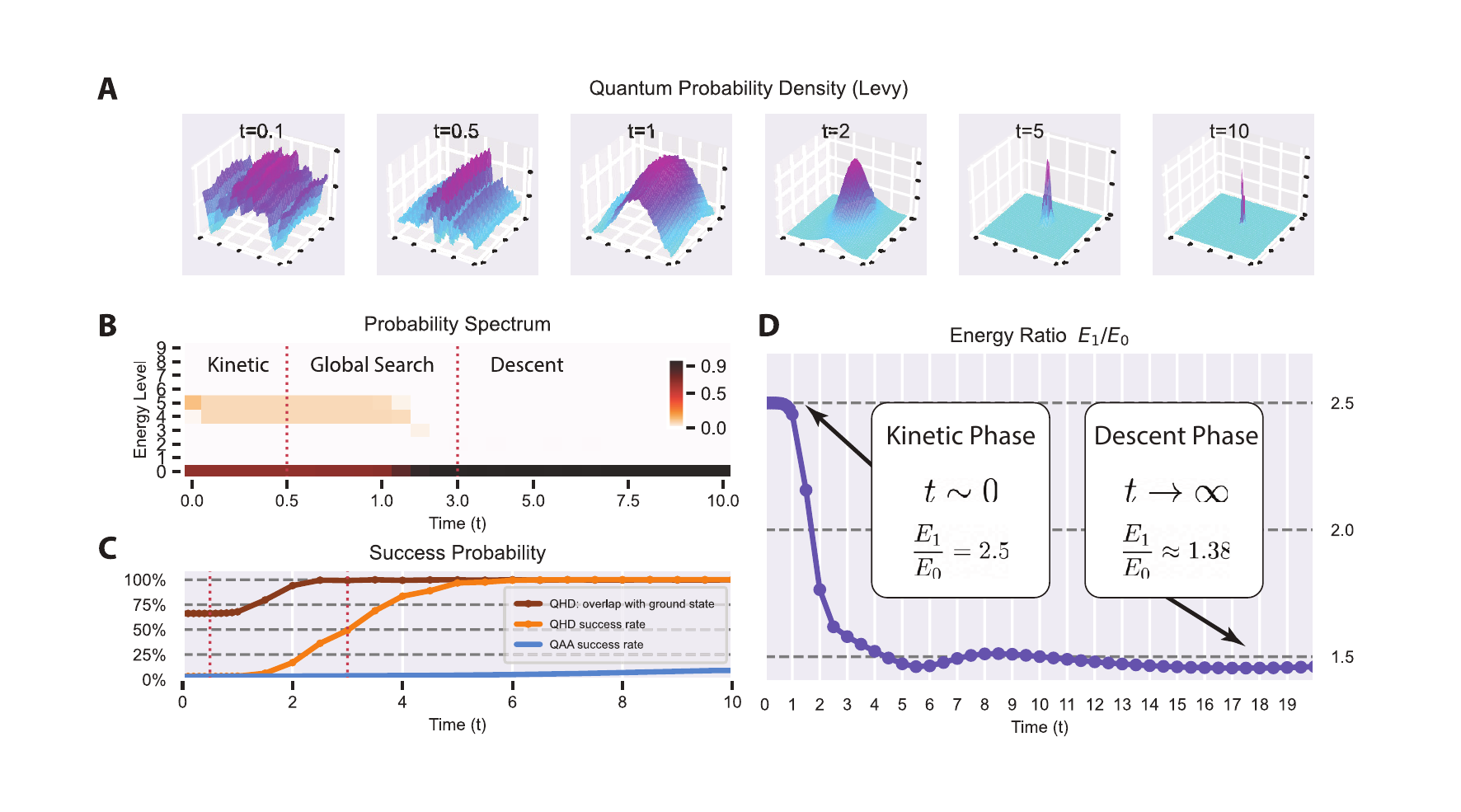}
         \caption[Three-phase picture of QHD]{The three-phase picture of QHD, illustrated with Levy function.}
         \label{fig:three_phase}
    \end{figure}

\subsection{Probability spectrum of QHD}
In \sec{convergence-nonconvex}, we introduce the notion of low- and high-energy subspaces in the QHD evolution. We now generalize this idea and define the \textit{probability spectrum} of QHD.

Let $H(t)$ be the QHD Hamiltonian. Suppose $E_n(t)$ and $\ket{n;t}$ are the $n$-th eigenvalue and eigenstate of $H(t)$, we can express the Hamiltonian as $H(t) = \sum_{n} E_n(t) \ket{n;t}\bra{n;t}$. Here the integer $n$ represents the energy levels of the system. Given a wave function $\ket{\Psi_t}$ at time $t>0$, it can be written as a superposition of eigenstates of $H(t)$: $\ket{\Psi_t} = \sum_n c_n(t)\ket{n;t}$, where $c_n(t) = \braket{n;t}{\Psi_t}$ is the probability amplitude.

\begin{definition}[Probability spectrum]
    The modulus squared of the amplitude (i.e., $|c_n(t)|^2$) represents the probability density of the wave function with respect to the energy eigenstates. We call the sequence $\{|c_n(t)|^2 = |\braket{n;t}{\Psi_t}|^2\}_n$ as the \textit{probability spectrum} of $\ket{\Psi_t}$ at time $t \ge 0$. 
\end{definition}

Since the quantum wave function has unit $L^2$ norm, the probability spectrum of wave functions at any $t \ge 0$ sum up to $1$: $\sum_n |c_n(t)|^2 = 1$. With this definition, the projection of $\ket{\Psi_t}$ onto the $N$-low-energy subspace of Hamiltonian $H(t)$ is the sum of the first $N$ probability spectrum.

The probability spectrum of QHD is a dynamical property of the system, and we can use it to characterize different stages in the quantum evolution. In \fig{three_phase}~(B), we numerically calculate the probability spectrum of QHD for Levy function. We also plot the quantum wave packet at various $t$ in \fig{three_phase}~(A). The wave function data is from the numerical simulation of QHD as in \sec{nonconvex}. The probability spectrum is depicted on a heatmap with horizontal axis being the time span $t$ and vertical axis being the energy levels $n$. Each cell in the heatmap is colored by the corresponding numerical value of $|c_n(t)|^2 = |\braket{n;t}{\Psi_t}|^2$. 

We observe an interesting probability transition pattern in \fig{three_phase}~(B). In the very beginning of QHD, there are two clusters in the probability spectrum: one low-energy cluster at $n=0$ and one high-energy cluster at $n = 4,5$. There are some probability exchange in the high-energy cluster but it does not interact with the low-energy one. After $t = 1$, the high-energy cluster starts to get absorbed into the low-energy subspace. As time goes by, the high-energy cluster evaporates and the quantum state completely sits in the low-energy subspace. The same trend is repeatedly observed in the probability spectrum of QHD (with other test functions). Motivated by the findings, we introduce the three-phase picture of QHD.

\subsection{Three phases in QHD}
We divide the course of QHD evolution into the following three phases based on the behavior of the wave function and the direction of energy/probability flows:

\paragraph{Kinetic phase.}
In the first phase of QHD, the wave function is of ample kinetic energy and it rapidly bounces within the whole search space (see $t=0.1, 0.5$ in \fig{three_phase}~(A)). While the majority of probability spectrum is in the low-energy subspace, a mid- or high-energy cluster in the probability spectrum prevails. The two energy clusters co-exist and almost do not interact. We call this phase the \textit{kinetic} phase of QHD because it is characterized by the mobility of wave functions (as a result of the dominating kinetic energy term).

\paragraph{Global search phase.}
In the second phase of QHD, the kinetic energy in the system starts to drain out. The wave function becomes less oscillatory and shows a selectivity toward the global minimum of $f$ (see $t=1,2$ in \fig{three_phase}~(A)). In the probability spectrum, the high-energy cluster in the wave function is driven toward the low-energy subspace, and this trend does not reverse. We call this phase the \textit{global search} phase because the quantum system manages to locate the global minimum of $f$ after the screening of the whole search domain in the first phase. This phase separates QHD from other classical gradient methods.

\paragraph{Descent phase.}
In the last phase of QHD, the wave function settles and becomes increasingly concentrated near the global minimizer of $f$ (see $t=5, 10$ in \fig{three_phase}~(A)). The wave function stays in the low-energy subsapce. In this phase, the quantum evolution enters the \textit{semi-classical regime} in which the potential energy term dominates. QHD starts to behave like classical gradient descent as it converges to the global minimizer $x^*$. We call this phase the \textit{descent} phase of QHD.

\subsection{Quantitative analysis}
The probability spectrum of QHD witness a structural change only in the global search phase, in which a significant probability transition from the high-energy subspace to the low-energy subspace takes place. While in the kinetic phase and descent phase of QHD, we do not see an outstanding shift in the probability spectrum. This phenomenon can be explained by quantitative arguments.

\subsubsection{Suppressed probability transition in the kinetic and descent phase}

First, we focus on the underlying mechanism in QHD that discourages the probability transition in the kinetic and descent phase. For simplicity, we assume the dimension $d=1$ in the following discussion, while it can be readily generalized to arbitray finite dimensions.

The QHD Hamiltonian takes the form $H(t) = e^{\varphi_t} \left(-\frac{1}{2}\frac{\partial^2}{\partial x^2}\right) + e^{\chi_t}f(x)$. Suppose the Hamiltonian is written as $H(t) = \sum_n E_n(t) \ket{n;t}\bra{n;t}$ where $E_n(t)$, $\ket{n;t}$ are eigenvalues and eigenstates of the system at time $t$. As in the proof of \thm{adiabatic_limit_qhd}, we represent the wave function as $\ket{\Psi_t} = \sum_n c_n(t) e^{i\theta_n(t)}\ket{n;t}$, where $\theta_n(t) = -\int^t_0 E_n(s)~\d s$. Recall from \eqn{ode_form} that the amplitudes $c_n(t)$ are determined by the equations:
\begin{align*}
    \dot{c}_n(t) = \underbrace{-c_n(t) \bra{n;t}\left[\frac{\partial}{\partial t} \ket{n;t}\right]}_{\text{Part~I.}} - \underbrace{\sum_{k\neq n} c_k(t) e^{i(\theta_k - \theta_n)} \frac{\bra{n;t}\dot{H}\ket{k;t}}{E_k(t) - E_n(t)}}_{\text{Part~II.}}.
\end{align*}
As we have shown, Part I does not contribute to the change in amplitudes because $\bra{n;t}\left[\frac{\partial}{\partial t} \ket{n;t}\right]$ is pure imaginary. A further understanding of the amplitude evolution amounts to investigating Part II in the equation.

In the kinetic phase, it expected that the kinetic operator $\left(-\frac{1}{2}\frac{\partial^2}{\partial x^2}\right)$ plays a dominant role in the system, while the potential operator $f(x)$ is minor. In this case, QHD behaves like free particle evolution:
\begin{align}
    H^{kinetic}(t) \propto \left(-\frac{1}{2}\frac{\partial^2}{\partial x^2}\right).
\end{align}
We then approximate the eigenstates of $H(t)$ by those of the kinetic operator. Suppose the QHD is simulated in the unit interval $[0,1]$ (with vanishing boundary conditions $\Psi(0,t) = \Psi(1,t)=0$), eigenstates of the kinetic operator are sine waves: $\ket{n;t} \propto \sin(n\pi x)$. Therefore, we can compute the inner product in Part II:
\begin{align}
    \bra{n;t}\dot{H}\ket{k;t} \propto \dot{\varphi}_t e^{\varphi_t} \int^1_0 \sin(n\pi x) \left(-\frac{1}{2}\frac{\partial^2}{\partial x^2}\right) \sin(k\pi x)~\d x = 0,
\end{align}
for $k \neq n$. Given the non-degeneratcy of $H(t)$, i.e., $E_k(t)- E_n(t) \neq 0$, we find that Part II is actually very close to zero in the kinetic phase. This fact implies that there is no amplitude exchange in the kinetic phase, which is consistent with our numerical evidence. 

In the descent phase, we do not see significant amplitude exchange in the probability spectrum as well. However, the underlying mechanisms are quite different. In this phase, QHD runs into the \textit{semi-classical} regime as the potential operator becomes the major contributor to the quantum dynamics. As shown in \sec{convergence-nonconvex}, the eigenstates of the QHD Hamiltonian can be approximated by those of a harmonic oscillator: 
\begin{align}
    H^{descent}(t) \approx \hat{H}_{HO}(t) = \hat{K}(t) + \hat{V}(t),
\end{align}
where $\hat{K}(t) = e^{\varphi_t}\left(-\frac{1}{2} \frac{\partial^2}{\partial x^2}\right)$ and $\hat{V}(t) = e^{\chi_t} \frac{f''(x^*)}{2} (x-x^*)^2$. As a result, $\dot{H}(t) \approx \dot{\varphi}_t \hat{K} + \dot{\chi}_t \hat{V}$ and we use \lem{expect_potential} to compute the inner product between adjacent energy levels in Part II:
\begin{align}
    \bra{n;t}\dot{H}\ket{n\pm 1;t} = \dot{\varphi}_t \bra{n;t}\dot{K}\ket{n\pm 1;t} + \dot{\chi}_t\bra{n;t}\dot{V}\ket{n\pm 1;t} = 0,
\end{align}
while the contributions from non-adjacent energy levels are minor due to the wider energy gaps (i.e., a bigger denomenator $(E_k - E_n)$). We conclude that, in the descent phase, Part II in the amplitude equation is very small so there is little energy exchange in QHD.

\subsubsection{Energy ratio and phase transitions}
We consider the energy ratio $E_1(t)/E_0(t)$, i.e., the first excited energy $E_1(t)$ to the ground energy $E_0(t)$. The energy ratio serves as a good indicator of the phase transitions in QHD; meanwhile, it also sheds light on the mechanism of probability transition in the global search phase. In what follows, we assume QHD is defined for two-dimensional problems so that we can refer to our experiment results in \sec{nonconvex}. 

We begin with two extreme cases. At the very beginning of the kinetic phase ($t = 0$), the system Hamiltonian is dominated by the kinetic term $(-\nabla^2/2)$. This means that we can ignore the structure of $f$ and push the Hamiltonian to the kinetic limit, $H(0) \approx -\nabla^2/2$. We denote the eigenstates of the kinetic operator $(-\nabla^2/2)$ as $\ket{\psi_{\vect{n}}}$, where the index is $\vect{n} = (n_x,n_y)$ with $n_x,n_y = 1,2,...$:
\begin{align*}
    \ket{\psi_{\vect{n}}} = \sqrt{2}\sin(n_x \pi x)\sin(n_y \pi y),
\end{align*}
and correspondingly, the eigenvalues are $E_{\vect{n}} = (n^2_x+n^2_y)\pi^2/2$. In particular, the ground energy is $E_0 = \pi^2$ and the first excited energy (with degeneracy) is $E_1 = 5\pi^2/2$. The energy ratio is $E_1/E_0 = 2.5$. Clearly, the deviation of the energy ratio from the kinetic limit $2.5$ indicates a transition from the kinetic phase to the global search phase; see \fig{three_phase}~(D).

In the descent phase, the Schr\"odinger operator $H(t)$ enters the semi-classical regime so it can be effectively approximated by a quantum harmonic oscillator (see \sec{convergence-nonconvex}). The eigenvalues of two-dimensional harmonic oscillators are 
\begin{align}\label{eqn:qho_eigvals}
    E_{\vect{n}} = \frac{\hbar}{2} \left[(n_1+1)\omega_1+(n_2+1)\omega_2\right],
\end{align}
with $n_1, n_2 = 0,1,2,...$.\footnote{The $\hbar$, $\omega$ here are determined by the time-dependent parameters and the neighborhood information of $x^*$. Details can be found in \sec{convergence-nonconvex}.} Therefore, the ground energy is $E_0 = \hbar (\omega_1+\omega_2)/2$ and the first excited energy (assuming $\omega_1 \ge \omega_2$) is $E_1 = \hbar (\omega_1+3\omega_2)/2$. Hence, the energy ratio at the semi-classical limit is 
\begin{align}\label{eqn:semiclassical_ratio}
    \lim_{t\to \infty}\frac{E_1}{E_0} = \frac{\omega_1 + 3\omega_2}{\omega_1 + \omega_2}.
\end{align}
The convergence of the energy ratio to \eqn{semiclassical_ratio} indicates the transition from the global search phase to the descent phase; see \fig{three_phase}~(E).

In the global search phase, the energy ratio changes and it has different values than the two extreme cases. Usually, it decreases from the initial value $2.5$ and gradually converges to the semi-classical limit. Intuitively, the shrinking energy gap between $E_0$ and $E_1$ makes the probability transition between energy levels much easier. This can be a reason why the high-energy cluster is absorbed into the low-energy cluster in the global search phase.

In fact, from the energy ratio, we can theoretically predict the times at which phase transition happens. For example, we can approximately predict the two phase-transition times in QHD for Levy functions (see \fig{three_phase}~(B)): the energy ratio significantly deviates from the initial value $2.5$ after $t = 0.5$, which marks the first phase transition (kinetic to global search); after $t = 5$, the energy ratio curve starts to converge to the semi-classical limit, indicating the termination of the global search phase.

\paragraph{Energy ratio for Levy function.}
The Levy function (for dimenson 2) is defined by 
\begin{align}
    f(w_1, w_2) = \sin(\pi w_1)^2 + (w_1-1)^2\left[1+10\sin(\pi w_1+1)^2\right] + (w_2-1)^2\left[1+\sin(2\pi w_2)^2\right],
\end{align}
where $w_j = 1 + (x_j-1)/4$ for $j = 1,2$. When used as an optimization test problem, the Levy function is usually evaluated on the square $[-10,10]^2$. It has a unique global minimum $ f(1,1) = 0$. The Hessian matrix of the Levy function at $(1,1)$ is a diagonal matrix with two positive eigenvalues: $\lambda_1 = (\pi^2 + 1 + 10\sin(1)^2)/8$, and $\lambda_2 = 1/8$. Therefore, the second-order Taylor expansion (quadratic model) of $f$ at the global minimizer $x^*=(1,1)$ is:
\begin{align}
    f(x_1,x_2) \approx \frac{1}{2}\omega^2_1(x_1-1)^2 + \frac{1}{2}\omega^2_2(x_2-1)^2,
\end{align}
where $\omega_1 = \sqrt{\lambda_1}\approx 1.4979$, $\omega_2 = \sqrt{\lambda_2} \approx 0.3536$. By \eqn{semiclassical_ratio}, the energy ratio at the semi-classical limit is $E_1/E_0 \approx 1.3819$, which is quite close to the numerical results shown in \fig{three_phase}~(D).

\subsection{Comparison with QAA}
In our experiment, we test two quantum algorithms (QHD and QAA) and two classical algorithms (NAGD and SGD). Although both quantum algorithms are formulated as continuous-time quantum evolutions, QHD appears to outperform QAA in most test problems. The three-phase picture of QHD lends us a pivot to understand why QHD is better than QAA in solving non-convex optimization problems.

\begin{figure}[ht!]
    \centering
    \includegraphics[width=14cm]{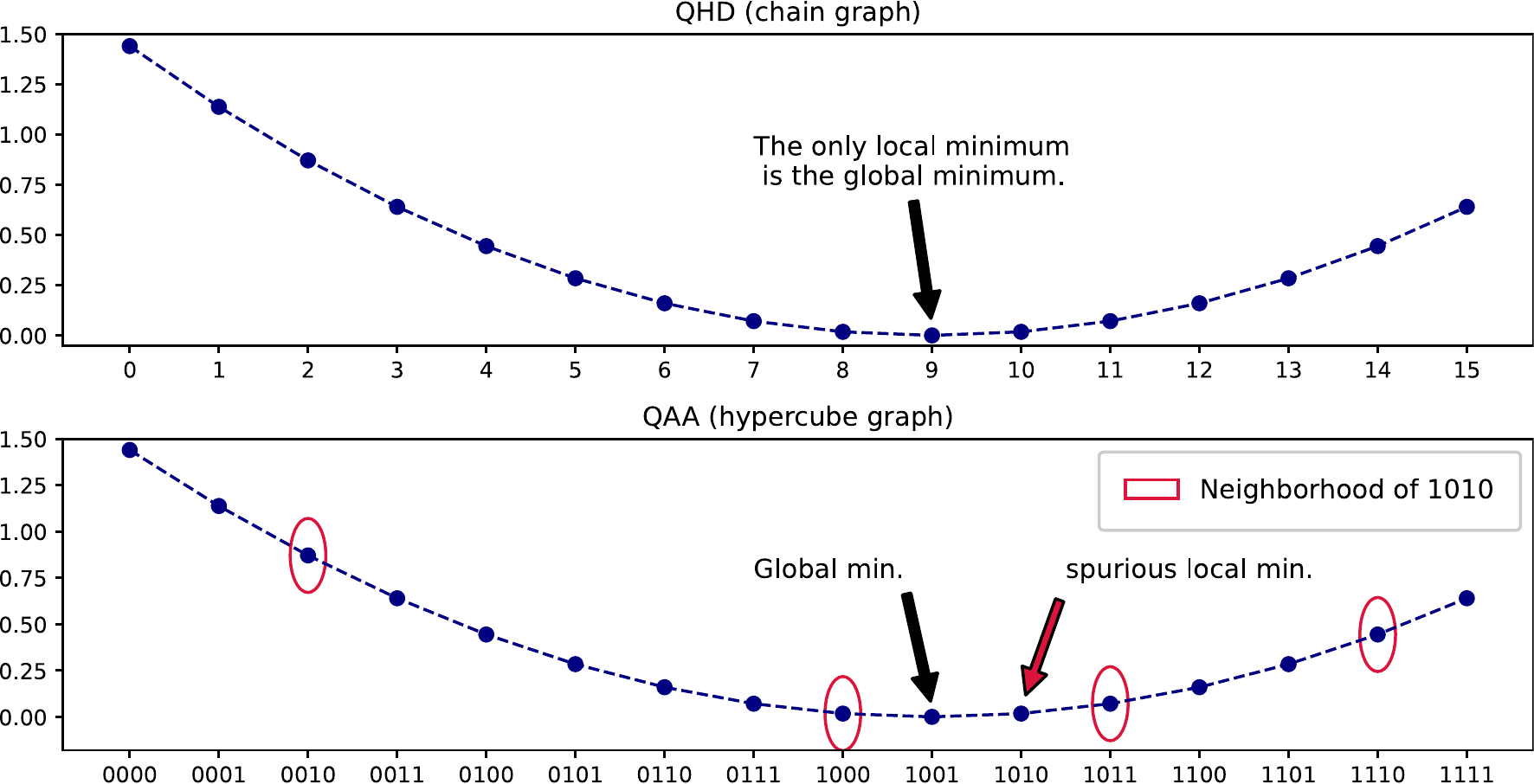}
    \caption[Quantum adiabatic algorithms do not preserve convexity for continuous problems.]{The hypercube search structure in QAA twists the landscape of $f$. \textbf{Upper:} in QHD (problem discretized as an optimization on an one-dimensional chain graph), each node has two neighborhood nodes, and there is only one local minimum. \textbf{Lower:} in QAA (problem discretized as an optimization on a 4-dimensional hypercube graph), the node \texttt{1010} becomes a spurious local minimizer as it has the least function value compared to its neighborhood. This means QAA maps a \textit{convex continuous} problem to a \textit{non-convex discrete} problem.}
    \label{fig:graph_comparison}
\end{figure}

The key difference between the two quantum algorithms is that they have different ``kinetic'' operator in the system Hamiltonian. The kinetic operator $(-\nabla^2/2)$ in QHD comes from the Euclidean distance function in the Bregman-Lagrangian framework. This differential operator stands for the \textit{quantized} Euclidean distance and it allows QHD to \textit{see} the continuous structure of $f$ -- the continuity of $f$ is defined by the Euclidean topology in $\R^d$. When the spatial discretization is applied, the finite difference discretization of the kinetic operator $(-\nabla^2/2)$ makes it resembles the graph Laplacian of a regular lattice. However, in QAA, the ``kinetic operator'' is the initial hamiltonian $H_0$ (see \eqn{qaa_hamiltonian}), interpreted as the adjacency matrix of the hypercube graph. This means QAA regards the task as a discrete optimization problem on a hypercube graph. This misinterpretation of the underlying topology of continuous optimization problems leads to several serious consequences:

\begin{enumerate}
    \item QAA twists the optimization landscape of $f$, making the problem much harder to solve. It is shown in \fig{graph_comparison} that, even for a convex $f$, the hypercube search graph in QAA creates many spurious local minimum in the problem. This makes the discretized problem much more challenging than the original continuous problem. 
    \item The performance of QAA heavily depends on the resolution in the discretization. Since the naive discretization in QAA does not preserve the continuous structure of the original problem, a higher resolution can result in a harder problem, thus even worse performance in QAA. For example, when applied to the Hosaki function, the performance of QAA becomes significantly worse when we choose a higher resolution (128), while QHD does not suffer from this issue, see \fig{compare_qaa_qhd}. We observe similar patterns in other instances as well.
    \item Compared to QHD, QAA does not have the kinetic phase. In the kinetic phase, QHD averages the initial wave function over the whole feasible domain, thus reducing the risk of poor random initializations. In QAA, however, the quantum state is supposed to remain the ground state of the Hamiltonian, to there is no preparatory/warm-up stage.

    \item QAA does not have a descent phase, either. The semi-classical approximation theory fails to hold in QAA so that the minimal spectral gap (i.e., $\lim_{0\le t\le T}E_1(t) - E_0(t)$) can be arbitrarily small when the spatial discretization stepsize shrinks.\footnote{To see this, note that $\lim_{0\le t\le T}E_1(t) - E_0(t) \le E_1(T)-E_0(T)$. At $t=T$, the QAA Hamiltonian is just the problem Hamiltonian $H_P$, which is a discretization of the continuous function $f$. Note that the spectral gap in $H_P$ corresponds to the resolution in the discretization of $f$.} In contrast, the energy gaps in QHD is insensitive to spatial discretization. Therefore, we can not use fast-varying time-dependent functions in QAA since they will drive the quantum state out of the ground-energy subspace. This fact prevents us from achieveing faster convergence in QAA.
\end{enumerate}

\section{Digital Implementation of QHD}
\label{sec:digital-implementation}
In this section, we discuss how to implement QHD on digital (i.e., circuit-model) quantum computers. We assume the access to the quantum evaluation oracle $O_f$ which allows us to query the function value of $f$ in a coherent way:

\begin{assumption}\label{assump:oracle}
    We assume a unitary $O_f: \R^d\times \R \to \R^d \times \R$ such that for any $x \in \R^d$, and $z \in \R$, 
    \begin{align*}
        O_f \ket{x, z} \to \ket{x, f(x)+z}.
    \end{align*}
\end{assumption}

\subsection{Implementing QHD using product formulae}
In the quantum simulation literature, several algorithms have been proposed for simulating real-space quantum dynamics, including Trotter-type algorithms (product formulae) \cite{an2021time}, truncated Taylor series (with finite difference scheme \cite{kivlichan2017bounding} or Fourier spectral method \cite{childs2022quantum}), and interaction picture simulations \cite{childs2022quantum}. Although methods like truncated Taylor series or interaction picture simulations achieve better asymptotic scaling in terms of the error dependence, we will consider Trotter-type algorithms for implementing QHD. Quantum simulation algorithms based on product formulae usually take a simpler form and thus more resource-efficient in practice \cite{childs2018toward}. Besides, product formulae are closely related to a family of numerical schemes known as \textit{symplectic integrators} \cite{hatano2005finding}, which are proven to be the key in the time discretization of the Bregman-Lagrangian framework \cite{betancourt2018symplectic}.

\vspace{4mm}The product-formula-based implementation of QHD basically follows from \cite[Section 2.3]{childs2022quantum}. It takes two steps:

\paragraph{Spatial discretization.} We introduce a standard mesh-grid discretization of the continuous space. Suppose the objective function $f$ is unbounded at infinity, then there must exist a large enough region $\Omega = [a,b]^d$ such that the global minimizer of $f$ is within $\Omega$. We consider a regular mesh grid $\mathcal{M}$ in the domain $\Omega=[a,b]^d$ and divide each edge into $N$ cells, i.e., 
\begin{align}
    \mathcal{M} = \{(x_1,...,x_d): x_j \in \{a, a+\Delta x, ..., b-\Delta x\}\},
\end{align}
where $\Delta x = (b-a)/N$ is the stepsize of the mesh $\mathcal{M}$. Let $N = 2^q$ for an integer $q$, and we assign a $q$-qubit register to represent a single edge: computational basis of the quanutm register enumerates mesh coordinates $x_j \in \{a, a+\Delta x, ..., b-\Delta x\}$. The full full mesh $\mathcal{M}$ is represented by concatenating $d$ such registers: the mesh point $\vect{x} = (x_1,...,x_d)$ is represented by the basis state $\ket{x_1,...,x_d}$. With this mesh-grid discretization, a quantum wave function in the real space $\Psi(x)\colon \Omega \to \C$ can be represented as a quantum state in the register:
\begin{align}\label{eqn:wave-rep}
    \ket{\psi} = \sum_{\vect{x} \in \mathcal{M}}c(x_1,...,x_d)\ket{x_1,...,x_d},~\|\psi\|^2_2 = \sum_{\vect{x} \in \mathcal{M}}|c(x_1,...,x_d)|^2 = 1.
\end{align}

\paragraph{Time discretization.} We leverage the product formula to to decompose the quantum evolution into an alternating sequence of kinetic and potential evolutions. Given a Hamiltonian of the form $H = A+B$, the first-order Trotter-Suzuki formula \cite{suzuki1991general} is defined as $\mathscr{T}_1(t) = e^{-itA}e^{-itB}$. Recall the QHD Hamiltonian takes the form $H(t) = e^{\varphi_t}(-\nabla^2/2) + e^{\chi_t}f$, let $U(0,T)$ be the time-evolution operator in QHD, we can approximate $U(0,T)$ by a sequence of product formulae: 

\begin{align}\label{eqn:trotterized_qhd_evolution_operator}
    U(T,0) \approx \mathcal{T} \prod^{N-1}_{j=0} \Big[ \exp(-i \Delta t a_j (-\nabla^2/2)) \exp(-i\Delta t b_j f)\Big],
\end{align}
where $\mathcal{T}$ is the time-ordering operator, $a_j = e^{\varphi_{t_j}}$, and $b_j = e^{\chi_{t_j}}$.

\vspace{4mm}Assuming the domain $\Omega$ is of periodic boundary condition, one observes that the Laplacian operator $\nabla^2$ is diagonalized by the Fourier basis. This observation leads to a neat implementation of the exponential of kinetic operator $(-\nabla^2/2)$ \cite[Section 3.2]{childs2022quantum}:

\begin{align}
    -\frac{1}{2}\nabla^2 \to \mathbf{F}_s \mathbf{L}\left(\mathbf{F}_s\right)^{-1},~ \exp(-i \Delta t a_j (-\nabla^2/2)) \to \mathbf{F}_s \exp(-i \Delta t a_j \mathbf{L})\left(\mathbf{F}_s\right)^{-1},
\end{align}
where $\mathbf{F}_s$ is the \textit{quantum shifted Fourier transform} (QSFT)\footnote{QSFT is a close variant of the Quantum Fourier Transform (QFT) and can be efficiently implemented on quantum computers. According to \cite[Lemma 5]{childs2021high}, QSFT can be performed to an $k$-dimensional vector with gate complexity $\log(k) \log\log (k)$.}, and $\mathbf{L}$ is a diagonal matrix with eigenvalues of $(-\nabla^2/2)$. These eigenvalues can be computed from the frequency number of a Fourier basis function \cite[Eq.(26)]{childs2022quantum}. On the other hand, the operator $\exp(-i\Delta t b_j f)$ can be readily computed because $f$ is diagonal and can be directly queried from the oracle $O_f$ at any point $\vect{x} \in \mathcal{M}$.

We now summarize the full procedures of the product formula implementation of QHD in \algo{trotter_qhd}. We also analyze the gate complexity of this algorithm, see \thm{gate-complexity}.

\begin{algorithm}[htbp]
    \caption{Implementation of QHD by product formulae}
    \label{algo:trotter_qhd}
    Specify a hypercubic domain $\Omega = [a,b]^d$ where $R$ is prefixed or determined by the knowledge of $f$, and perform the regular-mesh discretization $$\mathcal{M} = \{(x_1,...,x_d): x_j \in \{a, a+\Delta x, ..., b-\Delta x\}, \text{~where~}\Delta x = (b-a)/2^q\}.$$\\
    Initialize a $dq$-qubit quantum register to $\ket{0}^{\otimes dq}$, where the parameter $d$ is the dimension of the problem, and $q$ is a pre-fixed precision parameter. Note that all grid points in $\mathcal{M}$ are enumerated by computational basis $\ket{x_1,...,x_d}$ of the quantum register.\\
    Prepare the \textit{initial guess state} $\ket{\psi_0}$. For instance, it can be the uniform superposition state (i.e., uniform distribution) or a Gaussian state (i.e., Gaussian distribution).\\
    Specify the time-dependent parameters $\varphi_t$, $\chi_t$, the number of iterations $R$, and the learning rate $s$. The (effective) end time is $T = Ns$.\\
    Inductively apply the following update rule to the quantum state for $j = 0, 1, ..., N-1$: $$\ket{\psi_{j+1}} = \mathbf{F}_s \exp(-is a_j \mathbf{L}) \mathbf{F}^{-1}_s \exp(-is b_j \mathbf{V})\ket{\psi_j},$$
    where $a_j = e^{\varphi(t_j)}$, $b_j = e^{\chi(t_j)}$, $t_j = js$, $\mathbf{L}$ is defined as above, and $\mathbf{V}$ is a diagonal operator such that $\mathbf{V}\ket{\vect{x}} = f(\vect{x})\ket{\vect{x}}$ for any $\ket{\vect{x}}\in \mathcal{M}$. We end up with a quantum state $\ket{\psi_N}$.\\
    Measure the quantum state $\ket{\psi_N}$ with computational basis (i.e., position observable), and the outcome is a grid point $\ket{\vect{x}} \in \mathcal{M}$.
\end{algorithm}

\subsection{Asymptotic complexity}
We show that the gate complexity of \algo{trotter_qhd} is linear in the dimension $d$ of the problem and the number of iterations $R$. Our result is exponentially better than the previous quantum gradient descent algorithm~\cite{rebentrost2019quantum} in terms of the iteration number $R$. 

\begin{theorem}\label{thm:gate-complexity}
    Let $d$ be the dimension of the problem, $q$ be a pre-fixed precision parameter, and $R$ be the number of iterations. The gate complexity of \algo{trotter_qhd} is $\tilde{O}(dqR)$.
\end{theorem}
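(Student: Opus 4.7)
The plan is to bound the cost of a single iteration of \algo{trotter_qhd} by $\tilde{O}(dq)$ and then multiply by the $R$ iterations in the outer loop. A single iteration consists of four pieces: (i) a diagonal phase $\exp(-is b_j \mathbf{V})$ in the position basis; (ii) an inverse QSFT $\mathbf{F}_s^{-1}$; (iii) a diagonal phase $\exp(-is a_j \mathbf{L})$ in the momentum basis; (iv) a forward QSFT $\mathbf{F}_s$. I would bound the gate complexity of each of these and add them up.

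First I would handle the two QSFTs. Since the full register factors as $d$ independent one-dimensional registers of $q$ qubits each, the $d$-dimensional QSFT is a tensor product of $d$ one-dimensional QSFTs, each of which admits an implementation with $O(q \log q)$ gates by the standard construction (cf.\ \cite{childs2021high}, quoted in the footnote of the paper). This gives $\tilde{O}(dq)$ gates for steps (ii) and (iv). Next I would handle step (iii): the diagonal matrix $\mathbf{L}$ has eigenvalues which are polynomials (essentially quadratic) in the frequency labels, so on any computational basis state $\ket{k_1,\dots,k_d}$ we can compute the phase angle $s a_j \sum_\ell \lambda(k_\ell)$ into an ancilla register with $\tilde{O}(dq)$ gates using standard reversible arithmetic, apply a phase-kickback rotation, and uncompute; this costs $\tilde{O}(dq)$ in total. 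For step (i), I would invoke \assump{oracle}: one call to $O_f$ writes $f(\vect{x})$ into an ancilla, a controlled phase rotation implements the diagonal exponential, and a second call to $O_f$ uncomputes the ancilla. Under the standard accounting convention this contributes $O(1)$ oracle calls plus $\tilde{O}(dq)$ gates for the ancilla phase-rotation step.

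Summing over the four sub-steps gives $\tilde{O}(dq)$ gates per iteration, and multiplying by the outer loop bound $R$ gives the claimed $\tilde{O}(dqR)$ total. The initial state preparation (uniform superposition or Gaussian) and the terminal computational-basis measurement each cost at most $\tilde{O}(dq)$, which is absorbed into the iteration cost.

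The main obstacle, if any, is bookkeeping rather than ideas: one has to be careful that the implicit constants hidden in the $\tilde{O}$ for the diagonal momentum-space phase $\exp(-is a_j \mathbf{L})$ do not blow up with the precision of $s a_j$, and that the cost of preparing the constants $a_j,b_j$ (which depend on the smooth schedules $\varphi_t,\chi_t$) fits into $\tilde{O}(dq)$. Both are straightforward provided we truncate $a_j,b_j$ to $\mathrm{poly}(dq)$ bits, which is consistent with the $\tilde{O}$ notation. Everything else is a direct application of standard circuit constructions.
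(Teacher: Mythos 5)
Your proposal is correct and follows essentially the same route as the paper's proof: the same per-iteration decomposition into the two diagonal phase evolutions (position-basis phase via two oracle queries with phase kickback, momentum-basis phase via the closed-form eigenvalues of $\mathbf{L}$) and the two QSFTs costed at $\tilde{O}(dq)$ via \cite{childs2021high}, followed by multiplication by the $R$ iterations. Your added remarks on the tensor-product structure of the $d$-dimensional QSFT and on truncating $a_j,b_j$ to polynomially many bits are harmless refinements of the same argument, not a different approach.
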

\begin{proof}
    First, we study the cost of each iteration step. Let $n = dq$ be the total number of qubits. Note that the operator $\mathbf{V}$ is diagonal in the computational basis and is given by the oracle $O_f$, the evolution operator $\exp(-is b_j \mathbf{V})$ can be computed by standard techniques (see for example Rule 1.6 in \cite[Section 1.2]{childs2004quantum}) with gate complexity $\tilde{O}(n)$. The quantum shifted Fourier transform $\mathbf{F}_s$ is performed with gate complexity $O(n\log(n))$ \cite[Lemma 5]{childs2021high}, so is its inverse $\mathbf{F}^{-1}_s$. The operator $\mathbf{L}$ is diagonal in the computational basis and the diagonal entries are evaluated from a closed-form formula \cite[Section 2.1]{childs2022quantum}. Similar as $\exp(-is b_j \mathbf{V})$, the evolution operator $\exp(-is a_j \mathbf{L})$ is computed with gate complexity $\tilde{O}(n)$. It turns out that the gate complexity of each iteration is $\tilde{O}(n)$.
    
    As the algorithm iterations in $R$ steps, and each step has exactly the same gate complexity, the overall gate complexity is $\tilde{O}(dqR)$.
\end{proof}

\section{Analog Implementation of QHD for quadratic programming}
\label{sec:analog-implementation}
In this section, we discuss how to use QHD to solve quadratic programming problems (with box constraints) on analog quantum computers. Quadratic programming with box constraints is formulated as follows:
\begin{subequations}\label{eqn:qp-analog}
\begin{align}
    \text{minimize}&\qquad f(\x)=\frac{1}{2} \x^\top \mathbf{Q} \x + \mathbf{b}^\top \x, \label{eqn:qp-obj-analog}\\
    \text{subject to}
    &\qquad \mathbf{0} \preccurlyeq \x \preccurlyeq \mathbf{1},
\end{align}
\end{subequations}
where $\vect{x}\in \R^d$,  $\mathbf{0}$ and $\mathbf{1}$ are $d$-dimensional vectors of all zeros and all ones, respectively.

\subsection{Quantum Ising Machines}
Among various types of analog quantum simulators, we will be focusing on an important class of simulators whose evolutions are effectively described by quantum Hamiltonians in the Ising form: 
\begin{align}\label{eqn:ising-machine}
    H(t) = - \frac{A(t)}{2} \left(\sum_j \sigma^{(j)}_x\right) + \frac{B(t)}{2} \left(\sum_j h_j \sigma^{(j)}_z + \sum_{j>k} J_{j,k} \sigma^{(j)}_z \sigma^{(k)}_z\right),
\end{align}
where the operator $\sigma^{(j)}_\alpha$ for $\alpha = x,y,z$ stands for Pauli operator acting on the $j$-th qubit, $A(t)$ and $B(t)$ are time-dependent functions. Notable examples in this class include superconducting qubits spin glass simulators \cite{harris2018phase} and Rydberg atoms \cite{saffman2010quantum}. The technology of building such quantum simulators has seen a great advancement in the past decades, leading to several cloud-based commercial quantum processors (e.g., D-Wave, QuEra, etc). In order to facilitate further theoretical discussions, we propose a formal abstraction called the \textit{quantum Ising machine} for quantum simulators with Ising-type Hamiltonian.

\begin{definition}\label{defn:quantum-ising-machine}
    A \textbf{Quantum Ising Machine} embodies an $n$-qubit quantum register which permits the following operations:
    \begin{enumerate}
        \item \textbf{Initialization}: the quantum register is initialized to certain quantum state, e.g., the all-zero state $\ket{0}^{\otimes n}$ or the uniform superposition state $\ket{+}^{\otimes n}=\ket{\psi_0}=\frac{1}{\sqrt{2^n}}\sum_{b\in\{0,1\}^n} \ket{b}$.
        \item \textbf{Simulation}: the evolution of the quantum register is described by the Schrodinger equation,
        \begin{equation}\label{eqn:analog-evolution}
            i h\frac{\d }{\d t} \ket{\psi_t} = H(t) \ket{\psi_t},
        \end{equation}
        where $h$ is the Planck constant, and the Hamiltonian $H(t)$ is given by \eqn{ising-machine}. We assume $A(t)$, $B(t)$, $h_j$, $J_{j,k}$ are programable parameters.
        \item \textbf{Measurement}: the quantum register is measured in computational basis, i.e., we can compute $|\braket{b}{\psi_t}|^2$ with $b \in \{0,1\}^n$ at certain time $t \ge 0$.
    \end{enumerate}
\end{definition}

In practice, it is often convenient to divide $h$ on both sides of \eqn{analog-evolution} and consider time-dependent functions $A(t)/h$ and $B(t)/h$. Usually, the time-dependent functions are given in giga-Hertz (1GHz = $10^9$Hz); accordingly, the simulation time is given in nano-second (1ns=$10^{-9}$s).

\subsection{Implementing QHD on Quantum Ising Machines}\label{sec:restricted_implementation}

\subsubsection{Finite difference discretization of PDEs}
\label{sec:fdm}
Suppose the feasible domain of an optimization problem is the $[0,1]^d$, QHD is formulated as the partial differential equation (PDE):
\begin{align}\label{eqn:standard-qhd}
    i \frac{\partial}{\partial t}\Psi(t,x) = \left[\underbrace{-\frac{e^{\varphi_t}}{2}\nabla^2}_{\text{Kinetic part}} + \underbrace{e^{\chi_t}f(x)}_{\text{Potential part}}\right]\Psi(t,x),
\end{align}
where $\Psi(t,x)\colon [0,T]\times [0,1]^d\to \C$ with the vanishing boundary condition $\Psi(t,x)=0$ for $t\in[0,T]$, $x\in \partial [0,1]^d$. To simulate the real-space dynamics as described by \eqn{standard-qhd}, the first step is to discretize the PDE so that it becomes a finite-dimensional ODE. This goal is achieved by introducing the finite difference method (FDM) \cite{morton2005numerical}.

We begin with the one-dimensional case. Suppose the wave function $\Psi(t,x)\colon [0,T]\times [0,1]$ and $\Psi(t,0)=\Psi(t,1)=0$ for $t\in [0,T]$. We divide the unit interval $[0,1]$ into a $r$-cells mesh grid: $\mathcal{M}=\{a_j =j/r: j = 0,1,...,r\}$. At time $t$, the wave function $\Psi(t, x)$ is sampled on the mesh grid $\mathcal{M}$ and discretized to an $(n+1)$-dimensional vector:
\begin{align}\label{eqn:state-discretize}
    \Psi(x) \mapsto \ket{\phi} = \frac{1}{C} \sum^r_{j=0} \Psi(a_j) \ket{j},
\end{align}
where $C = \sqrt{\sum^r_{j=0} |\Psi(a_j)|^2}$ is a normalization constant such that $\|\phi\| = 1$. Due to the unit probability $\int^1_0|\Psi(x)|^2~\d x = 1$, we have that $C \approx \sqrt{r}$ for large $r$ because of the convergent Riemann sum:
\begin{align}
    \lim_{r\to \infty} \sum^r_{j=0} \frac{|\Psi(x_j)|^2}{r} = \int^1_0|\Psi(x)|^2~\d x = 1.
\end{align}

The second-order derivative of $\Psi$ is approximated by the central finite difference method:
\begin{align}
    \frac{\partial ^2}{\partial x^2} \Psi \Big|_{x=a_j} \approx \frac{\Psi(a_{j+1})-2\Psi(a_{j})+\Psi(a_{j-1})}{(1/r)^2}.
\end{align}
Since the potential operator $f$ acts on the wave function $\Psi$ by pointwise multiplication, the discretization of $f$ turns out to be a diagonal matrix. Therefore, the finite-difference-discretized PDE takes the form:
\begin{align}\label{eqn:1d-qhd-discrete}
    i \frac{\d}{\d t}\ket{\phi_t} = \left[\underbrace{-\frac{e^{\varphi_t}}{2}\hat{L}}_{\text{Kinetic part}} + \underbrace{e^{\chi_t}\hat{F}}_{\text{Potential part}}\right]\ket{\phi_t},
\end{align}
where $\ket{\phi_t}\colon[0,T]\to \C^{r+1}$ is the discretized wave function, $\hat{L}$ and $\hat{F}$ are $(r+1)$-by-$(r+1)$ matrices:
\begin{align}\label{eqn:discretize-pde}
    \hat{L} = \frac{1}{(1/r)^2} \begin{bmatrix} -2 & 1 & & \\
    1 & -2 & 1 & \\
    ...& ... & ... &...\\
    & 1 & -2 & 1\\
    & & 1 &-2\\
    \end{bmatrix},
    \hat{F} = \begin{bmatrix} f(a_0) &  & & \\
         & f(a_1) &  & \\
        ...& ... & ... &...\\
        & & f(a_{r-1}) & \\
        & & & f(a_r)\\
    \end{bmatrix}.
\end{align}

\begin{remark}\label{rem:chain-adjacency}
    We can write the matrix $\hat{L} = \frac{1}{(1/r)^2} (\hat{A}-2 I)$, where $\hat{A}$ is a tridiagonal matrix with zero main diagonals and ones on the first diagonal above and the first diagonal below. $I$ is the identity matrix. It is worth noting that $\hat{A}$ is the adjacency matrix of a 1-dimensional chain graph with $(r+1)$ nodes. 
\end{remark}

In the end of QHD evolution, we want to measure the wave function with position observable $\hat{x}$. Since the wave function $\Psi(t,x)$ is discretized to $\ket{\phi_t} \in \C^{r+1}$, the position observable must be discretized accordingly. It turns out that the discretized position observable, denoted by $\hat{X}$, is of the form:
\begin{align}\label{eqn:observable-discretize}
    \hat{X} = \sum^r_{j=0} a_j \ket{j}\bra{j} = \sum^r_{j=0} \frac{j}{r} \ket{j}\bra{j}.
\end{align}
To see why \eqn{observable-discretize} is valid, it is sufficent to verify that
\begin{align}
    \lim_{r\to \infty} \bra{\phi}\hat{X}\ket{\phi} &= \lim_{r\to \infty} \frac{1}{\sum^r_{j=0} |\Psi(x_j)|^2} \sum^r_{j=0}a_j|\Psi(a_j)|^2,\\
    & =  \int^1_0 x|\Psi(x)|^2~\d x = \braket{\Psi}{\hat{x}|\Psi}.
\end{align}

The finite difference discretization is readily generalized to arbitrary finite dimensions by forming a regular mesh grid over $[0,1]^d$. Introducing a resolution parameter $r \in \Z$, which is the number of cells on each edge of the feasible domain $[0,1]^d$, the mesh can be written as $\mathcal{M}_d=\{0,1/r, 2/r,...,1\}^d$. A continuous-space wave function $\Psi(x)$ is discretized to an $(r+1)^d$-dimensional unit vector:
\begin{align}
    \Psi(x) \mapsto \ket{\phi} = \frac{1}{C} \sum^{r}_{j_1,...,j_d=0} \Psi(a_{j_1},...,a_{j_d}) \ket{j_1,j_2,...,j_d},
\end{align}
where $C = \sqrt{\sum_{j_1,...,j_d} |\Psi(a_{j_1},...,a_{j_d})|^2}$ is a normalization constant, and $\lim_{n\to \infty} C = \sqrt{dr}$.

Similarly, the finite differece discretization of the $d$-dimensional Laplacian operators $\nabla^2 = \sum^d_{k=1}\frac{\partial^2}{\partial x^2_k}$ is $\frac{1}{(1/r)^2} \hat{A}_d$ (ignoring a global phase $-2d$), where $\hat{A}_d$ is the adjacency matrix of a $d$-dimensional regular lattice graph:
\begin{align}\label{eqn:d-lattice-adjacency}
    \hat{A}_d = \sum^d_{k=1} I \otimes ... \otimes \underbrace{\hat{A}}_{\text{the $k$-th operator}} \otimes ... \otimes I,
\end{align}
and $A$ is the the same as defined in \rem{chain-adjacency}. The discretized potential operator $\hat{F}_d$ is given by
\begin{align}
    \hat{F}_d \ket{j_1,j_2,...,j_d}= f(a_{j_1},...,a_{j_1}) \ket{j_1,j_2,...,j_d}.
\end{align}
Moreover, the discretized position observable in $d$ dimension is a $d$-tuple $(\hat{X}^{(1)},...,\hat{X}^{(d)})$, where
\begin{align}\label{eqn:d-observable}
    \hat{X}^{(k)} = I \otimes ... \otimes \underbrace{\hat{X}}_{\text{the $k$-th operator}} \otimes ... \otimes I,
\end{align}
for $k = 1,2,...,d$. It follows that the finite difference discretization of QHD \eqn{standard-qhd} is
\begin{align}\label{eqn:finite-qhd-discrete}
    i \frac{\d}{\d t}\ket{\phi_t} = \left[-\frac{e^{\varphi_t}}{2}\hat{A}_d + e^{\chi_t}\hat{F}_d\right]\ket{\phi_t},
\end{align}
where $\ket{\phi_t}\colon[0,T]\to \C^{(r+1)^d}$ is the discretized wave function. When $f(x)$ is given as a quadratic function as in \eqn{qp-obj-analog}, the discretized $f$ takes the form:
\begin{align}\label{eqn:finite-quadratic-obj}
    \hat{F}_d = \frac{1}{2}\sum_{1\le p,q\le d} Q_{p,q}X^{(p)}X^{(q)} + \sum_{1\le p \le d} b_p X^{(p)},
\end{align}
where $\mathbf{Q} = (Q_{p,q})$ for $1\le p,q\le d$, and $\mathbf{b} = (b_p)$ for $1\le p \le d$.

\subsubsection{A relaxed implementation of QHD for quadratic programming}
Ideally, an analog implementation of QHD for quadratic programming problems amounts to simulate \eqn{finite-qhd-discrete} with the objective function \eqn{finite-quadratic-obj} on analog quantum computers. However, we do not find an exact approach to do so on Quantum Ising Machines because $A_d$ is very different from the off-diagonal part in the Quantum Ising Machine Hamiltonian; see \eqn{ising-machine}. Instead, we consider a \textit{relaxed} version of \eqn{finite-qhd-discrete}:
\begin{align}\label{eqn:relaxed-qhd-discrete}
    i \frac{\d}{\d t}\ket{\phi_t} = \left[-\frac{e^{\varphi_t}}{2}\hat{A}'_d + e^{\chi_t}\hat{F}_d\right]\ket{\phi_t},
\end{align}
in which have 
\begin{align}\label{eqn:d_relaxed_adjacency}
    \hat{A}'_d \coloneqq \sum^d_{k=1} I \otimes \ldots \otimes \underbrace{\hat{A}'}_{\text{the $k$-th operator}} \otimes \ldots \otimes I,
\end{align}
with 
\begin{align}\label{eqn:relaxed_adjacency}
    \hat{A}' \coloneqq \sum^{r}_{j=0}\sqrt{\frac{(j+1)(r+1-j)}{r+1}}\Big(\ket{j+1}\bra{j} + \ket{j}\bra{j+1}\Big).
\end{align}
Note that $\hat{A} = \sum^{r}_{j=0}\Big(\ket{j+1}\bra{j} + \ket{j}\bra{j+1}\Big)$, so the difference between $\hat{A}$ and $\hat{A}'$ is that $\hat{A}'$ has non-uniform tridiagonal elements. 

\begin{remark}
    In fact, the modified kinetic operator $A'$ may bring us extra benefits in solving constrained optimization problems (such as quadratic programming). Unlike the exact discretization $A$, the off-diagonal elements in $A'$ are not identical: except for the two end points $j = 0, r$, the values of the matrix elements are slightly larger than those in $A$. This uneven distribution of middle-range off-diagonal elements in $A'$ means that the wave function will gain more momentum (thus more significant quantum tunneling) near the center of the feasible domain $[0,1]$. Since the solutions of constrained optimization problems are usually close to the boundary of the feasible domain, more quantum tunneling in the middle of the feasible domain is likely to speed up the solution process.
\end{remark}

We will show that the relaxed QHD \eqn{relaxed-qhd-discrete} can be mapped to Quantum Ising Machines through a non-trivial encoding called \textit{Hamming encoding}. In \tab{HammingEncoding}, we summarize the one-to-one correspondence between the mathematical model \eqn{relaxed-qhd-discrete} and the physical implementation on Quantum Ising Machines.

\begin{table}[!ht]
    \centering
    \begin{tabular}{|c|c|c|}
        \hline
            \textbf{Parameters} & \textbf{Relaxed QHD} & \textbf{Quantum Ising Machine}\\
        \hline
            Initial states & $\ket{\phi_0}$: discrete Gaussian &  $\ket{\psi_0}$: uniform superposition\\
        \hline
            Kinetic operator & $-\frac{1}{2}\hat{A}'_d$ & $-\frac{1}{2r^{1/2}}S_x$, see \eqn{relaxed_Ad}\\
        \hline
            Potential operator & $\hat{F}_d$ & $H_P$, see \eqn{ising-format-qp}\\
        \hline
            Time-dependent function 1 & $e^{\varphi_t}$ & $A(t)/h$, see \eqn{time-dep-fun}\\
        \hline
            Time-dependent function 2 & $e^{\chi_t}$ & $B(t)/h$, see \eqn{time-dep-fun}\\
        \hline
            Measurement elements & $\{E^{(k)}_j\}$, see \eqn{Ekj} & $\{\mathscr{M}^{(k)}_j\}$, see \eqn{Mkj}\\
        \hline
    \end{tabular}
    \caption[Dictionary of Hamming encoding]{The dictionary of implementing relaxed QHD on Quantum Ising Machines.}
    \label{tab:HammingEncoding}
\end{table}

In what follows, we elaborate on some parameters shown in \tab{HammingEncoding}. Let $r\in\Z^+$ be a resolution parameter, $\lambda > 0$ be a time-rescaling parameter. $T$ is the total evolution time, $d$ is the dimension of the problem $f$, and $n=dr$ is the total number of qubits required.
 
\begin{itemize}
    \item Initial states:
    \begin{align}
        \ket{\phi_0} = \left(\sum^r_{j=0}\sqrt{\binom{n}{j}/2^r} \ket{j}\right)^{\otimes d},~\ket{\psi_0} = \ket{+}^{\otimes n} = \frac{1}{\sqrt{2^n}}\sum_{b\in{0,1}^n} \ket{b}.
    \end{align}

    \item Time-dependent functions:
        \begin{align}\label{eqn:time-dep-fun}
        A(t)/h = \lambda r^{3/2} e^{\varphi(t)},~B(t)/h = 2\lambda e^{\chi(t)}.
        \end{align}

    \item Problem Hamiltonian:
    \begin{align}\label{eqn:ising-format-qp}
        H_P = \sum^{n-1}_{j=0} h_j \sigma^{(j)}_z + \sum_{\substack{j,k=0 \\j> k}}^{n-1} J_{j,k} \sigma^{(j)}_z\sigma^{(k)}_z + g,
    \end{align}
    with $h_j = -\left[\left(\sum^d_{q=1}Q_{p,q}\right)+2 b_p\right]/4r$ for $(p-1)r \le j \le pr-1$, $J_{j,k} = \frac{Q_{p,q}}{4r^2}$ for $(p-1)r \le j \le pr-1$, $(q-1)r \le k \le qr-1$, and $g = \frac{1}{8}(1+\frac{1}{r})\sum_p Q_{p,p} + \frac{1}{4}\sum_{p>q} Q_{p,q} + \frac{1}{2}\sum_p b_p$.

    \item QHD measurement elements: $\{E^{(k)}_j: 1 \le k \le d, ~0 \le j \le r\}$, in which 
    \begin{align}\label{eqn:Ekj}
        E^{(k)}_j \coloneqq I \otimes ... \otimes \underbrace{E_j}_{\text{the $k$-th operator}} \otimes ... \otimes I,
    \end{align}
    and $E_j \coloneqq \ket{j}\bra{j}$ for $j=0,1,\dots,r$. The measurement elements $\{E^{(k)}_j\}$ are in computational basis and used to determine the ``position'' of the quantum state.\footnote{To see this, we note that the (discretized) position operator $\hat{X}^{(k)} = \sum^r_{j=0} \frac{j}{r} E^{(k)}_j$.}
    
    \item Quantum Ising Machine measurement elements: $\{\mathscr{M}^{(k)}_j: 1 \le k \le d, ~0 \le j \le r\}$, in which
    \begin{align}\label{eqn:Mkj}
        \mathscr{M}^{(k)}_j \coloneqq I \otimes ... \otimes \underbrace{\Pi_j}_{\text{the $k$-th operator}} \otimes ... \otimes I,
    \end{align}
    and $\Pi_j \coloneqq \sum_{|b|=j}\ket{b}\bra{b}$, where $|b|$ is the Hamming weight of the bitstring $b$, $j =0,1,\dots,r$. Note that this set of measurements $\{\mathscr{M}^{(k)}_j\}$ is also in computational basis. 
\end{itemize}

\begin{remark}
    We will have a detailed discussion on how to compute the problem Hamiltonian $H_P$ in \sec{theory_analog}. The representation of $H_P$ in \eqn{ising-format-qp} is known as the \textbf{Ising format} of the problem Hamiltonian. In practice, we prefer to use an equivalent but much simpler description of $H_P$ known as the \textbf{QUBO format}, see \sec{qubo_format}.
\end{remark}

\vspace{4mm}
In the following \algo{analog-qhd}, we show how to use the dictionary \tab{HammingEncoding} to implement QHD on Quantum Ising Machines and solve quadratic programming problems \eqn{qp-analog}. This algorithm can be readily implemented on any real-world instances of Quantum Ising Machines, e.g., the D-Wave quantum sampler. The correctness of \algo{analog-qhd} will be shown in \thm{analog-implement}.

\begin{algorithm}[htbp]
    \caption{Relaxed QHD for Quadratic Programming}
    \label{algo:analog-qhd}
    Specify a resolution parameter $r\in \Z$ and a time-rescaling parameter $\lambda > 0$. Let $T$ be the total evolution time, $d$ be the dimension of $f$, and $n$ = $dr$ be the total number of qubits.\\
    Initialize an $n$-qubit quantum register to the uniform superposition state $\ket{\psi_0} = \ket{+}^{\otimes n}$.\\
    Set the quantum Hamiltonian $H(t) = -\frac{A(t)}{2}S_x + \frac{B(t)}{2}H_P$ with the parameters $A(t)$, $B(t)$ and $H_P$ as in \tab{HammingEncoding}.\\
    Set total running time $t_f = T/\lambda$. Emulate the Quantum Ising Machine for $t_f$.\\
    Measure the final state $\ket{\psi_{t_f}}$ with $\{\mathscr{M}^{(k)}_j\}$ for $k=1,2,\dots,d$ and $j = 0,\dots,r$.\\
    Repeat the simulation and measurement for $M$ times and we obtain a sample of solutions to the quadratic programming problem.\\
\end{algorithm}

\paragraph{Overheads.}
Most steps in \algo{analog-qhd} are either classical or can be efficiently implemented on quantum computers (e.g., evolution, measurement, etc.). The majority of time required to obtain a solution is to run quantum evolution on Quantum Ising Machines, which usually only takes tens to hundreds of micro-seconds. In practice, we also have to consider the time required to embed the problem Hamiltonian into the actual QPU (quantum Processing Unit) due to the limited connectivity; see ``Minor embedding and qubit counts'' in \sec{implementation_details}. Though the minor embedding step can take a relatively long time, we only have to do it once for each problem: the same minor embedding can be reused for all subsequent shots.

\paragraph{Beyond quadratic programming.} 
In \algo{analog-qhd}, we only consider the implementation of QHD with a quadratic potential function and a box constraint. Although our method does not directly apply to higher-order polynomial optimization, it is still possible to consider the \textit{quadratization} technique \cite{dattani2019quadratization} so that higher-order polynomials or non-linear functions are reduced to quadratic problems. However, it is worth noting that quadratization can introduce substantial overheads in the algorithm.

\subsection{Theory of analog implementation}\label{sec:theory_analog}
We call the Hamiltonian that we want to simulate as \textit{target Hamiltonian}, and the analog quantum simulator is specified by a \textit{simulator Hamiltonian}. When the simulator Hamiltonian does not match the target Hamiltonian, e.g., the QHD Hamiltonian \eqn{finite-qhd-discrete} is of tridiagonal form and it does not fit into the Quantum Ising Machine Hamiltonian \eqn{ising-machine}, we have to \textit{embed} the problem Hamiltonian into the simulator Hamiltonian via a technique called \textit{subspace encoding}~\cite{cubitt2018universal}.

\subsubsection{Subspace encoding}
Let $H$, $\simH$ be the problem Hamiltonian and the simulator Hamiltonian. We denote $\H$ (or $\Hs$) for the Hilbert space of the problem (or simulator) Hamiltonian. Let $\Herm(\H)$ (or $\Herm(\Hs)$) be the set of Hermitian operators over $\H$ (or $\Hs$). Given a linear subspace $\S \subset \Hs$, $P_{\S}$ denotes the projector onto $\S$; clearly, we have $I = P_{\S} + P_{\S^\perp}$, where $I$ is the identity map on the Hilbert space $\Hs$. 

\begin{definition}
    Given a linear operator $A$ on $\Hs$, we say $\S \subset \Hs$ is an invariant subspace of $A$ if $A$ maps the subspace $\S$ to itself. In other words, $\S$ is an invariant subspace of $A$ if
    \begin{align}
        P_{\S^\perp} A P_{\S} = 0.
    \end{align}
    If $\S$ is an invariant subspace of $A$, we denote $A|_{\S}$ as the restriction of $A$ on the subspace $\S$. 
\end{definition}

\begin{definition}
    Given a subspace $\S \subset \Hs$, we say an isometry $V\colon \H \to \S$ gives a \textbf{subspace encoding} of the quantum Hamiltonian $H$ into the simulator Hamiltonian $\widetilde{H}$ if (1) $\S$ is an invariant subspace of $\widetilde{H}$, and (2) $V^\dagger\left(\widetilde{H}|_{\S}\right) V = H$.
\end{definition}

\begin{lemma}\label{lem:subspace}
    Suppose the isometry $V\colon \H \to \S$ gives a subspace encoding of $H\in \Herm(\H)$ into $\simH \in \Herm(\Hs)$. Let $\mathcal{U}(t) = e^{-iHt}$ be the time evolution operator of $H$, and $\widetilde{\mathcal{U}}(t) = e^{-i\simH t}$ be the time evolution operator of $\simH$, we have
    \begin{align}
        \mathcal{U}(t) = V^\dagger \left[\widetilde{\mathcal{U}}(t)|_{\S}\right] V.
    \end{align}
\end{lemma}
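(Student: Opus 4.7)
The plan is to pass from the defining algebraic identity $V^\dagger(\widetilde{H}|_{\mathcal{S}})V=H$ to its exponential counterpart via the power series of the matrix exponential. Before doing that, I would first confirm that $\widetilde{\mathcal{U}}(t)|_{\mathcal{S}}$ is a genuine operator on $\mathcal{S}$. Because $\mathcal{S}$ is an invariant subspace of $\widetilde{H}$, a straightforward induction gives $P_{\mathcal{S}^\perp}\widetilde{H}^n P_{\mathcal{S}} = 0$ for every $n\ge 0$; summing the Taylor series of $e^{-i\widetilde{H}t}$ then yields $P_{\mathcal{S}^\perp}\widetilde{\mathcal{U}}(t)P_{\mathcal{S}}=0$, so $\widetilde{\mathcal{U}}(t)$ leaves $\mathcal{S}$ invariant and one has $\widetilde{\mathcal{U}}(t)|_{\mathcal{S}} = \exp(-it\,\widetilde{H}|_{\mathcal{S}})$.

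The heart of the argument is then to prove the intertwining identity $V^\dagger(\widetilde{H}|_{\mathcal{S}})^n V = H^n$ for all $n\in\mathbb{N}$ by induction on $n$. The base case $n=0$ is $V^\dagger V = I_{\mathcal{H}}$, which holds because $V$ is an isometry. For the inductive step I insert the resolution of the identity on $\mathcal{S}$, namely $I_{\mathcal{S}} = VV^\dagger$ (valid since $V$ is an isometry \emph{onto} $\mathcal{S}$), between two consecutive copies of $\widetilde{H}|_{\mathcal{S}}$:
$$V^\dagger(\widetilde{H}|_{\mathcal{S}})^n V \;=\; V^\dagger(\widetilde{H}|_{\mathcal{S}})\,(VV^\dagger)\,(\widetilde{H}|_{\mathcal{S}})^{n-1} V \;=\; \bigl[V^\dagger(\widetilde{H}|_{\mathcal{S}})V\bigr]\bigl[V^\dagger(\widetilde{H}|_{\mathcal{S}})^{n-1}V\bigr] \;=\; H\cdot H^{n-1} \;=\; H^n,$$
using the defining property of the subspace encoding and the induction hypothesis.

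Finally I would sum the Taylor series. Since $V^\dagger(\cdot)V$ is bounded and linear, it commutes with the absolutely convergent series defining the exponential, giving
$$V^\dagger\bigl[\widetilde{\mathcal{U}}(t)|_{\mathcal{S}}\bigr]V \;=\; \sum_{n=0}^\infty \frac{(-it)^n}{n!}\,V^\dagger(\widetilde{H}|_{\mathcal{S}})^n V \;=\; \sum_{n=0}^\infty \frac{(-it)^n}{n!}\,H^n \;=\; \mathcal{U}(t),$$
which is the claim. The only subtle point — and the main conceptual obstacle — is the identity $VV^\dagger = I_{\mathcal{S}}$: this is the reason the encoding must be an isometry \emph{onto} $\mathcal{S}$ (equivalently, a unitary $\mathcal{H}\to\mathcal{S}$), rather than merely into $\widetilde{\mathcal{H}}$ with range contained in $\mathcal{S}$. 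If this surjectivity were dropped, the lemma would need an extra projector $VV^\dagger$ in the middle and the clean statement would fail; reading this condition into the definition of a subspace encoding (where it is natural since $\dim\mathcal{H}=\dim\mathcal{S}$) is what makes the induction step collapse cleanly.
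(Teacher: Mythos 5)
Your proof is correct and takes essentially the same route as the paper's: restrict the evolution operator to the invariant subspace (so that $\widetilde{\mathcal{U}}(t)|_{\mathcal{S}}=\exp(-it\,\widetilde{H}|_{\mathcal{S}})$), expand the exponential as a power series, and convert it term by term into powers of $H$ using the encoding identity. You are in fact more explicit than the paper at the one delicate point: the paper silently replaces $V^\dagger(\widetilde{H}|_{\mathcal{S}})^k V$ by $\bigl(V^\dagger(\widetilde{H}|_{\mathcal{S}})V\bigr)^k$, which is precisely your induction step requiring $VV^\dagger = I_{\mathcal{S}}$, i.e.\ that $V$ is an isometry onto $\mathcal{S}$.
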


\begin{proof}
    Since $\S$ is an invariant subspace of $\simH$, we split the Hamiltonian $\simH$ as a direct sum: $\simH = \simH|_{\S} \oplus \simH|_{\S^\perp}$. As a consequence,
    \begin{align*}
        \widetilde{\mathcal{U}}(t) = e^{-i\simH t} = e^{-i(\simH|_{\S})t} \oplus e^{-i(\simH|_{\S^\perp})t}.
    \end{align*}
    On the other hand, we write $ \widetilde{\mathcal{U}}(t)  = \widetilde{\mathcal{U}}(t)|_{\S} \oplus\widetilde{\mathcal{U}}(t)|_{\S^\perp}$. Comparing the two identities, we observe that the projection onto $\S$ commutes with exponentiation, i.e., 
    \begin{align*}
        e^{-i(\simH|_{\S})t} = \widetilde{\mathcal{U}}(t)|_{\S}.
    \end{align*}
    It follows that
    \begin{align*}
        V^\dagger \left[\widetilde{\mathcal{U}}(t)|_{\S}\right] V &= V^\dagger \left[e^{-i(\simH|_{\S})t}\right] V = V^\dagger \left[\sum^\infty_{k=0} \frac{1}{k!} \left(-i(\simH|_{\S})t\right)^k\right] V\\
        &= \sum^\infty_{k=0} \frac{1}{k!} \left(-iV(\simH|_{\S})V^\dagger t\right)^k = \sum^\infty_{k=0} \frac{1}{k!} \left(-iH t\right)^k = \mathcal{U}(t).
    \end{align*}
\end{proof}

The following proposition shows that it is possible to simulate the target Hamiltonian $H$ by emulating a larger simulator Hamiltonian $\simH$ that encodes $H$.

\begin{proposition}[Analog simulation via subspace encoding]\label{prop:analog-simulation}
    Suppose the isometry $V\colon \H \to \S$ gives a subspace encoding of $H\in \Herm(\H)$ into $\simH \in \Herm(\Hs)$.
    Given an initial state $\ket{\psi_0} \in \H$, we write $\ket{\tilde{\psi}_0} = V \ket{\psi_0} \in \Hs$. The quantum evolution in $\H$ (or $\Hs$) is specified by $\ket{\psi_t} = e^{-iHt}\ket{\psi_0}$ (or $\ket{\tilde{\psi}_t} = e^{-i\simH t}\ket{\tilde{\psi}_0}$). Let $O \in Herm(\H)$ be a quantum observable in $\H$, we define a new observable $\tilde{O} \coloneqq V O V^\dagger$ in $\Hs$. Then, for any $t \ge 0$, we have
    \begin{align}
        \bra{\psi_t} O \ket{\psi_t} = \bra{\tilde{\psi}_t} \tilde{O} \ket{\tilde{\psi}_t}.
    \end{align}
\end{proposition}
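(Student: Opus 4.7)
The plan is to reduce the proposition to \lem{subspace} together with the isometry property $V^\dagger V = I_{\H}$. The key observation is that because $V$ maps into the invariant subspace $\S$, the encoded initial state $\ket{\tilde{\psi}_0} = V\ket{\psi_0}$ never leaves $\S$ under the simulator dynamics, so all computations can be carried out inside $\S$ where the restriction $\simH|_{\S}$ is (by assumption) equivalent to $H$.

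First I would upgrade \lem{subspace} to an intertwining identity $V\,e^{-iHt} = e^{-i\simH t}\,V$ valid for all $t \ge 0$. Starting from $e^{-iHt} = V^\dagger\left[\widetilde{\mathcal{U}}(t)|_{\S}\right] V$ and left-multiplying by $V$, I would use the isometry relation $V V^\dagger = P_{\S}$ together with $P_{\S}\,\widetilde{\mathcal{U}}(t)|_{\S} = \widetilde{\mathcal{U}}(t)|_{\S}$ to obtain $V e^{-iHt} = \widetilde{\mathcal{U}}(t)|_{\S}\, V$. Since $V$ maps into $\S$ and $\S$ is invariant under $\simH$, the operator $\widetilde{\mathcal{U}}(t)|_{\S}$ agrees with $\widetilde{\mathcal{U}}(t)$ on the image of $V$, giving $V e^{-iHt} = \widetilde{\mathcal{U}}(t)\, V$. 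Applying both sides to $\ket{\psi_0}$ yields the state-level identity $\ket{\tilde{\psi}_t} = V\ket{\psi_t}$.

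Second, I would plug this into the expectation value on the simulator side:
\begin{align*}
\bra{\tilde{\psi}_t}\tilde{O}\ket{\tilde{\psi}_t}
= \bra{\psi_t} V^\dagger (V O V^\dagger) V \ket{\psi_t}
= \bra{\psi_t}\, (V^\dagger V)\, O\, (V^\dagger V)\, \ket{\psi_t}
= \bra{\psi_t} O \ket{\psi_t},
\end{align*}
where the last step uses $V^\dagger V = I_{\H}$, which holds because $V$ is an isometry.

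There is no serious obstacle here; the only subtlety (more than a difficulty) is the bookkeeping between $\widetilde{\mathcal{U}}(t)|_{\S}$ as an operator on $\S$ and $\widetilde{\mathcal{U}}(t)$ as an operator on all of $\Hs$. This is already resolved by the direct-sum decomposition $\simH = \simH|_{\S} \oplus \simH|_{\S^\perp}$ used in the proof of \lem{subspace}, so the entire argument reduces to invoking that lemma and then a one-line calculation using the isometry condition.
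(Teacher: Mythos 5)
Your proposal is correct and takes essentially the same route as the paper: both proofs reduce to \lem{subspace} plus the isometry relation $V^\dagger V = I_{\H}$, the paper doing so in a single chain of equalities by grouping $V^\dagger e^{\pm i \simH t} V = e^{\pm i H t}$. Your repackaging of the lemma as the intertwining identity $V e^{-iHt} = e^{-i\simH t}V$ (equivalently, the state-level relation $\ket{\tilde{\psi}_t} = V\ket{\psi_t}$) is an equivalent formulation that merely makes explicit the restriction bookkeeping the paper leaves implicit.
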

\begin{proof}
    Note that $\ket{\tilde{\psi}_t} = e^{-i\simH t}\ket{\tilde{\psi}_0}$ and $\tilde{O} = V O V^\dagger$, we have
    \begin{align*}
        \bra{\tilde{\psi}_t} \tilde{O} \ket{\tilde{\psi}_t} &= \bra{\tilde{\psi_0}}e^{+i\simH t} \left(V O V^\dagger\right) e^{-i\simH t}\ket{\tilde{\psi_0}}\\
        &= \bra{\psi_0} V^\dagger e^{+i\simH t} V O V^\dagger e^{-i\simH t}V\ket{\psi_0}\\
        &= \bra{\psi_0} \left(V^\dagger e^{+i\simH t} V \right) O \left(V^\dagger e^{-i\simH t}V\right)\ket{\psi_0}\\
        &= \bra{\psi_0} e^{+iHt} O e^{-iHt}\ket{\psi_0} =  \bra{\psi_t} O \ket{\psi_t}.
    \end{align*}
    Note that we invoke \lem{subspace} in the second to last step.
\end{proof}
\begin{remark}
    \prop{analog-simulation} is formulated for time-independent target Hamiltonian $H$. Actually, it also applies to time-dependent target Hamiltonian $H(t)$ as long as $V$ is a subspace encoding of $H(t)$ into $\simH(t)$ for any $t \ge 0$. Also, it is clear that $\ket{tilde{\psi}_t}\in S$ for any $t \ge 0$ because $S$ is an invariant subspace of $\simH(t)$. 
\end{remark}

\subsubsection{Hamming encoding}\label{sec:hamming-encoding}
Now, we introduce a specific subspace encoding named \textit{Hamming encoding} that will be used to embed the (relaxed) QHD Hamiltonian \eqn{relaxed-qhd-discrete} into Quantum Ising Machines.

\begin{definition}[Hamming states]
    The Hilbert space of $n$ qubits is $\Hs = \C^{2^n}$ with the standard computational basis $\{\ket{b}:b\in \{0,1\}^n\}$.
    The Hamming weight of a computational basis $\ket{b}$ is the number of ones in the bit-string $b$, denoted by $|b|$.

    Given an integer $j=0,1,...,n$, we define the $j$-th \textbf{Hamming state} as the uniform superposition over all computational basis of Hamming weight $j$, i.e., 
    \begin{align}
        \ket{H_j} \coloneqq \frac{1}{\sqrt{C_j}}\sum_{|b|=j}\ket{b},
    \end{align}
    where $C_j = \binom{n}{j}$.
\end{definition}

\begin{lemma}\label{lem:hamming-state}
    Hamming states are of norm $1$ and mutually orthorgonal, i.e., for $j,k = 0,1,...,n$,
    \begin{align}
        \braket{H_j}{H_k} = \begin{cases}
            1 & j = k\\
            0 & j \neq k\\
        \end{cases}
    \end{align}
\end{lemma}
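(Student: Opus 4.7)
The plan is to unfold both Hamming states in the computational basis and use orthonormality of that basis. Writing
\begin{align*}
\braket{H_j}{H_k} = \frac{1}{\sqrt{C_j C_k}}\sum_{\substack{|b|=j \\ |b'|=k}}\braket{b}{b'},
\end{align*}
I would invoke the fact that computational basis states are orthonormal, i.e.\ $\braket{b}{b'}=\delta_{bb'}$, so only terms with $b=b'$ survive.

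The case split is then automatic. If $j\neq k$, then no bitstring $b$ can satisfy both $|b|=j$ and $|b|=k$ simultaneously, so the constraint $b=b'$ is incompatible with the summation ranges and the double sum is empty, giving $\braket{H_j}{H_k}=0$. If $j=k$, the surviving terms are exactly the pairs $(b,b)$ with $|b|=j$; there are $C_j=\binom{n}{j}$ such pairs, each contributing $1$, so the sum equals $C_j$ and the normalization $1/\sqrt{C_j C_j}=1/C_j$ yields $\braket{H_j}{H_j}=1$.

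There is essentially no obstacle here; the lemma is a direct bookkeeping exercise whose only purpose is to certify that $\{\ket{H_j}:j=0,1,\dots,n\}$ is an orthonormal set, thereby justifying that the span $\mathcal{S}_H$ introduced earlier is a well-defined $(n+1)$-dimensional subspace of $\widetilde{\mathcal{H}}=\mathbb{C}^{2^n}$ on which the isometry $V:\mathbb{C}^{n+1}\to\mathcal{S}_H$ given by $\ket{j}\mapsto\ket{H_j}$ can be built for the subsequent subspace-encoding argument. The only subtlety to flag is the convention $C_j=\binom{n}{j}$ and the (implicit) assumption $0\le j,k\le n$, so that all binomial coefficients are positive and the normalization $1/\sqrt{C_j}$ makes sense.
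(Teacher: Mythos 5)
Your proof is correct and follows essentially the same route as the paper's: expand both Hamming states in the computational basis, use orthonormality of the basis states, and split on whether $j=k$ (yielding $C_j$ surviving diagonal terms cancelled by the normalization) or $j\neq k$ (yielding an empty sum since no bitstring has two distinct Hamming weights). The only difference is expository; you spell out why the cross terms vanish slightly more explicitly than the paper does.
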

\begin{proof}
    By the definition of Hamming states, 
    \begin{align*}
        \braket{H_j}{H_k} = \frac{1}{\sqrt{C_j C_k}} \sum_{|b|=j,|b'|=k}\braket{b}{b'}.
    \end{align*}
    If $j \neq k$, we have $\braket{b}{b'} = 0$ so that $\braket{H_j}{H_k}=0$. If $j = k$, $\braket{H_j}{H_k}=(\sum_{|b|=j}1)/C_j = 1$.
\end{proof}

We consider the subspace $\S \subset \Hs$ spanned by all $(n+1)$ Hamming states. By \lem{hamming-state}, the set of Hamming states actually form an orthonormal basis of $\S$. 

\begin{lemma}\label{lem:uni-gaussian}
    The uniform superposition state $\ket{s} = \ket{+}^{\otimes n}= \frac{1}{\sqrt{2^n}} \sum_{b\in\{0,1\}^n}\ket{b}$ is in the subspace $\S$. In particular, we have
    \begin{align}
        \ket{s} = \sum^n_{j=0} \sqrt{\frac{C_j}{2^n}} \ket{H_j}.
    \end{align}
\end{lemma}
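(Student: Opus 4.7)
The plan is to simply regroup the computational-basis expansion of $\ket{s}$ by Hamming weight and then invoke the definition of the Hamming states. This is a direct calculation, so I would treat it as a one-step bookkeeping exercise rather than a structural argument.

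First I would partition the computational basis $\{0,1\}^n$ according to Hamming weight, writing
\begin{align*}
\ket{s} = \frac{1}{\sqrt{2^n}} \sum_{b\in\{0,1\}^n}\ket{b} = \frac{1}{\sqrt{2^n}} \sum_{j=0}^{n} \sum_{|b|=j} \ket{b}.
\end{align*}
Then I would substitute the definition $\sum_{|b|=j}\ket{b} = \sqrt{C_j}\,\ket{H_j}$ (with $C_j = \binom{n}{j}$) into the inner sum, obtaining
\begin{align*}
\ket{s} = \frac{1}{\sqrt{2^n}} \sum_{j=0}^{n} \sqrt{C_j}\,\ket{H_j} = \sum_{j=0}^{n}\sqrt{\tfrac{C_j}{2^n}}\,\ket{H_j},
\end{align*}
which is the claimed identity. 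Membership in $\mathcal{S}$ is then immediate since $\mathcal{S}$ is the span of $\{\ket{H_0},\ldots,\ket{H_n}\}$. As a sanity check I would verify the normalization by using the binomial identity $\sum_{j=0}^n C_j = \sum_{j=0}^n \binom{n}{j} = 2^n$ together with the orthonormality of the Hamming states from \lem{hamming-state}, which gives $\|\ket{s}\|^2 = \sum_j C_j/2^n = 1$, consistent with the known norm of $\ket{+}^{\otimes n}$.

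There is no real obstacle here: the argument is purely combinatorial regrouping, and the only ingredients are the definition of $\ket{H_j}$ and the partition of $\{0,1\}^n$ by weight. The lemma is best viewed as setting up the initial-state correspondence in \tab{HammingEncoding}, showing that the natural uniform-superposition initialization of a Quantum Ising Machine automatically lies in the Hamming encoding subspace.
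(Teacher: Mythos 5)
Your proposal is correct and is essentially the paper's own proof: the paper performs the same regrouping-by-Hamming-weight calculation, just written in the reverse direction (starting from $\sum_j \sqrt{C_j/2^n}\,\ket{H_j}$ and collapsing it to $\ket{s}$). The added normalization sanity check is fine but not needed.
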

\begin{proof}
    \begin{align*}
        \sum^n_{j=0} \sqrt{\frac{C_j}{2^n}} \ket{H_j} = \sum^n_{j=0} \sqrt{\frac{C_j}{2^n}} \left(\frac{1}{\sqrt{C_j}}\sum_{|b|=j}\ket{b}\right) = \frac{1}{\sqrt{2^n}}\sum^n_{j=0}\sum_{|b|=j}\ket{b} = \ket{s}.
    \end{align*}
\end{proof}

\lem{uni-gaussian} shows that a Quantum Ising Machine can be directly initialized to the subspace $\S$. Also, it shows that the superposition state $\ket{s}$ is a discrete Gaussian state in the Hamming basis.

\begin{lemma}\label{lem:xz-subspace}
    We denote $S_x = \sum^{n-1}_{j=0} \sigma^{(j)}_x$ and $S_z = \sum^{n-1}_{j=0} \sigma^{(j)}_z$. The subspace $\S$ is an invariant subspace of both $S_x$ and $S_z$. Furthermore, 
    \begin{align}
        S_x|_{\S} &= \sum^{n-1}_{j=0}\sqrt{(j+1)(n-j)}\Big(\ket{H_{j+1}}\bra{H_j} + \ket{H_j}\bra{H_{j+1}}\Big),\\
        S_z|_{\S} &= \sum^n_{j=0} (n-2j) \ket{H_j}\bra{H_j}.
    \end{align}
\end{lemma}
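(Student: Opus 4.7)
The plan is to establish both claims by direct computation on the Hamming basis, treating $S_z$ first (which is immediate) and then $S_x$ (which requires a short combinatorial accounting).

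For $S_z$, I will use that every computational basis state $\ket{b}$ is already an eigenstate of $S_z$: writing $\sigma_z \ket{0} = +\ket{0}$ and $\sigma_z\ket{1} = -\ket{1}$, the state $\ket{b}$ has eigenvalue $(n - |b|) - |b| = n - 2|b|$. Since $\ket{H_j}$ is a uniform superposition of basis states with Hamming weight $j$, it is an eigenstate with eigenvalue $n - 2j$. This both establishes invariance of $\S$ under $S_z$ and yields $S_z|_{\S} = \sum_{j=0}^n (n-2j)\ket{H_j}\bra{H_j}$.

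For $S_x$, I will expand
\[
S_x\ket{H_j} = \frac{1}{\sqrt{C_j}}\sum_{|b|=j}\sum_{k=0}^{n-1}\sigma_x^{(k)}\ket{b},
\]
and count multiplicities of each bitstring on the right. Flipping bit $k$ of $b$ (with $|b|=j$) yields a string of Hamming weight $j+1$ if $b_k=0$ and weight $j-1$ if $b_k=1$. Fixing any target $b'$ with $|b'|=j+1$, the pairs $(b,k)$ producing it are exactly $(b'\oplus e_k, k)$ with $b'_k=1$; there are $j+1$ of these. Symmetrically, each $b''$ with $|b''|=j-1$ appears $n-j+1$ times. Hence
\[
S_x\ket{H_j} = \frac{(j+1)\sqrt{C_{j+1}}}{\sqrt{C_j}}\ket{H_{j+1}} + \frac{(n-j+1)\sqrt{C_{j-1}}}{\sqrt{C_j}}\ket{H_{j-1}},
\]
with the edge cases $j=0$ and $j=n$ handled by the convention that the absent term vanishes. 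In particular, the image stays in $\S$, so $\S$ is invariant.

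The last step is to simplify the binomial ratios. Using $C_{j+1}/C_j = (n-j)/(j+1)$ gives $(j+1)\sqrt{C_{j+1}/C_j} = \sqrt{(j+1)(n-j)}$, and using $C_{j-1}/C_j = j/(n-j+1)$ gives $(n-j+1)\sqrt{C_{j-1}/C_j} = \sqrt{j(n-j+1)}$. Matching indices, the coefficient on $\ket{H_{j+1}}\bra{H_j}$ is $\sqrt{(j+1)(n-j)}$, and the coefficient on $\ket{H_{j-1}}\bra{H_j}$ is $\sqrt{j(n-j+1)}$, which equals the $(j-1)$-th coefficient $\sqrt{j'(n-j'+1)}$ evaluated at $j'=j$; re-indexing the lowering term with $j'=j-1$ shows it matches $\sqrt{(j'+1)(n-j')}$ in the claimed formula. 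Thus $S_x|_{\S} = \sum_{j=0}^{n-1}\sqrt{(j+1)(n-j)}\bigl(\ket{H_{j+1}}\bra{H_j}+\ket{H_j}\bra{H_{j+1}}\bigr)$, which is the desired expression. The only place I expect even minor care is the multiplicity count and the clean identification of the lowering-operator coefficients after re-indexing; everything else is mechanical.
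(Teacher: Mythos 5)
Your proof is correct and follows essentially the same route as the paper's: for $S_z$ both arguments note that computational basis states are eigenstates with eigenvalue $n-2|b|$, and for $S_x$ both perform the same combinatorial multiplicity count (each weight-$(j+1)$ target is hit $j+1$ times, each weight-$(j-1)$ target $n-j+1$ times) followed by the same binomial-ratio simplification yielding $\sqrt{(j+1)(n-j)}$ and $\sqrt{j(n-j+1)}$. The only cosmetic difference is that the paper treats $j=0$ and $j=n$ as explicit separate cases while you absorb them via the vanishing-coefficient convention; both handlings are fine.
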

\begin{proof}
    Note that $\sigma^{(j)}_x$ acts on a computational basis state $\ket{b}$ by flipping the $j$-th qubit while keeping all other qubits unchanged. Therefore, $S_x$ is a sum of all single qubit flips. Then
    \begin{equation*}
        S_x \ket{H_0} = \sum_{j} \sigma_x^{(j)} \ket{0^{\otimes n}} = \sum_{|b|=1}\ket{b} = \sqrt{n} \ket{H_1},
    \end{equation*}
    and similarly,
    \begin{equation*}
        S_x \ket{H_n} = \sum_{j} \sigma_x^{(j)} \ket{1^{\otimes n}} = \sum_{|b|=n-1}\ket{b} = \sqrt{n}\ket{H_{n-1}}.
    \end{equation*}
   
   Let $j\in\{1,2,\dots,n-1\}$. For any bit-string $\ket{y}$ of Hamming weight $|y|=j-1$, there are $n-j+1$ bit-strings of Hamming weight $j$ that are mapped to $\ket{y}$ by a single bit flip. Similarly, for any bit-string $\ket{z}$ of Hamming weight $|z|=j+1$, there are $j+1$ bit-strings of Hamming weight $j$ that are mapped to $\ket{z}$ by a single bit flip. It follows that
    \begin{align*}
        S_x \ket{H_j} 
        &= \frac{n-j+1}{\sqrt{C_j}} \sum_{|y|=j-1} \ket{y} + \frac{j+1}{\sqrt{C_j}} \sum_{|z|=j+1} \ket{z}\\
        &= (n-j+1) \frac{\sqrt{C_{j-1}}}{\sqrt{C_{j}}} \ket{H_{j-1}} + (j+1) \frac{\sqrt{C_{j+1}}}{\sqrt{C_j}} \ket{H_{j+1}}\\
        &= \sqrt{\frac{(n-j+1)^2 n! j! (n-j)!}{n! (j-1)! (n-j+1)!}} \ket{H_{j-1}} + \sqrt{\frac{(j+1)^2 n! j! (n-j)!}{n!(j+1)!(n-j-1)!}}\ket{H_{j+1}}\\
        &= \sqrt{j(n-j+1)}\ket{H_{j-1}} + \sqrt{(j+1)(n-j)}\ket{H_{j+1}}.
    \end{align*}
    This proves the first part of the proposition. 
    
    Note that $\sigma_z$ maps $\ket{0}$ to itself, and flips the phase of $\ket{1}$: $\sigma_z\ket{1}=-\ket{1}$. For any bit-string $\ket{b}$ of Hamming weight $j$, we have
    \begin{align*}
        S_z \ket{b} = \sum_{j} \sigma_z^{(j)} \ket{b} = (n-j) \ket{b} - j\ket{b} = (n-2j)\ket{b},
    \end{align*}
    and it follows that $S_z \ket{H_j} = (n-2j)\ket{H_j}$ for any $j=0,1,...,n$. This proves the second part of the proposition.
\end{proof}

\begin{definition}[Hamming encoding in one dimension]\label{defn:1-dim-hamming}
    The \textbf{Hamming encoding} refers to the isometry $V\colon \H =\C^{n+1} \to \S$ such that
    \begin{align}\label{eqn:1-dim-hamming-encoding}
        V = \sum^n_{j=0} \ket{H_j}\bra{j}.
    \end{align}
\end{definition}

The following corollary is a direct consequence of \lem{xz-subspace}.

\begin{corollary}\label{cor:encoding_1d}
    Let $V$ be the Hamming encoding as in \defn{1-dim-hamming}.
    \begin{enumerate}
        \item Let $\hat{X}=\sum^n_{j=0}\frac{j}{n}\ket{j}\bra{j}$ be the discretized position observable as defined in \eqn{observable-discretize}, the Hamming encoding $V$ gives a subspace encoding of $\hat{X}$ into $\left(\frac{1}{2} - \frac{1}{2n}S_z\right)$;
        \item Let $\hat{A}'=\sum^{n-1}_{j=0}\sqrt{(j+1)(n-j)/n}\left(\ket{j+1}\bra{j} + \ket{j}\bra{j+1}\right)$ be the relaxed adjacency matrix as in \eqn{relaxed_adjacency}, the Hamming encoding $V$ gives a subspace encoding of $\hat{A}'$ into $\frac{1}{n^{1/2}}S_x$.
    \end{enumerate}
\end{corollary}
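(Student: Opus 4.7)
The plan is to verify the two claims directly by computing the conjugation $V^\dagger(\,\cdot\,)|_{\mathcal{S}} V$ and matching it against $\hat{X}$ and $\hat{A}'$ respectively. Both invariance statements ($\mathcal{S}$ stable under $S_x$ and $S_z$) are already supplied by \lem{xz-subspace}, so the only thing left is an algebraic identification.

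First I would observe that $V=\sum_{j=0}^{n}\ket{H_j}\bra{j}$ is a genuine isometry from $\H=\C^{n+1}$ to $\mathcal{S}$: by \lem{hamming-state} the Hamming states are orthonormal, so $V^\dagger V = \sum_{j,k}\ket{j}\braket{H_j}{H_k}\bra{k} = I_{\H}$. This is the one-line fact that lets me push operators back and forth between $\H$ and $\mathcal{S}$ without worrying about normalization.

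Next I would tackle part (1). Using $S_z|_{\mathcal{S}}=\sum_{j}(n-2j)\ket{H_j}\bra{H_j}$ from \lem{xz-subspace} and the orthonormality of Hamming states, a direct computation gives $V^\dagger S_z|_{\mathcal{S}} V = \sum_{j}(n-2j)\ket{j}\bra{j}$. Then
\begin{align*}
V^\dagger\!\left(\tfrac{1}{2}-\tfrac{1}{2n}S_z\right)\!\Big|_{\mathcal{S}}V
  \;=\; \tfrac{1}{2}I_{\H} - \tfrac{1}{2n}\sum_{j}(n-2j)\ket{j}\bra{j}
  \;=\; \sum_{j}\tfrac{j}{n}\ket{j}\bra{j} \;=\; \hat{X},
\end{align*}
which is exactly \eqn{observable-discretize}. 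Part (2) is even more mechanical: $S_x|_{\mathcal{S}}$ from \lem{xz-subspace} is tridiagonal in the Hamming basis with off-diagonal entries $\sqrt{(j+1)(n-j)}$, so conjugation by $V$ replaces $\ket{H_{j\pm 1}}\bra{H_j}$ by $\ket{j\pm 1}\bra{j}$, and the prefactor $\tfrac{1}{n^{1/2}}$ converts $\sqrt{(j+1)(n-j)}$ into $\sqrt{(j+1)(n-j)/n}$, matching $\hat{A}'$ as defined in \eqn{relaxed_adjacency}.

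There is essentially no obstacle here — the content is a bookkeeping exercise once \lem{xz-subspace} is in hand. The only place to be careful is the affine shift in part (1): one must recognize that turning the eigenvalues $(n-2j)$ of $S_z$ into the eigenvalues $j/n$ of $\hat{X}$ forces precisely the combination $\frac{1}{2}-\frac{1}{2n}S_z$, which explains the otherwise mysterious constants in the statement. Invariance of $\mathcal{S}$ under both $S_x$ and $S_z$ is already established in \lem{xz-subspace}, so the definition of subspace encoding is satisfied and the proof concludes.
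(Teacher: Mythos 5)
Your proof is correct and follows exactly the route the paper intends: the paper states this corollary as a ``direct consequence'' of \lem{xz-subspace} without writing out the computation, and your argument---conjugating $S_z|_{\mathcal{S}}$ and $S_x|_{\mathcal{S}}$ by the isometry $V$ and checking the affine shift $\frac{1}{2}-\frac{1}{2n}S_z$ reproduces the eigenvalues $j/n$ of $\hat{X}$---is precisely the omitted bookkeeping. Nothing is missing; invariance of $\mathcal{S}$ and the conjugation identities are all correctly sourced from \lem{xz-subspace} and the orthonormality of Hamming states.
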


\begin{remark}
    \cor{encoding_1d} gives a subspace encoding of the tridiagonal matrix $A'$ into the off-diagonal part of the Quantum Ising Machine Hamiltonian, yet at the cost of using an exponentially small subspace $\S$ of full Hilbert space $\Hs = \C^{2^n}$. This seems to be an unavoidable compromise due to the limited programmability of Quantum Ising Machines.
\end{remark}

To summarize, the target Hamiltonian represents a mesh point $a_j=j/n$ by the state $\ket{j}$, while the Hamming encoding represents the same point by the $j$-th Hamming state $\ket{H_j}$. In particular, we use $n$ qubits on a Quantum Ising Machine to encode a single continuous variable. The same idea also works for higher dimensional cases. In what follows, we introduce a resolution parameter $r \in \Z$, which is the number of cells on each edge of the regular mesh discretizing $[0,1]^d$. We use $n = dr$ qubits to encode the full (discretized) $d$-dimensional QHD Hamiltonian (which is a $(r+1)^d$-dimensional matrix). Similar as in the one-dimensional case, a mesh point $(a_{j_1},...,a_{j_d})$ in $[0,1]^d$ is mapped to the tensor product of $d$ Hamming states $\ket{H_{j_1},...,H_{j_d}}$ on the simulator. 

\begin{definition}[Hamming encoding in finite dimensions]
    For an integer $d \ge 1$, the Hamming encoding in $d$ dimensions refers to the isometry $V_d: \H^{\otimes d} \to \S^{\otimes d}$ such that
    \begin{align}\label{eqn:d-dim-hamming-encoding}
        V_d = \underbrace{V \otimes \dots \otimes V}_{\text{$d$ copies of $V$}} = \sum^n_{j_1,...,j_d = 0}\ket{H_{j_1},...,H_{j_d}}\bra{j_1,...,j_d}.
    \end{align}
    In other words, $V_d$ maps $\ket{j_1,...,j_d}$ to $\ket{H_{j_1},...,H_{j_d}}$.
\end{definition}

In \lem{uni-gaussian}, we observe the uniform superposition state can be regarded as a Gaussian distribution in Hamming state basis. This observation generalizes to higher dimensions.

\begin{proposition}[Multivariate Gaussian distribution]\label{prop:multi-gaussian}
    Fix a resolution parameter $r \in \Z$ and dimension $d \in \Z$, and denote $n = dr$. Let $\Hs = \C^{2^r}$ The uniform superposition state in $\Hs^{\otimes d}$ is in the subspace $\S^{\otimes d}$:
    \begin{align}
        \ket{+}^{\otimes n} = \frac{1}{\sqrt{2^{n}}} \sum_{b\in\{0,1\}^n} \ket{b} = \sum^{r}_{j_1,...,j_d=0} \sqrt{\frac{C_{j_1}...C_{j_d}}{2^n}}\ket{H_{j_1},...,H_{j_d}}.
    \end{align}
\end{proposition}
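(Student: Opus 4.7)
The plan is to reduce this multivariate statement to the one-dimensional \lem{uni-gaussian} by exploiting the tensor-product structure of the uniform superposition state. Since $n = dr$, one has the factorization
\begin{align*}
    \ket{+}^{\otimes n} = \bigl(\ket{+}^{\otimes r}\bigr)^{\otimes d},
\end{align*}
which expresses the initial state as $d$ independent copies of the single-register uniform superposition.

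First I would apply \lem{uni-gaussian} to each of the $d$ factors, obtaining
\begin{align*}
    \ket{+}^{\otimes r} = \sum_{j=0}^{r} \sqrt{C_j / 2^r}\, \ket{H_j},
\end{align*}
where the binomial coefficients $C_j = \binom{r}{j}$ are defined relative to the $r$-qubit register. Substituting into the tensor-product factorization gives
\begin{align*}
    \ket{+}^{\otimes n} = \bigotimes_{k=1}^{d}\Bigl(\sum_{j_k=0}^{r} \sqrt{C_{j_k}/2^r}\, \ket{H_{j_k}}\Bigr).
\end{align*}
Next I would distribute the tensor product over the sums, using multilinearity, to rewrite this as
\begin{align*}
    \ket{+}^{\otimes n} = \sum_{j_1,\dots,j_d=0}^{r} \sqrt{\frac{C_{j_1}\cdots C_{j_d}}{(2^r)^d}}\, \ket{H_{j_1},\dots,H_{j_d}},
\end{align*}
and finally observing $(2^r)^d = 2^{dr} = 2^n$ yields the claimed identity.

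There is no real obstacle here: the argument is a direct multilinear expansion of the one-dimensional identity, leveraging the fact that both the uniform superposition state and the Hamming basis of $\S^{\otimes d}$ factor cleanly across the $d$ registers. The only thing to verify carefully is the indexing convention (each register is $r$-qubit, Hamming weights run from $0$ to $r$, and $C_{j_k} = \binom{r}{j_k}$), which is consistent with the statement.
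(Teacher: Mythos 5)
Your proposal is correct and follows essentially the same route as the paper's own proof: factor $\ket{+}^{\otimes n}$ into $d$ copies of $\ket{+}^{\otimes r}$, apply \lem{uni-gaussian} to each factor, distribute the tensor product over the sums, and simplify $(2^r)^d = 2^n$. The indexing conventions you flag ($C_{j_k} = \binom{r}{j_k}$, Hamming weights in $\{0,\dots,r\}$) match the paper exactly, so nothing is missing.
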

\begin{proof}
    Using \lem{uni-gaussian},
	\begin{align*}
    	\ket{+}^{\otimes n} 
    	&= \bigotimes_{k=1}^{d} \ket{+}^{\otimes r}
    	= \bigotimes_{k=1}^{d} \sum_{j_k=0}^{r} \sqrt{\frac{C_{j_k}}{2^r}} \ket{H_{j_k}}\\ 
        &= \sum_{j_1,\dots,j_d=0}^{r} \sqrt{\prod_{k=1}^{d} \frac{C_{j_k}}{2^r}} \ket{H_{j_1},\dots,H_{j_d}} =\sum_{j_1,\dots,j_d=0}^{r} \sqrt{\frac{C_{j_1}\dots C_{j_d}}{2^n}} \ket{H_{j_1},\dots,H_{j_d}}.
    \end{align*}
\end{proof}

Define the following new operators:
\begin{align}
    S^{(k)}_\alpha = \sum^{kr-1}_{j=(k-1)r} \sigma^{(j)}_\alpha,
\end{align}
for $k = 1,...,d$ and $\alpha = x,z$. Also, we define 
\begin{align}\label{eqn:W_k}
    W^{(k)} = \frac{1}{2}-\frac{1}{2n}S^{(k)}_z.
\end{align} 

\begin{proposition}\label{prop:d-xz-subspace}
    Fix a resolution parameter $r \in \Z$ and dimension $d \in \Z$, and denote $n = dr$. The subspace $\S^{\otimes d} \subset \Hs^{\otimes}$ is an invariant subspace of $S^{(k)}_x$, $S^{(k)}_z$, and $W^{(k)}$. Furthermore, for any $k = 1,2,\ldots,d$, we have
    \begin{align}
        V_d^\dagger \left(\frac{1}{r^{1/2}}\sum^{n-1}_{j=0}\sigma^{(j)}_x\right) V_d = \hat{A}'_d,\label{eqn:relaxed_Ad}\\
        V_d^\dagger \left( W^{(k)} \right)V_d = \hat{X}^{(k)},\\
        V_d^\dagger \left(W^{(k)} W^{(\ell)}\right) V_d = \hat{X}^{(k)} \hat{X}^{(\ell)},
    \end{align}
    where the symbol $\hat{X}^{(k)}$ is defined as in \eqn{d-observable}, and $\hat{A}'_d$ is as in \eqn{d_relaxed_adjacency}.
\end{proposition}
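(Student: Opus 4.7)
The plan is to reduce everything to the one-dimensional Hamming encoding results already established in \lem{xz-subspace} and \cor{encoding_1d}, exploiting the tensor product structure $V_d = V^{\otimes d}$.

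First, I would verify invariance of $\S^{\otimes d}$ under each of $S_x^{(k)}$, $S_z^{(k)}$, and $W^{(k)}$. Each of these operators acts nontrivially only on the $k$-th register of $r$ qubits and as identity on the remaining $d-1$ registers, so each takes the form $I^{\otimes(k-1)} \otimes A \otimes I^{\otimes(d-k)}$ for an appropriate single-register operator $A$. By \lem{xz-subspace}, $\S$ is invariant under $S_x$ and $S_z$ (and hence under any affine combination such as $W$); tensoring with identities preserves invariance, so $\S^{\otimes d}$ is invariant under each $S_x^{(k)}$, $S_z^{(k)}$, and $W^{(k)}$.

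Second, I would exploit the following consequence of $V_d = V^{\otimes d}$ and $V^\dagger V = I_\H$: conjugation by $V_d$ distributes over single-register operators,
\begin{equation*}
V_d^\dagger \bigl(I^{\otimes(k-1)} \otimes A \otimes I^{\otimes(d-k)}\bigr) V_d = I^{\otimes(k-1)} \otimes (V^\dagger A V) \otimes I^{\otimes(d-k)}.
\end{equation*}
Applying this with $A = r^{-1/2} S_x$ and summing over $k$ (using the decomposition $\sum_{j=0}^{n-1} \sigma_x^{(j)} = \sum_{k=1}^d S_x^{(k)}$) produces $\hat{A}'_d$ by part 2 of \cor{encoding_1d}. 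Applying it with $A = W$ yields the second identity via part 1 of \cor{encoding_1d}. For the off-diagonal case $k \neq \ell$ of the third identity, $W^{(k)} W^{(\ell)}$ is itself a tensor product with $W$ in two distinct slots, so the same tensor-product conjugation rule immediately gives $\hat{X}^{(k)} \hat{X}^{(\ell)}$.

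The main subtlety is the diagonal case $k = \ell$ of the product identity. Since $V_d$ is an isometry but not a unitary ($V_d V_d^\dagger = P_{\S^{\otimes d}} \neq I$), one cannot naively factor $V_d^\dagger (W^{(k)})^2 V_d$ as $\bigl(V_d^\dagger W^{(k)} V_d\bigr)^2$. Here the invariance established in the first step becomes essential: since $W^{(k)} V_d \ket{\phi} \in \S^{\otimes d}$ for every $\ket{\phi}$, and $V_d V_d^\dagger$ acts as the identity on $\S^{\otimes d}$, inserting $V_d V_d^\dagger$ between the two copies of $W^{(k)}$ is a no-op, so that
\begin{equation*}
V_d^\dagger (W^{(k)})^2 V_d = \bigl(V_d^\dagger W^{(k)} V_d\bigr) \bigl(V_d^\dagger W^{(k)} V_d\bigr) = (\hat{X}^{(k)})^2.
\end{equation*}
This is the one place where invariance plays a genuinely operational role rather than serving merely as bookkeeping; everything else is routine tensor-product manipulation pulling the corresponding 1D identities through $V^{\otimes d}$.
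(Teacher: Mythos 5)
Your proof is correct and follows essentially the same route as the paper's: reduce everything to the one-dimensional statements (\lem{xz-subspace}, \cor{encoding_1d}) through the tensor-product structure $V_d = V^{\otimes d}$, using $V^\dagger V = I$ to pull conjugation through single-register operators and then summing over registers for $\hat{A}'_d$. The one place where you genuinely go beyond the paper is the diagonal case $k=\ell$ of the product identity. The paper's proof disposes of the product identity with the single remark that $W^{(k)}$ and $W^{(\ell)}$ ``operate on different qubits,'' which only covers $k\neq\ell$; yet the squared case $V_d^\dagger \left(W^{(p)}\right)^2 V_d = \left(\hat{X}^{(p)}\right)^2$ is precisely what \cor{qp-encoding} later invokes to encode the diagonal terms $Q_{p,p}\left(\hat{X}^{(p)}\right)^2$ of the quadratic objective. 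Your observation that one cannot factor through an isometry naively, and your insertion of $V_d V_d^\dagger = P_{\S^{\otimes d}}$ between the two factors---justified by the invariance of $\S^{\otimes d}$ under $W^{(k)}$ established in your first step---is exactly the argument the paper leaves implicit (equivalently, one can note that $W$ restricted to $\S$ is diagonal in the Hamming basis with entries $j/r$, so its square restricted to $\S$ is diagonal with entries $(j/r)^2$, which conjugates to $\hat{X}^2$). So your write-up is not merely correct: it patches a step that the paper's own proof glosses over.
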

\begin{proof}
    Since $S^{(k)}_z$, and $W^{(k)}$ only act on $r$ qubits used to represent the $k$th continuous variable, it follows from \lem{xz-subspace} that $\S^{\otimes d}$ is an invariant subspace of $S^{(k)}_x$, $S^{(k)}_z$, and $W^{(k)}$.
    
    It also follows from the one-dimensional case that $V^\dagger \left(\frac{1}{r^{1/2}} S_x^{(k)}\right) V = I\otimes \ldots \otimes \hat{A}' \otimes \ldots \otimes I$, where $\hat{A}'$ is the $k$-th operator in the tensor product.
    Thus,
    \begin{align*}
        V_d^\dagger \left(\frac{1}{r^{1/2}} \sum_{j=0}^{n-1} \sigma_x^{(j)}\right)V_d = V_d^\dagger \left(\frac{1}{r^{1/2}} \sum_{k=1}^{d} S_x^{(k)}\right)V_d = \hat{A}'_d.
    \end{align*}
    
     Since $W^{(k)}=\frac{1}{2}-\frac{1}{2r}S_z^{(k)}$, it follows from the one-dimensional case that $V^\dagger \left(W^{(k)}\right) V_d = \hat{X}^{(k)}$. Similarly, $V_d^\dagger \left(W^{(k)} W^{(\ell)}\right) V_d = \hat{X}^{(k)} \hat{X}^{(\ell)}$ as $W^{(k)}$ and $W^{(\ell)}$ operate on different qubits.
\end{proof}

\begin{corollary}[Ising format for quadratic programming]
    \label{cor:qp-encoding}
    Consider a quadratic function $f(\x)=\frac{1}{2} \x^\top \mathbf{Q} \x + \mathbf{b}^\top \x$, where $\x, \mathbf{b} \in \R^d$, and $\mathbf{Q} = (Q_{p,q}) \in \R^{d\times d}$ is a symmetric coefficient matrix. Let $\hat{X}^{(p)}$ be the same as in \eqn{d-observable}, the discretization of $f$ over the regular mesh $\mathcal{M}^d = \{a_j = \frac{j}{r}:j=0,1,...,r\}^d$ is
    \begin{align}
        \hat{F} =  \left(\frac{1}{2}\sum^d_{p=1}Q_{p,p} (\hat{X}^{(p)})^2 + \sum_{p > q}Q_{p,q}\hat{X}^{(p)} \hat{X}^{(q)}\right) + \sum^d_{p=1} b_p \hat{X}^{(p)}.
    \end{align}
    Let $V^{\otimes d}$ be the Hamming encoding isometry as in \eqn{d-dim-hamming-encoding}. The Hamming encoding gives a subspace encoding of $\hat{F}$ into $H_P$, where the Hamiltonian $H_P$ is of the form:
    \begin{align}
        H_P &= \left(\frac{1}{2}\sum^d_{p=1}Q_{p,p} (W^{(p)})^2 + \sum_{p > q}Q_{p,q}W^{(p)} W^{(q)}\right) + \sum^d_{p=1} b_p W^{(p)}\\
        &= \sum^{n-1}_{j=0} h_j \sigma^{(j)}_z + \sum_{\substack{j,k=0 \\j> k}}^{n-1} J_{j,k} \sigma^{(j)}_z\sigma^{(k)}_z + g,
    \end{align}
    with $g = \frac{1}{8}(1+\frac{1}{r})\sum_p Q_{p,p} + \frac{1}{4}\sum_{p>q} Q_{p,q} + \frac{1}{2}\sum_p b_p$, 
    \begin{align}
        h_j = - \frac{1}{4r}\left[\left(\sum^d_{q=1}Q_{p,q}\right)+2 b_p\right]
    \end{align}
    for $(p-1)r \le j \le pr-1$, and 
    \begin{align}
        J_{j,k} = \frac{Q_{p,q}}{4r^2}
    \end{align}
    for $(p-1)r \le j \le pr-1$, $(q-1)r \le k \le qr-1$.
\end{corollary}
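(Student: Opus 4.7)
The plan is to proceed in two natural stages: first use the already-established subspace encoding properties to obtain the $W^{(p)}$-form of $H_P$, then perform a routine but careful algebraic expansion to recast $H_P$ in Ising format.

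\textbf{Stage 1 (encoding $\hat{F}$ into the $W$-form).} The discretized objective $\hat{F}$ is a polynomial of degree at most two in the commuting family $\{\hat{X}^{(p)}\}_{p=1}^d$, namely
\begin{align*}
\hat{F} = \frac{1}{2}\sum_{p}Q_{p,p}(\hat{X}^{(p)})^2 + \sum_{p>q} Q_{p,q}\hat{X}^{(p)}\hat{X}^{(q)} + \sum_p b_p \hat{X}^{(p)},
\end{align*}
using the symmetry $Q_{p,q}=Q_{q,p}$. By \prop{d-xz-subspace}, $V_d$ maps $\hat{X}^{(p)}$ to $W^{(p)}$ and $\hat{X}^{(p)}\hat{X}^{(q)}$ to $W^{(p)}W^{(q)}$, with $\mathcal{S}^{\otimes d}$ invariant under each $W^{(p)}$. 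Invariance is preserved under products and linear combinations (the $W^{(p)}$ commute pairwise and each leaves $\mathcal{S}^{\otimes d}$ invariant), so $\mathcal{S}^{\otimes d}$ is invariant under $H_P$ and $V_d^\dagger (H_P|_{\mathcal{S}^{\otimes d}}) V_d = \hat{F}$. This yields the first equality in the statement.

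\textbf{Stage 2 (recasting in Ising format).} I substitute $W^{(p)} = \tfrac{1}{2} - \tfrac{1}{2r}S_z^{(p)}$ into the $W$-form. For the square terms, the identity
\begin{align*}
(S_z^{(p)})^2 \;=\; \sum_{j,k \in I_p}\sigma_z^{(j)}\sigma_z^{(k)} \;=\; r\, I + 2\sum_{\substack{j,k\in I_p\\ j>k}}\sigma_z^{(j)}\sigma_z^{(k)},
\end{align*}
where $I_p = \{(p-1)r, \ldots, pr-1\}$ indexes the qubits encoding variable $p$, gives
\begin{align*}
(W^{(p)})^2 \;=\; \frac{1}{4}\Bigl(1+\tfrac{1}{r}\Bigr)\, I \;-\; \frac{1}{2r}S_z^{(p)} \;+\; \frac{1}{2r^2}\!\!\sum_{\substack{j,k\in I_p\\ j>k}}\!\sigma_z^{(j)}\sigma_z^{(k)}.
\end{align*}
For the cross terms ($p>q$), since $S_z^{(p)}$ and $S_z^{(q)}$ act on disjoint qubits,
\begin{align*}
W^{(p)}W^{(q)} \;=\; \tfrac{1}{4}I \;-\; \tfrac{1}{4r}\bigl(S_z^{(p)}+S_z^{(q)}\bigr) \;+\; \tfrac{1}{4r^2}\sum_{j\in I_p,\,k\in I_q}\sigma_z^{(j)}\sigma_z^{(k)}.
\end{align*}
And for the linear terms, $b_p W^{(p)} = \tfrac{b_p}{2}I - \tfrac{b_p}{2r}S_z^{(p)}$.

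\textbf{Stage 3 (collecting coefficients).} Summing all contributions, the constant (identity) part combines to $g = \tfrac{1}{8}(1+\tfrac{1}{r})\sum_p Q_{p,p} + \tfrac{1}{4}\sum_{p>q}Q_{p,q} + \tfrac{1}{2}\sum_p b_p$, matching the statement. The single-$\sigma_z$ coefficient on qubit $j\in I_p$ collects a contribution $-\tfrac{Q_{p,p}}{4r}$ from the diagonal square term and, for each $q\ne p$, a contribution $-\tfrac{Q_{p,q}}{4r}$ from the cross terms (using $Q_{p,q}=Q_{q,p}$ to absorb both $p>q$ and $p<q$ cases symmetrically), plus $-\tfrac{b_p}{2r}$ from the linear part, totaling $h_j = -\tfrac{1}{4r}\bigl[\sum_{q}Q_{p,q} + 2b_p\bigr]$ as required. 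Finally, the $\sigma_z^{(j)}\sigma_z^{(k)}$ coefficient for $j\in I_p$, $k\in I_q$ with $j>k$ is $\tfrac{Q_{p,p}}{4r^2}$ when $p=q$ and $\tfrac{Q_{p,q}}{4r^2}$ when $p\ne q$, i.e.\ $J_{j,k} = \tfrac{Q_{p,q}}{4r^2}$ uniformly.

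\textbf{Anticipated obstacle.} No step is genuinely hard; the entire content beyond \prop{d-xz-subspace} is bookkeeping. The only place to slip is the symmetry accounting for the linear-in-$\sigma_z$ coefficient: each off-diagonal $Q_{p,q}$ with $p\ne q$ must be counted exactly once via the symmetrization $\sum_{q}Q_{p,q}=Q_{p,p}+2\sum_{q\ne p,\,(p,q)\text{ symmetrized}}Q_{p,q}$, reconciling the convention ``$\sum_{p>q}$'' used in $\hat{F}$ with the unordered sum $\sum_{q}Q_{p,q}$ appearing in $h_j$. Once this bookkeeping is done carefully, the statement follows by direct identification.
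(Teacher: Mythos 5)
Your proof is correct and follows essentially the same route as the paper's: your Stage 1 is exactly the paper's application of \prop{d-xz-subspace} to conclude $V_d^\dagger H_P V_d = \hat{F}$, and your Stages 2--3 simply carry out explicitly the coefficient bookkeeping that the paper compresses into ``straightforward calculation, combining terms that include the identity, 1-site, and 2-site Pauli-$z$ operators.'' Your expansion uses the correct per-variable normalization $W^{(p)} = \tfrac{1}{2} - \tfrac{1}{2r}S_z^{(p)}$ (which is what the coefficients $h_j \sim 1/r$, $J_{j,k} \sim 1/r^2$ require, and what the paper's own proof of \prop{d-xz-subspace} uses, despite the $\tfrac{1}{2n}$ typo in \eqn{W_k}) and reproduces $g$, $h_j$, and $J_{j,k}$ exactly, so your write-up is if anything more complete than the paper's.
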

\begin{proof}
    Note that $\bra{j_1,\dots,j_d}\hat{X}^{(p)}\ket{j_1,\dots,j_d} = a_{j_p}$ and $\bra{j_1,\dots,j_d}\hat{X}^{(p)}\hat{X}^{(q)}\ket{j_1,\dots,j_d} = a_{j_p}a_{j_q}$ for any $p,q=1,\dots,d$. Therefore, it follows that
    \begin{align*}
        f(a_{j_1},...,a_{j_d}) = \bra{j_1,...,j_d}\hat{F}\ket{j_1,...,j_d}.
    \end{align*}
    It follows from \prop{d-xz-subspace} that
    \begin{align*}
        V_d^\dagger H_P V_d
        &= \frac{1}{2} \sum_{p=1}^{d} Q_{p,p} V_d^\dagger \left(\hat{W}^{(p)}\right)^2 V_d + \sum_{p>q} Q_{p,q} V_d^\dagger \left(\hat{W}^{(p)} \hat{W}^{(q)}\right) V_d + \sum_{p=1}^{d} b_p V_d^\dagger \left(\hat{W}^{(p)}\right) V_d\\
        &= \frac{1}{2} \sum_{p=1}^{d} Q_{p,p} \left(\hat{X}^{(p)}\right)^2 + \sum_{p>q} Q_{p,q} \hat{X}^{(p)} \hat{X}^{(q)} + \sum_{p=1}^{d} b_{p} \hat{X}^{(p)}\\
        &= \hat{F}.
    \end{align*}
    The form of $H_P$ in terms of Pauli-$z$ operators is by straightforward calculation, combining terms that include the identity, 1-site, and 2-site Pauli-$z$ operators.
\end{proof}

\subsubsection{Time-energy rescaling}
\label{sec:time-energy-rescale}
Recall that we want to simulate the relaxed version of discretized QHD \eqn{relaxed-qhd-discrete} for $t \in [0,T]$. This ODE is a mathematical model and it is dimensionless. However, real-world analog quantum computers operate at certain energy levels and its Hamiltonian is of the unit of energy (e.g., Joule); see \eqn{analog-evolution}. To connect the dimensionless ODE to a realistic analog quantum computer, we need to calibrate the ``clock'' in the abstract time evolution so that it matches real-world physical devices. 

We introduce a rescaled time $t = \lambda \xi$, where $\lambda > 0$ is a time dilation parameter. The rescaled wave function is 
\begin{align}
  \tilde{\psi}(\xi) = \psi(\lambda \xi),
\end{align}
with the evolution duration $\tilde{T} = T/\lambda$. Plugging the rescaled wave function to the Schr\"odinger equation \eqn{relaxed-qhd-discrete}, together with the Hamming encoding: $A_d \to S_x/\sqrt{r}$, we obtain:
\begin{align}\label{eqn:rescaled-qhd-discrete}
    i \frac{\d}{\d \xi}\ket{\tilde{\psi}_\xi} = \left[\underbrace{-\frac{\lambda e^{\varphi_t}}{2} r^{3/2} S_x}_{\text{Kinetic part}} + \underbrace{e^{\chi_t}\lambda V_d}_{\text{Potential part}}\right]\ket{\tilde{\psi}_\xi},
\end{align} 
This means if we want to get the evolution result in \eqn{relaxed-qhd-discrete} at time $T$, we should let the Quantum Ising Machine emulate for $T/\lambda$ with the following time-dependent functions \eqn{ising-machine}: 
\begin{subequations}
\begin{align}
    \frac{A(t)}{h} &= \lambda r^{3/2} e^{\varphi(t)},\label{eqn:rescale_A}\\
    \frac{B(t)}{h} &= 2\lambda e^{\chi(t)}.
\end{align}
\end{subequations}

\subsubsection{Reconstructing quantum distribution}
To obtain the quantum distribution of the QHD wave function $\ket{\psi}$, it is sufficient to measure the quantum state with the (discretized) position observable $\hat{X}$. In particular, if a wave function in $\ket{\psi} \in \H = \C^{r+1}$ is written as $\ket{\psi} = \sum^r_{j=0} c_j\ket{j}$, we use the measurement elements $\{E_j=\ket{j}\bra{j}\}$ to recover the quantum density distribution $\{|c_j|^2\}$. From the quantum distribution, the global minimizer of $f$ is inferred. 

The Hamming encoding isometry is $V = \sum^r_{j=0} \ket{H_j}\bra{j}$. By \prop{analog-simulation}, in order to recover the same measurement results, we want to use the transformed measurement elements $\{\tilde{E}_j = V E_j V^\dagger = \ket{H_j}\bra{H_j}\}$. 
However, according to \defn{quantum-ising-machine}, we can only measure a state in Quantum Ising Machines with computational basis. At a first glance, it seems impossible to perform quantum measurement with $\{\tilde{E}_j\}$. Surprinsingly, we show that it is still possible to recover the quantum distribution $\{|c_j|^2\}$ on Quantum Ising Machines by using a new set of measurement elements: $\{\Pi_j = \sum_{|b|=j} \ket{b}\bra{b}: j = 0,1,...,r\}$. Note that this measurement is performed in computational basis on Quantum Ising Machines.

\begin{lemma}[One-dimensional case]\label{lem:reconstruct}
    Let $\H' = \C^{2^r}$ be the simulator Hilbert space, and we denote $\S$ as the subspace spanned by Hamming states $\{\ket{H_j},j=0,1,...,r\}$. Suppose $\ket{\psi} \in \S$. Consider two sets of measurement elements in $\H'$: $\{\tilde{E}_j =  \ket{H_j}\bra{H_j}:j=0,1,...,r\}$ and $\{\Pi_j = \sum_{|b|=j} \ket{b}\bra{b}: j=0,1,...,r\}$. We have
    \begin{align}
       \bra{\psi} \tilde{E}_j \ket{\psi} = \bra{\psi} \Pi_j \ket{\psi},
    \end{align}
    for all $j = 0,1,\dots,r$.
\end{lemma}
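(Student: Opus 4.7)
The plan is to exploit the fact that $\ket{\psi} \in \mathcal{S}$, which means $\ket{\psi}$ can be expanded in the Hamming basis as $\ket{\psi} = \sum_{k=0}^{r} c_k \ket{H_k}$ with $c_k = \braket{H_k}{\psi}$. The proof reduces to showing that the two operators $\tilde{E}_j$ and $\Pi_j$ agree when restricted to $\mathcal{S}$, which ultimately comes down to understanding how $\Pi_j$ acts on Hamming states.

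First, I would compute the left-hand side directly. Using orthonormality of the Hamming states (\lem{hamming-state}),
\begin{equation*}
\bra{\psi}\tilde{E}_j\ket{\psi} = |\braket{H_j}{\psi}|^2 = |c_j|^2.
\end{equation*}

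Next, the key structural observation for the right-hand side: $\Pi_j$ is the orthogonal projector onto the subspace $\mathcal{B}_j := \mathrm{span}\{\ket{b}: |b|=j\}$ of bitstrings with Hamming weight $j$. By definition of the Hamming state, $\ket{H_j} \in \mathcal{B}_j$, while $\ket{H_k} \in \mathcal{B}_k$ is supported on bitstrings of a different Hamming weight when $k \neq j$, hence is orthogonal to every basis vector in $\mathcal{B}_j$. Thus
\begin{equation*}
\Pi_j \ket{H_k} = \delta_{jk}\ket{H_j}.
\end{equation*}
This shows $\Pi_j\big|_{\mathcal{S}} = \ket{H_j}\bra{H_j} = \tilde{E}_j\big|_{\mathcal{S}}$. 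Expanding $\ket{\psi}$ and applying this identity,
\begin{equation*}
\bra{\psi}\Pi_j\ket{\psi} = \sum_{k,k'} \overline{c_{k'}} c_k \bra{H_{k'}}\Pi_j\ket{H_k} = \sum_{k,k'} \overline{c_{k'}} c_k \,\delta_{jk}\delta_{jk'} = |c_j|^2,
\end{equation*}
which matches the left-hand side and completes the argument.

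There is no real obstacle here; the only step requiring care is the bookkeeping that $\mathcal{B}_j \perp \mathcal{B}_k$ for $j \neq k$ in the computational basis, which then forces the claimed identity $\Pi_j \ket{H_k} = \delta_{jk}\ket{H_j}$. Once that is in hand, the result is immediate. The higher-dimensional generalization (which presumably follows in the next lemma) will proceed by the same argument applied tensor-factor-wise using $V_d = V^{\otimes d}$.
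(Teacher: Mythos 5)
Your proof is correct and follows essentially the same route as the paper's: expand $\ket{\psi}$ in the orthonormal Hamming basis and show that $\Pi_j$ restricted to $\mathcal{S}$ acts exactly as $\tilde{E}_j$ does. The only cosmetic difference is that you deduce $\Pi_j\ket{H_k}=\delta_{jk}\ket{H_j}$ abstractly from the orthogonality of the fixed-Hamming-weight subspaces, whereas the paper evaluates the overlaps $\braket{H_\ell}{b}=\delta_{\ell,|b|}/\sqrt{C_\ell}$ explicitly and sums over the $C_j$ bitstrings of weight $j$; both come down to the same observation.
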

\begin{proof}
    We write the quantum state $\ket{\psi} = \sum^r_{j=0} c_\ell \ket{H_j}$. It is clear that $\bra{\psi} \tilde{E}_j \ket{\psi} = |c_j|^2$. At the same time, we compute $\bra{\psi} \Pi_j \ket{\psi}$,
    \begin{align*}
        \bra{\psi} \Pi_j \ket{\psi} &= \left(\sum_{\ell} \overline{c_{\ell}} \bra{H_\ell}\right)\left(\sum_{|b|=j}\ket{b}\bra{b}\right)\left(\sum_{m} c_{m} \ket{H_m}\right)\\
        &= \sum_{\ell,m,|b|=j} \overline{c_{\ell}}c_m \braket{H_\ell}{b}\braket{b}{H_m}.
    \end{align*}
    Note that $\braket{H_\ell}{b} = \delta_{\ell,|b|}\frac{1}{\sqrt{C_\ell}}$, where $C_\ell = \binom{r}{\ell}$. It turns out that 
    \begin{align*}
        \bra{\psi} \Pi_j \ket{\psi} = \overline{c_j}c_j \sum_{|b|=j}\frac{1}{C_j} = |c_j|^2 = \bra{\psi} \tilde{E}_j \ket{\psi}.
    \end{align*}
\end{proof}

\begin{proposition}[Finite-dimensional case]\label{prop:d-dim-reconstruct}
   Let $d$ be the dimension of the problem, and $r\in \Z^+$ be the resolution parameter. Let $\H'$, $\S$ be the same as in \lem{reconstruct}. Suppose $\ket{\psi} \in \S^{\otimes d}$. For $1\le k \le d$ and $0 \le j \le r$, we define
   \begin{align}\label{eqn:transformed_Ekj}
       \tilde{E}^{(k)}_j \coloneqq I \otimes \dots \otimes \underbrace{\tilde{E}_j}_{\text{the $k$-th operator}} \otimes \dots \otimes I,
   \end{align}
   where $\tilde{E}_j = \ket{H_j}\bra{H_j}$. Let $\mathscr{M}^{(k)}_j$ be the same as in \eqn{Mkj}. Then, for any $1\le k \le d$ and $0 \le j \le r$, we have
   \begin{align}
    \braket{\psi}{\tilde{E}^{(k)}_j\Big|\psi} = \braket{\psi}{\mathscr{M}^{(k)}_j\Big|\psi}.
   \end{align}
\end{proposition}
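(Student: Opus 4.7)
My plan is to reduce the $d$-dimensional statement to the one-dimensional \lem{reconstruct} by exploiting the fact that both $\tilde{E}^{(k)}_j$ and $\mathscr{M}^{(k)}_j$ act as the identity on every block of $r$ qubits except the $k$-th. Once $\ket{\psi}$ is expanded in the tensor product Hamming basis of $\S^{\otimes d}$ and the orthonormality of Hamming states is invoked, the non-$k$ factors should collapse into Kronecker deltas, and the only remaining computation will be a matrix element of $\Pi_j$ between two Hamming states on the $k$-th factor.

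Concretely, I will first write
\[
\ket{\psi} \;=\; \sum_{j_1,\ldots,j_d=0}^{r} c_{j_1,\ldots,j_d}\,\ket{H_{j_1},\ldots,H_{j_d}}
\]
and compute $\bra{\psi}\tilde{E}^{(k)}_j\ket{\psi}$ directly. By \lem{hamming-state}, the orthonormality $\braket{H_{j_i}}{H_{j'_i}}=\delta_{j_i,j'_i}$ kills every off-diagonal term in the $d-1$ factors that $\tilde{E}_j$ does not touch, so the expectation reduces to the marginal
\[
\bra{\psi}\tilde{E}^{(k)}_j\ket{\psi} \;=\; \sum_{j_1,\ldots,\widehat{j_k},\ldots,j_d}\bigl|c_{j_1,\ldots,j_{k-1},j,j_{k+1},\ldots,j_d}\bigr|^2,
\]
where $\widehat{j_k}$ indicates omission from the sum.

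Next, I will repeat the calculation with $\mathscr{M}^{(k)}_j$ in place of $\tilde{E}^{(k)}_j$. The identities acting on the non-$k$ factors again force $\delta_{j_i,j'_i}$ for $i\neq k$, leaving me to evaluate $\bra{H_\ell}\Pi_j\ket{H_m}$ on the $k$-th register. Using $\braket{H_\ell}{b}=\delta_{\ell,|b|}/\sqrt{C_\ell}$ together with the fact that there are $C_j$ bitstrings of Hamming weight $j$, exactly as in the proof of \lem{reconstruct}, I expect to obtain $\bra{H_\ell}\Pi_j\ket{H_m}=\delta_{\ell,j}\delta_{m,j}$. Substituting this back produces the same marginal sum obtained above, yielding the claimed equality.

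I do not anticipate any substantive obstacle here; the only item requiring care is the index bookkeeping when expanding in the product basis. A more conceptual alternative would be to pass to the reduced density operator $\rho_k=\Tr_{\overline{k}}\bigl(\ket{\psi}\bra{\psi}\bigr)$, observe that its support lies in $\S$ because $\ket{\psi}\in\S^{\otimes d}$, and then invoke \lem{reconstruct} on $\rho_k$ after extending that lemma to mixed states by linearity; this packaging avoids unpacking the cross terms explicitly but obscures what is really happening at the level of individual Hamming components.
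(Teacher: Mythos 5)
Your proof is correct and takes essentially the same route as the paper's: both reduce the $d$-dimensional claim to the one-dimensional \lem{reconstruct} by exploiting the tensor-product structure, with the identity factors on the non-$k$ registers collapsing via orthonormality of the Hamming states. If anything, your explicit computation of $\bra{H_\ell}\Pi_j\ket{H_m}=\delta_{\ell,j}\delta_{m,j}$ is slightly more careful than the paper's two-line invocation of the lemma, since the surviving cross terms on the $k$-th register require this matrix-element identity (implicit in the lemma's proof) rather than only the diagonal expectation values the lemma literally states.
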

\begin{proof}
    \begin{align*}
        \braket{\psi}{\tilde{E}^{(k)}_j\Big|\psi} &=  \bra{\psi}\left(I \otimes \dots \otimes \tilde{E}_j \otimes \dots \otimes I\right)\ket{\psi}\\
        &= \bra{\psi}\left(I \otimes \dots \otimes \Pi_j \otimes \dots \otimes I\right)\ket{\psi} = \braket{\psi}{\mathscr{M}^{(k)}_j\Big|\psi}.
    \end{align*}
    Note that the second equality follows from \lem{reconstruct}, and the last step uses the definition of $\mathscr{M}^{(k)}_j$.
\end{proof}

\subsubsection{Correctness of our analog implementation}
Now, we prove the correctness of \algo{analog-qhd}.

\begin{theorem}[Analog implementation of QHD]\label{thm:analog-implement}
    Let $d$ be the dimension of the quadratic programming problem \eqn{qp-analog}. Given a resolution parameter $r \in \Z^+$ and a time-rescaling parameter $\lambda > 0$. Let $T$ be the total evolution time of QHD, and we define $t_f = T/\lambda$. Let $\ket{\phi_0}$, $\ket{\psi_0}$, $E^{(k)}_j$, and $\mathscr{M}^{(k)}_j$ be defined as in \tab{HammingEncoding}. We denote $\ket{\phi_T}$ as the solution of the relaxed QHD \eqn{relaxed-qhd-discrete} at time $t = T$, and $\ket{\psi_{t_f}}$ be the state in the Quantum Ising Machine at evolution time $t=t_f$ (following \algo{analog-qhd}). Then, we have
    \begin{align}
        \braket{\psi_{t_f}}{\mathscr{M}^{(k)}_j\Big|\psi_{t_f}} = \braket{\phi_T}{E^{(k)}_j\Big|\phi_T},
    \end{align}
    for any $1\le k \le d$ and $0 \le j \le r$.
\end{theorem}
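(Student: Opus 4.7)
The plan is to assemble the theorem from four ingredients already established: (i) the Hamming subspace encoding of the kinetic and potential parts of the relaxed QHD Hamiltonian, (ii) the initial-state compatibility between the Quantum Ising Machine and the QHD model, (iii) the time--energy rescaling that converts the dimensionless QHD clock into the physical clock of the simulator, and (iv) the measurement reconstruction lemma that lets us replace Hamming-basis projectors by computational-basis Hamming-weight projectors.

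First I would show that the Hamming encoding $V_d$ simultaneously encodes both pieces of the relaxed QHD Hamiltonian
$\hat{H}_{\mathrm{QHD}}(t) = -\tfrac{e^{\varphi_t}}{2}\hat{A}'_d + e^{\chi_t}\hat{F}_d$ at every time $t$. By \prop{d-xz-subspace}, $V_d^\dagger\bigl(\tfrac{1}{r^{1/2}}S_x\bigr)V_d = \hat{A}'_d$, and by \cor{qp-encoding}, $V_d^\dagger H_P V_d = \hat{F}_d$. Since $\S^{\otimes d}$ is invariant under both $S_x$ and $H_P$, it is invariant under every real linear combination, and linearity of $V_d^\dagger(\cdot)V_d$ gives, for the rescaling in \sec{time-energy-rescale},
\begin{align*}
V_d^\dagger\!\left[-\tfrac{A(t)/h}{2}\,S_x + \tfrac{B(t)/h}{2}\,H_P\right]\!V_d \;=\; \lambda\,\hat{H}_{\mathrm{QHD}}(\lambda\xi)\Big|_{t=\lambda\xi},
\end{align*}
where the factor $\lambda$ comes from $A(t)/h=\lambda r^{3/2}e^{\varphi_t}$ and $B(t)/h=2\lambda e^{\chi_t}$. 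Thus the Quantum Ising Machine Hamiltonian restricted to $\S^{\otimes d}$ equals the time-dilated relaxed QHD Hamiltonian, so the dynamics match under the substitution $t=\lambda\xi$ per \prop{time_dilation}/\sec{time-energy-rescale}.

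Next I would check initial-state compatibility: by \prop{multi-gaussian}, $\ket{\psi_0}=\ket{+}^{\otimes n} = V_d\ket{\phi_0}$ where $\ket{\phi_0}$ is the discrete Gaussian listed in \tab{HammingEncoding}. Combined with the Hamiltonian identification of the previous step, \prop{analog-simulation} (applied in its time-dependent form) yields, for every observable $O$ on $\H^{\otimes d}$ and every $t$,
\begin{align*}
\braket{\phi_t}{O\,|\,\phi_t} \;=\; \braket{\psi_{t/\lambda}}{V_d O V_d^\dagger\,|\,\psi_{t/\lambda}}.
\end{align*}
Taking $O=E^{(k)}_j$ and noting that $V_d E^{(k)}_j V_d^\dagger = \tilde{E}^{(k)}_j$ as defined in \eqn{transformed_Ekj}, we obtain at $t=T$, $t_f=T/\lambda$,
\begin{align*}
\braket{\phi_T}{E^{(k)}_j\,|\,\phi_T} \;=\; \braket{\psi_{t_f}}{\tilde{E}^{(k)}_j\,|\,\psi_{t_f}}.
\end{align*}

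Finally, since the simulator state stays in the invariant encoding subspace, $\ket{\psi_{t_f}}\in\S^{\otimes d}$, and \prop{d-dim-reconstruct} converts the Hamming-basis measurement $\tilde{E}^{(k)}_j$ into the computational-basis Hamming-weight projector $\mathscr{M}^{(k)}_j$ without changing probabilities, giving
$\braket{\psi_{t_f}}{\tilde{E}^{(k)}_j\,|\,\psi_{t_f}} = \braket{\psi_{t_f}}{\mathscr{M}^{(k)}_j\,|\,\psi_{t_f}}$, which chains with the previous identity to finish the proof. The only substantive bookkeeping step is the third one: making sure that the $\lambda r^{3/2}$ and $2\lambda$ prefactors in $A(t),B(t)$ precisely account for both (a) the reciprocal $1/r^{1/2}$ appearing in $V_d^\dagger S_x V_d = r^{1/2}\hat{A}'_d$ and the factor of $1/2$ in \eqn{ising-machine}, and (b) the overall time dilation $t=\lambda\xi$; everything else is a direct application of results already established in \sec{theory_analog}.
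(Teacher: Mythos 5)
Your proposal is correct and follows essentially the same route as the paper's own proof: initial-state matching via \prop{multi-gaussian}, simultaneous encoding of the kinetic and potential parts via \prop{d-xz-subspace} and \cor{qp-encoding}, the time--energy rescaling $t=\lambda\xi$, the observable-equivalence step via \prop{analog-simulation} with $\tilde{E}^{(k)}_j = V_d E^{(k)}_j V_d^\dagger$, and finally \prop{d-dim-reconstruct} to pass from Hamming-basis projectors to the computational-basis Hamming-weight measurements $\mathscr{M}^{(k)}_j$. The one factor you flag in your bookkeeping remark is real but inherited from the paper itself: \eqn{relaxed-qhd-discrete} omits the $r^2$ arising from the finite-difference Laplacian while \eqn{rescaled-qhd-discrete} and \eqn{time-dep-fun} implicitly include it, so your identity $V_d^\dagger\bigl[-\tfrac{A(t)/h}{2}S_x + \tfrac{B(t)/h}{2}H_P\bigr]V_d = \lambda\hat{H}_{\mathrm{QHD}}(\lambda\xi)$ holds exactly when $\hat{H}_{\mathrm{QHD}}$ is read with that $r^2$ restored, which is the same convention the paper's proof tacitly adopts.
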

\begin{proof}
    Let $V\colon \C^{r+1} \to S \subset \C^{2^r}$ be the Hamming encoding defined in \eqn{1-dim-hamming-encoding}. Let $\H = \C^{(r+1)^d}$ be the Hilbert space of discretized QHD (in $d$ dimensions), and $\H'=\C^{2^n}$ be the simulator Hamiltonian with $n=dr$ being the total number of qubits required. We denote $V_d\colon \H \subset S^{\otimes d} \subset \H'$ as the Hamming encoding in $d$ dimensions as in \eqn{d-dim-hamming-encoding}. 

    First, we check the initial state. By \prop{multi-gaussian}, we see that $V_d \ket{\phi_0} = \ket{psi_0}$. Next, we check the Hamiltonian encoding. By \prop{d-xz-subspace} and \cor{qp-encoding}, we have 
    \begin{align*}
        V^\dagger_d\left(-\frac{e^{\varphi_t}r^{3/2}}{2}S_x + e^{\chi_t} H_P\right)V_d = -\frac{r^2}{2} \hat{A}'_d + e^{\chi_t}\hat{F}_d,
    \end{align*}
    which matches the relaxed QHD Hamiltonian \eqn{relaxed-qhd-discrete}. We further introduce the time rescaling: $t = \lambda \xi$ for $\xi \in [0, T/\lambda]$, the time-rescaled simulator dynamics is described as in \eqn{rescaled-qhd-discrete}. Therefore, by \prop{analog-simulation}, we have $\braket{\phi_T}{E^{(k)}_j\Big|\phi_T} = \braket{\psi_{t_f}}{\tilde{E}^{(k)}_j\Big|\psi_{t_f}}$, where the transformed measurement element is $\tilde{E}^{(k)}_j = V_d E^{(k)}_jV^\dagger_d$ (see \eqn{transformed_Ekj}). Lastly, it follows from \prop{d-dim-reconstruct} that 
    \begin{align*}
        \braket{\phi_T}{E^{(k)}_j\Big|\phi_T} = \braket{\psi_{t_f}}{\mathscr{M}^{(k)}_j\Big|\psi_{t_f}}.
    \end{align*}

\end{proof}

\subsection{Hamming encoding v.s. Radix-2 encoding: a case study}
We consider a quadratic program in two dimensions. The objective function is
\begin{equation}
    f(x_1,x_2) = \frac{1}{2}\left(Q_{11} x_1^2 + (Q_{12}+Q_{21}) x_1 x_2 + Q_{22} x_2^2\right) + b_1 x_1 + b_2 x_2,
\end{equation}
assuming the Hessian is symmetric, i.e. $Q_{12}=Q_{21}$.

Suppose the resolution is $1/8$, so we use $r=8$ qubits for each continuous variable for the Hamming encoding.

From \cor{qp-encoding}, the Quantum Ising Machine Hamiltonian is
\begin{align}
    H_{\text{Hamming}} &= -\frac{1}{32}\left[\left(Q_{11}+Q_{12}+2b_2\right)\sum_{j=0}^{7}\sigma_z^{(j)} + \left(Q_{21}+Q_{22}+2b_2\right)\sum_{j=8}^{15}\sigma_z^{(j)} \right] \\
    &+ \frac{1}{256}\left[Q_{11}\left(\sum_{j=0}^{7}\sum_{k=0}^{j-1} \sigma_z^{(j)}\sigma_z^{(k)}\right) + Q_{21}\left(\sum_{j=8}^{15}\sum_{k=0}^{7} \sigma_z^{(j)}\sigma_z^{(k)}\right)+Q_{22}\left(\sum_{j=8}^{15}\sum_{k=8}^{j-1} \sigma_z^{(j)}\sigma_z^{(k)}\right)\right],
\end{align}
ignoring the constant term which only contributes a global phase.

For the radix-2 encoding, we only use $r=4$ qubits for each variable. To derive the quantum Ising machine Hamiltonian, let $\mathbf{s}_1,\mathbf{s}_2\in\{-1,1\}^r$ be binary vectors such that $x_1$ and $x_2$ are approximated by
\begin{align}
    x_1 &\approx \frac{\mathbf{p}^\top (\mathbf{1}-\mathbf{s}_1)}{2} = \frac{1}{2}\left(1 -  \mathbf{p}^\top \mathbf{s}_1\right) \\
    x_2 &\approx \frac{\mathbf{p}^\top (\mathbf{1}-\mathbf{s}_2)}{2} = \frac{1}{2}\left(1 -  \mathbf{p}^\top \mathbf{s}_2\right),
\end{align}
where $\mathbf{p}=\begin{pmatrix} 1/2 & 1/4 & 1/8 & 1/8 \end{pmatrix}$.

The discretized quadratic objective function is
\begin{align}
    \hat{f}(\mathbf{s_1},\mathbf{s_2})
    &= \frac{1}{8}\left(Q_{11}\mathbf{s}_1^\top \mathbf{p}\mathbf{p}^\top \mathbf{s}_1 + (Q_{12}+Q_{21}) \mathbf{s}_1^\top \mathbf{p}\mathbf{p}^\top \mathbf{s}_2 + Q_{22} \mathbf{s}_2^\top \mathbf{p}\mathbf{p}^\top \mathbf{s}_2 \right) \\
    & \qquad - \frac{1}{8}\left((2Q_{11}+Q_{12}+Q_{21}+4b_1)\mathbf{p}^\top \mathbf{s}_1 + (2Q_{22}+Q_{12}+Q_{21}+4b_2)\mathbf{p}^\top \mathbf{s}_2\right)\\
    & \qquad + \frac{1}{8}\left(Q_{11}+Q_{12}+Q_{21}+Q_{22}+4b_1 + 4b_2\right)
\end{align}

Let $p_j$ be the $j$th entry of $\mathbf{p}$, and let $p_{jk}=p_{j}p_{k}$ be the $(j,k)$-th entry of $\mathbf{p}\mathbf{p}^\top$.
Ignoring the constant terms, the quantum Ising machine Hamiltonian is
\begin{align}
    H_{\text{Radix-2}}
    &= -\frac{1}{4}\left[(Q_{11}+Q_{12}+2b_1)\sum_{j=0}^{3}p_j \sigma_z^{(0)} + (Q_{21}+Q_{22}+2b_2)\sum_{j=0}^{3}p_j \sigma_z^{(j+4)}\right]\\
    & + \frac{1}{4}\left[Q_{11}\sum_{j=0}^{3}\sum_{k=0}^{j-1} p_{jk} \sigma_z^{(j)}\sigma_z^{(k)} + Q_{22}\sum_{j=0}^{3} \sum_{k=0}^{j-1} p_{jk} \sigma_z^{(j+4)}\sigma_z^{(k+4)}\right]\\
    & + \frac{(Q_{12}+Q_{21})}{8} \sum_{j=0}^{3}\sum_{k=0}^{3}p_{jk}\sigma_z^{(j)}\sigma_z^{(k+4)}.
\end{align}

\section{Quadratic Programming on D-Wave}
\label{sec:qp-benchmark}
\subsection{Test problems}\label{sec:qp-test-problems}
We test the performance of QHD and other algorithms on quadratic programming (QP) problems. Quadratic programming problems are the ``simplest'' class of optimization problems beyond linear programs with many applications in operations research \cite{furini2019qplib}. However, solving QP problems is challenging in practice: non-convex QP is known to be NP-hard \cite{vavasis1990quadratic}, even with one negative eigenvalue \cite{pardalos1991quadratic}.

In our empirical study, we focus on the special case of quadratic programming with box constraints, where the problem is formulated as
\begin{subequations}\label{eqn:qp}
\begin{align}
    \text{minimize}&\qquad f(\x)=\frac{1}{2} \x^\top \mathbf{Q} \x + \mathbf{b}^\top \x, \label{eqn:qp_obj}\\
    \text{subject to}
    &\qquad \mathbf{0} \preccurlyeq \x \preccurlyeq \mathbf{1},
\end{align}
\end{subequations}
where $\mathbf{0}$ and $\mathbf{1}$ are $n$-dimensional vectors of all zeros and all ones, respectively. In general, such problems are still known to be NP-hard \cite{burer2009nonconvex}.

We implement quantum algorithms (QHD and QAA) on the D-Wave \texttt{Advantage\_system6.1}. While there exists extensive libraries of QP test instances in the literature (e.g., \cite{furini2019qplib}), most of these instances can not be solved by the D-Wave QPU because of the limited connectivity between qubits.\footnote{The qubits on D-Wave QPU are aranged as the Pegasus graph. The maximum degree of the Pegasus graph is 15 \cite{mcgeoch2020dwave}. This means we can not implement very dense qubit-qubit interaction on the QPU.} For this reason, we randomly generate a test benchmark set with 160 QP instances whose Hessin matrices $\mathbf{Q}$ are of low sparsity. 

\paragraph{Parameters of the QP benchmark.}
These instances are split into 4 groups by their dimension: \texttt{QP-5d}, \texttt{QP-50d-5s}, \texttt{QP-60d-5s} and \texttt{QP-75d-5s}.\footnote{Due to limited connectivity and number of qubits, the maximum dimension of the problems we are able to embed on the D-Wave QPU is roughly 75.} The sparsity of these problems (i.e., the number of non-zero entries per row/column in the Hessian matrix $\mathbf{Q}$) is fixed to be $5$. There are ten $5$-dimensional instances, and fifty instances of dimension $50$, $60$, $75$, respectively. The entries of the Hessian are uniformly sampled from $[-1,1]$. We enforce the box constraint $\x \in [0,1]^n$ for all instances.

\subsection{Optimization solvers and metric of performance}
We test 7 methods: DW-QHD, DW-QAA, IPOPT, SNOPT, MATLAB's \texttt{fmincon} (with SQP), QCQP~\cite{park2017general}, and a basic Scipy \texttt{minimize} (with TNC) method. The ground truth (global solution) is obtained by running Gurobi. For each instance and method, we perform 1000 trials (or anneals/shots, in the case of quantum methods). We implement the quantum methods (DW-QHD, DW-QAA) on the D-Wave \texttt{Advantage\_system6.1}. The results from D-Wave QPU are post-processed by the Scipy \texttt{minimize} local solver (with TNC method). For the classical methods, the initial point is a uniformly random point in the box $[0,1]^d$. For quantum algorithms, we discretize each dimension into 8 cells (i.e., resolution = 8 per continuous variable). This means 4 logical qubits per dimension in QAA and 8 logical qubits per dimension in QHD. In order to investigate the quality of the D-Wave QPU, we also numerically simulate QHD and QAA for the ten 5-dimensional instances (Sim-QHD, Sim-QAA) on a classical computer. For a fair comparison, we use the same D-Wave QPU parameters~\cite{dwave-qpu-characteristics} in the numerical simulation. All the classical computations are done using a computer with 11th Gen Intel(R) Core(TM) i7-11800H @ 2.30GHz. We use Python 3.9.7 and MATLAB R2021b.\footnote{The QCQP package has not been updated since 2018. To run the QCQP solver, we use Python 3.6.13.}

We use the time-to-solution (TTS) metric \cite{ronnow2014defining} to compare the performance of optimization algorithms. TTS is the number of runs or shots required to obtain the correct global solution at least once with $0.99$ success probability:
\begin{align}
    TTS = t_f \times \Big\lceil\frac{\ln(1-0.99)}{\ln(1-p_s)}\Big\rceil,
\end{align}
where $t_f$ is the average runtime per trial/shot, and $p_s$ is the success probability of finding the global solution in a given trial/shot. We regard a given result $x_f$ as a global solution if $|f(x_f) - f(x^*)|\le 0.01$, where $x^*$ is the solution returned by Gurobi.

\subsection{Details of the D-Wave quantum computer}
In nature, the D-Wave quantum computer is a programmable quantum Ising machine. The quantum evolution in D-Wave machines is described by the time-dependent Schr\"odinger equation ($0 \le t \le t_f$):
\begin{align}\label{eqn:dwave-evolution}
    ih \frac{\d}{\d t}\ket{\psi_t} = H_{dwave}(t)\ket{\psi_t},
\end{align}
where $h$ is the Planck's constant,\footnote{To be consistent with the data provided by the D-Wave company, we use the \textit{original} Planck's constant in the Schr\"odinger equation. Its exact value is $h = 6.626 \times 10^{-34}$ J/Hz.} and the system Hamiltonian reads \cite[Section 3.3]{dwave-manual}:
\begin{align}\label{eqn:dwave-ham}
    H_{dwave}(s) = \underbrace{- \frac{A(s)}{2} \left(\sum_j \sigma^{(j)}_x\right)}_{\text{Initial Hamiltonian}} + \underbrace{\frac{B(s)}{2} \left(\sum_j h_j \sigma^{(j)}_z + \sum_{j>k} J_{j,k} \sigma^{(j)}_z \sigma^{(k)}_z\right)}_{\text{Problem Hamiltonian}},
\end{align}
where $s$ is the \textit{normalized anneal fraction} ranging from $0$ to $1$. The operator $\sigma^{(j)}_\alpha$ stands for 1-site Pauli operators acting on $n$ qubits:
\begin{align}
    \sigma^{(j)}_\alpha = I_0 \otimes ...\otimes (P_\alpha)_j \otimes ... \otimes I_{n-1},
\end{align}
where $P_\alpha$ with $\alpha = x,y,z$ refers to the Pauli matrices. For simplicity, we denote 
\begin{align}
    S_x = \sum_j \sigma^{(j)}_x.
\end{align}

\paragraph{Programmable parameters of the D-Wave machine:}
\begin{enumerate}
    \item Problem description parameters $h_j$, $J_{j,k}$. They are used to formulate the problem Hamiltonian so that its ground state encodes the anwser to the problem that we desire to solve. Note that the parameters $h_j$, $J_{j,k}$ describe the problem in Ising format. D-Wave quantum sampler also accept problems in QUBO format, see~\sec{qubo_format}.
    \item Annealing duration $t_f$. This parameter is the end time of the quantum simulation and can be specified by users.
    \item Annealing schedule. D-Wave QPU has pre-fixed time-dependent functions $A(s)$ and $B(s)$ and users can not change these functions; for more details, see \cite{dwave-qpu-characteristics}. For a given annealing duration $t_f$, the default annealing schedule used by D-Wave QPU is the linear interpolation $s = t/t_f$, where $t$ is the actual annealing time. One can also specify custom annealing schedules~\cite{dwave-solver} by inputing a sequence $\{(t_0, 0), (t_1,s_1),\dots, (t_m,1)\}$, where $t_0 = 0$, and $t_m = t_f$ is the total annealing time. The user-specified annealing schedule will be interpreted as a non-decreasing function for $t\in[0,t_f]$ and $s\in[0,1]$. Adjacent pairs $(t_k,s_k)$ are joined by linear interpolation.
\end{enumerate}

\begin{remark}
    As mentioned above, the annealing functions $A(s)$, $B(s)$ are pre-fixed and not programmable by the users. They begin at $s=0$ with $A(s)\gg B(s)$ and end at $s=1$ with $A(s)\ll B(s)$. The actual shapes of $A$ and $B$ are non-linear, and it is reported that $B$ grows quadratically in $s$. On the D-Wave \texttt{Advantage\_system6.1}, $A(0)/h = 9.63$ GHz and $B(1)/h = 7.57$ GHz. These functions are illustrated in \fig{anneal_schedule}, see the ``DW default A(t)'' and ``DW default B(t)'' curves in the right panel.
\end{remark}

\subsection{QUBO format of quadratic programming}
\label{sec:qubo_format}
The D-Wave quantum sampler accepts two formats of problem description, i.e., the \textit{Ising} format and the \textit{QUBO} format~\cite{dwave-problem-reformulate}. These two formats are equivalent and can be converted to each other. In \cor{qp-encoding}, we give the Ising format of quadratic programming problems. In practice, we find the QUBO format is more succinct and easier to construct with standard numerical programming tools.

Quadratic Unconstrained Binary Optimization (QUBO) refers to the following minimization problem:
\begin{align}
    \min_{\vect{z}\in\{0,1\}^n}f(\vect{z}) = \frac{1}{2}\sum_{j< k}q_{j,k} z_j z_k + \sum_j q_j z_j,
\end{align}
where $q_{j,k},q_j$ are real numbers. 

To map a QP problem to a QUBO, we represent each continuous variable $x_i$ (with $i=1,\dots,d$) using a binary vector $\mathbf{w}_i \in \{0,1\}^b$ (where $b$ is the resolution in the binary expansion) such that $x_i=\mathbf{p}^\top \mathbf{w}_i$, where $\mathbf{p}\in \R^b$ is a precision vector that defines the encoding. Therefore, the full solution vector $\vect{x}\in \R^d$ is expanded to a binary variable $\vect{w} \in \R^{bd}$ such that 
\begin{align}\label{eqn:binary_expansion}
    \vect{x} = \mathbf{P} \vect{w},
\end{align}
where $\mathbf{P} \coloneqq \mathbf{I}_d \otimes \mathbf{p}^\top$ is the matrix that carries the encoding information. 

With the binary expansion \eqn{binary_expansion}, a QP problem \eqn{qp_obj} is mapped to the following QUBO problem:
\begin{align}\label{eqn:qubo_qp}
    \min_{\vect{w}\in \{0,1\}^{bd}} f(\vect{w}) =\frac{1}{2} \mathbf{w}^\top \mathbf{P}^\top \mathbf{Q} \mathbf{P}\mathbf{w} + \mathbf{b}^\top \mathbf{P} \mathbf{w}.
\end{align}

In DW-QHD, we use the Hamming encoding scheme to represent QP problems. In the Hamming encoding scheme, a real variable is expanded to a $b$-bit binary string and the real variable is evaluated from the Hamming weight of the binary string. With $b = 8$, the corresponding precision vector for DW-QHD is 
\begin{align}\label{eqn:qhd_precision}
    \mathbf{p}_\text{DW-QHD} = \begin{pmatrix} 1/8 & \dots & 1/8 \end{pmatrix}^\top.
\end{align}

Similarly, in DW-QAA, a real variable is represented by a $b$-bit binary string through the radix-2 expansion.\footnote{In other words, the encoding is the same as the base-2 floating point representation of real numbers.} We choose $b = 4$ for DW-QAA (so the smallest resolution is $1/8$, the same as DW-QHD) and the corresponding precision vector is
\begin{equation}
    \mathbf{p}_\text{DW-QAA} = 
    \begin{pmatrix}
    1/8 & 1/8 & 1/4 & 1/2
    \end{pmatrix}^\top.
\end{equation}

It is worth noting that the box constraint in the QP problem \eqn{qp} is implicitly enforced in the QUBO format \eqn{qubo_qp} as long as all elements in the precision vector $\mathbf{p}$ sum up to $1$.  

\subsection{Details of the implementation of DW-QHD and DW-QAA}\label{sec:implementation_details}
We implement two quantum methods (QHD, QAA) on the D-Wave \texttt{Advantage\_system6.1}, accessed through \href{https://aws.amazon.com/braket/}{Amazon Braket}.

\paragraph{Time-dependent functions (annealing schedules).}
The time-dependent functions $A(t)$ and $B(t)$ in DW-QAA and DW-QHD are realized by the control of the annealing schedules on D-Wave QPU \cite{dwave-solver}. We consider two annealing schedules in our experiment: 
\begin{enumerate}
    \item the D-Wave default annealing schedule $s = t/t_f$;
    \item a custom annealing schedule that resembles the time-dependent parameters in QHD. The custom annealing schedule is defined by the sequence of $(t,s)$ pairs: 
    $$\{(0,0), (400,0.3), (640, 0.6), (800, 1)\},$$ 
    where consecutive pairs are linearly interpolated.
\end{enumerate}
We set the annealing duration $t_f = \SI{800}{\micro\second}$ in our D-Wave experiment for reasonable performance. In \fig{anneal_schedule}, we illustrate the annealing schedules and the corresponding time-dependent functions.

\begin{figure}[!ht]
    \centering
    \includegraphics[width=14cm]{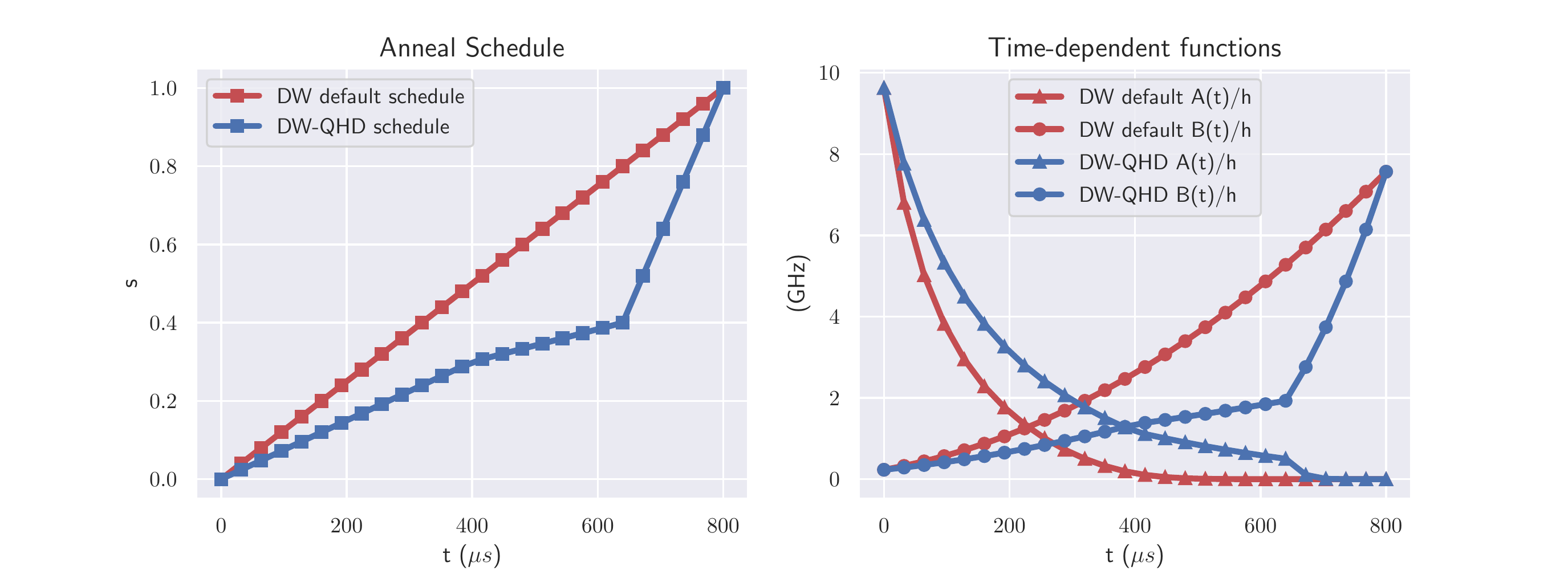}
    \caption[Anealing schedules in our experiments]{Annealing schedules of DW-QAA (default) and DW-QHD (custom).}
    \label{fig:anneal_schedule}
\end{figure}

To understand how the annealing schedules affect the performance of quantum algorithms, we conduct a pre-test of four different encoding-schedule combinations through numerical simulation:
\begin{itemize}
    \item (a) Radix-2 encoding + the default schedule; 
    \item (b) Radix-2 encoding + the custom schedule; 
    \item (c) Hamming encoding + the default schedule; 
    \item (d) Hamming encoding + the custom schedule.
\end{itemize}
The combinations (a) and (b) are for QAA, and (c) and (d) are for QHD. Each encoding-schedule combination is numerically simulated on a classical computer to solve problems in the benchmark \texttt{QP-5d} (consisting of ten 5-dimensional non-convex QP instances). The numerical simulation is performed with the machine Hamiltonian/parameters of the D-Wave \texttt{Advantage\_system6.1}, so the results reflect the performance of the quantum algorithms on an \textit{ideal} D-Wave machine. We find that the combinations (a) and (b) show similar performance (their TTS are in the magnitude of $10^{-2}$), while they are significantly worse than the combinations (c) and (d) (their TTS are in the magnitude of $10^{-3}$). Our finding suggests that the choice of quantum algorithms (or more specifically, the mapping of the QP problems to QIMs) mostly determines the quality of solutions, while the annealing schedules only play a minor role.

Based on our numerical observation, we decide to use the combination (a) in our D-Wave implementation of QAA (i.e., DW-QAA) and the combination (d) in our D-Wave implementation of QHD (i.e., DW-QHD). We believe our choices of annealing schedules best match the conventional design of QAA and QHD while giving us a faithful understanding of the performance of QAA/QHD on the D-Wave machine.

\paragraph{Minor embedding and qubit counts.}
Due to the limited connectivity of the D-Wave QPU, the target QUBO problems are complied to a sparser problem before mapping to the real quantum sampler. This compilation process is called \textit{minor embedding}~\cite{dwave-minorembedding} and it is automatically done by the D-Wave computing cloud. We list the number of logical qubits (i.e., qubits used in the target QUBO problem) and the number of physical qubits (i.e., active qubits used by the D-Wave QPU) in \tab{qhd_qubits} (DW-QHD) and \tab{qaa_qubits} (DW-QAA). Although we use only double the number of logical qubits for DW-QHD compared to DW-QAA (\tab{qaa_qubits}), the additional number of physical qubits required is much larger than double due to the limited connectivity of the D-Wave QPU. In fact, if we were to use 8 qubits for each continuous variable for QAA, the physical qubit counts would be identical to those shown in \tab{qhd_qubits}.

\begin{table}[!ht]
    \centering
    \begin{tabular}{|c|c|c|}
        \hline
        \textbf{Dimensions} & \textbf{Logical Qubits} & \textbf{Physical Qubits} \\
        \hline
        50 & 400 & 2630 \\
        \hline
        60 & 480 & 3276 \\
        \hline
        75 & 600 & 4064 \\
        \hline
    \end{tabular}
    \caption[Logical and physical qubit counts in DW-QHD]{Number of logical and physical qubits used to run DW-QHD for QP problems of dimensions 50, 60, and 75.}
    \label{tab:qhd_qubits}
\end{table}

\begin{table}[!ht]
    \centering
    \begin{tabular}{|c|c|c|}
        \hline
        \textbf{Dimensions} & \textbf{Logical Qubits} & \textbf{Physical Qubits} \\
        \hline
        50 & 200 & 662 \\
        \hline
        60 & 240 & 834 \\
        \hline
        75 & 300 & 1047 \\
        \hline
    \end{tabular}
    \caption[Logical and physical qubit counts in DW-QAA]{Number of logical and physical qubits used to run DW-QAA for QP problems of dimensions 50, 60, and 75.}
    \label{tab:qaa_qubits}
\end{table}

\paragraph{Effective evolution time of QHD.}
In our experiment, we set the annealing duration $t_f = \SI{800}{\micro\second}$ for DW-QHD. Now, we try to compute the effective evolution time $T$ as in the original QHD formulation:
\begin{align*}
    i \ket{\Psi_t} = \left[e^{\varphi_t}\left(-\frac{1}{2}\nabla^2\right) + e^{\chi_t} f(x)\right] \ket{\Psi_t}.
\end{align*}
We focus on the kinetic part, as the scaling of the potential $f$ does not change the location of the global minimizer. By \sec{time-energy-rescale}, the effective evolution time $T = \lambda t_f$, where the time dilation parameter $\lambda$ is computed by matching the coefficients in \eqn{rescale_A} at $s = 0$, 
\begin{align}
    \lambda e^{\varphi_0} r^{3/2} = \frac{A(0)}{h}.
\end{align}

Assuming that $e^{\varphi_0} = 500$ (as in \eqn{nonconvex_QHD_h}), and we note that $r=8$ and $A(0)/h = 9.63\times 10^9$ (Hz), it follows that
\begin{align}
    \lambda = \frac{A(0)/h}{r^{3/2}e^{\varphi_0}} = \frac{9.63\times 10^9}{8^{3/2} \times 500} \approx 8.51\times 10^5.
\end{align}
Therefore, the effective evolution time in QHD is 
\begin{align}
    T = \lambda t_f \approx 681.
\end{align}

\paragraph{Post-processing.}
Due to the limited QPU resources, we have low spatial resolution in the analog implementation of QHD and QAA. To better understand the performance of QHD and QAA on continuous problems, we apply Scipy \texttt{minimize} function (with the ``TNC'' solver) to post-process the quantum samples for DW-QHD and DW-QAA. The Scipy \texttt{minimize-TNC} function is a gradient-based local nonlinear solver, so it does not bring extra advantage in the post-processing for non-convex problems.

\subsection{Experiment results}\label{sec:experimental}

The TTS data for each tested solver are shown in \fig{fig4}. Raw experiment data and the code are available online.\footnote{Raw data is available on \href{https://umd.box.com/s/vq747fvjnt8qrkbxprexhoh44n0q9m0i}{Box}. The code for our experiment is available on \href{https://github.com/jiaqileng/quantum-hamiltonian-descent}{GitHub}.}

In the 5-dimensional benchmark, we also have simulated QHD and QAA (see Sim-QHD, Sim-QAA). It turns out that Sim-QHD (with evolution time $t_f = \SI{1}{\micro\second}$) still does better than DW-QHD (with annealing time $t_f=\SI{800}{\micro\second}$), indicating that the D-Wave system is subject to significant noise and decoherence. Another interesting observation is that DW-QAA (with annealing time $t_f = \SI{800}{\micro\second}$) does better than the ideal simulation Sim-QAA (annealing time $t_f=\SI{1}{\micro\second}$), which shows QAA has slower convergence. 

In the other three benchmarks (high-dimensional QP problems), DW-QHD outperforms DW-QAA, usually by a considerable margin. It is worth noting that we use different annealing schedules for DW-QHD and DW-QAA (but with the same annealing time $t_f = \SI{800}{\micro\second}$). However, our experiment shows that DW-QAA is roughly unchanged even if we use the same annealing schedule as for DW-QHD.

\subsection{Limitations of the D-Wave implementation of QHD}\label{sec:limitations}
The QHD we have implemented on the D-Wave QPU is still of the following limitations. 

\paragraph{Connectivity and programmability of the D-Wave QPU.}
Due to the limited connectivity of the physical qubits on the D-Wave \texttt{Advantage\_system6.1}, we have to employ minor embedding so that the target/problem Hamiltonian is mapped to the QPU. In theory, minor embedding only preserves the ground state of the problem Hamiltonian while it can perturb the subspace-encoded QHD Hamiltonian. This means the solutions we obtain with minor embedding can be worse than the solutions from the ideal Quantum Ising Machine setup (as in \algo{analog-qhd}). Also, the time-dependent functions $A(t)$ and $B(t)$ are not fully programmable so we can not implement arbitrary time-dependent parameters $e^{\varphi_t}$ and $e^{\chi_t}$ on D-Wave.

\paragraph{Naive algorithmic designs.}
Currently, our D-Wave implementation of QHD is exceedingly plain compared to classical methods, and we observe that DW-QHD is unable to outperform state-of-the-art branch-and-bound methods such as those used by Gurobi and CPLEX.\footnote{While branch-and-bound methods are typically used for discrete and combinatorial problems, they are also able to solve nonconvex QPs by performing spatial branching and solving linear programming relaxed subproblems \cite{mccormick1976computability}.} By effectively searching an enumeration tree, branch-and-bound solvers are able to find a solution and potentially guarantee optimality. In DW-QHD, we have neither the smart search strategy nor guarantee optimality of solutions. However, we want to remark that the branch-and-bound technique requires exponential resources in the worst-case (see \fig{branch_and_bound}), and generally problems of sufficiently large dimension and density cannot be solved to optimality in any reasonable amount of time. With larger quantum machines and better connectivity, it may be the case that QHD is able to solve problems with a better optimality gap than existing branch-and-bound methods with a reasonable but fixed runtime limit. Also, there is potential to integrate many classical ideas into QHD, e.g., use QHD as a subroutine in a branch-and-bound search strategy, etc. Our empirical study is evident to demonstrate to potential of QHD, in which we show QHD is competitive with state-of-the-art nonlinear local solvers such as Ipopt.

\begin{figure}[!ht]
    \centering
    \includegraphics[width=16cm]{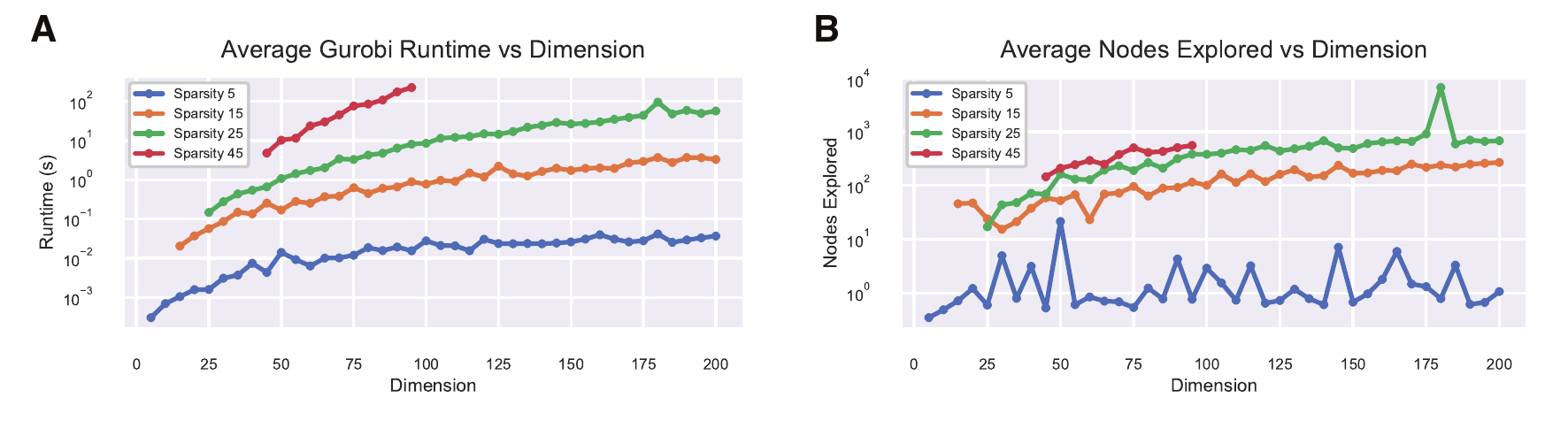}
    \caption[Gurobi runtime scales exponentially with problem parameters.]{\textbf{A:} Average runtime in seconds for Gurobi to solve random QPs of sparsity 5, 15, 25, and 45 for various dimensions. Each point denotes the average runtime over 100 random instances for a given dimension and sparsity.
    \textbf{B:} Average number of branching nodes explored by Gurobi for random QPs of sparsity 5, 15, 25, and 45 for various dimensions.}
    \label{fig:branch_and_bound}
\end{figure}

\subsection{Resource analysis for digital and analog implementation}
To better understand the actual cost of the digital implementation when applied to quadratic programming problems (with box constraints), we compute the number of T gate (i.e., T-count)\footnote{A standard universal gate set is Clifford + T. It is known that a T gate is far more expensive to implement fault tolerantly than any of the Clifford gates. Therefore, the T-count faithfully reflects the difficulty of fault-tolerant implementation.} required to solve the QP problems in different resolutions, assuming the same experiment setup as in our experiment on D-Wave QPU. 

For the digital implementation of QHD, we consider a similar product-formula method as in \eqn{prod_qhd_fft}. Note that the FFT can be replaced with Quantum Fourier Transform (QFT) and implemented on a digital quantum computer. 

We need to perform spatial discretization in \eqn{prod_qhd_fft}. The precision parameter $q$ (i.e., number of qubits per continuous variable) controls the resolution in the spatial discretization of the feasible domain. We will compute the T-count for $q = 3,16,32$. Note that the precision $q=3$ corresponds to the same resolution ($r=2^3$) we have in our experiment on the D-Wave sampler. The T-count of several quantum subroutines are listed in \tab{Tcount_subroutine}  (see \cite{haener2018quantum,nam2020approximate} for more details).\footnote{The estimated T-counts for $q=3$ are computed by extrapolation.} 

\begin{table}[!ht]
    \centering
    \begin{tabular}{ |p{5cm}|p{3cm}|p{3cm}|p{3cm}|  }
        \hline
            \textbf{Quantum subroutines} & \textbf{3-qubit} & \textbf{16-qubit}& \textbf{32-qubit}\\
        \hline
            Quantum adder & 587 & 4704 & 11144 \\
        \hline 
            Quantum multiplier & 173 & 6328 & 26642 \\
        \hline
            Approximate QFT & 170 & 1162 & 2698 \\
        \hline
    \end{tabular}
    \caption{T-count of quantum floating-point adder, multiplier, and approximate quantum Fourier transform}
    \label{tab:Tcount_subroutine}
\end{table}

\begin{table}[!ht]
    \centering
    \begin{tabular}{ |p{3.5cm}|p{3.5cm}|p{3.5cm}|p{3.5cm}| }
        \hline
            \textbf{Dimensions} & \textbf{3-qubit format} & \textbf{16-qubit format} & \textbf{32-qubit format}\\
        \hline
            50 & 5.49e+8 & 7.8386e+9 & 2.672e+10 \\
        \hline
            60 & 6.588e+8 & 9.4063e+9 & 3.2064e+10\\
        \hline
            75 & 8.235e+8 & 1.1758e+10 & 4.008e+10\\
        \hline
    \end{tabular}
    \caption[T-count of digital implementation of QHD]{T-count of digital implementation of QHD for QP problems.}
    \label{tab:resource-analysis}
\end{table}

\begin{proposition}\label{prop:tcount-algo}
    Consider a quadratic programming problem defined in \eqn{qp} without linear constraint (i.e., $\mathbf{Q}_c = 0$, $\mathbf{b}_c = 0$). Suppose the problem is of dimension $d$, and the problem matrix $\mathbf{Q}$ has at most $s$ non-zero elements on each row/column. We use $c_{add}$, $c_{mult}$, $c_{aqft}$ to denote the T-count in the adder, multiplier, and approximate QFT.  With $R$ iterations, the digital implementation has T-count $$2\Big((c_{add}+c_{mult})(s+2)+c_{aqft}\Big)dR.$$
\end{proposition}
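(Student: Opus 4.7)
The plan is to bound the T-count of one iteration of \algo{trotter_qhd} and then multiply by the number of iterations $R$. Recall that each iteration applies the unitary
\begin{equation*}
\mathbf{F}_s \exp(-i s a_j \mathbf{L})\,\mathbf{F}^{-1}_s\,\exp(-i s b_j \mathbf{V}),
\end{equation*}
so it suffices to account separately for (i) the potential-phase oracle $\exp(-i s b_j \mathbf{V})$, (ii) the two shifted quantum Fourier transforms $\mathbf{F}_s^{\pm 1}$, and (iii) the kinetic-phase oracle $\exp(-i s a_j \mathbf{L})$.

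First I would count the cost of the potential-phase step. Because $\mathbf{V}$ is diagonal in the computational basis, the standard way to realize $\exp(-i s b_j \mathbf{V})$ is to coherently compute $f(\x)$ into an ancilla register, apply the diagonal phase using a $Z$-rotation on the ancilla, and then uncompute $f(\x)$ — a ``compute / phase / uncompute'' pattern that doubles the arithmetic cost. Since $\mathbf{Q}$ has at most $s$ nonzeros per row, the $i$-th component of $\mathbf{Q}\x$ can be formed with $s$ multiplications and $s-1$ additions; the contributions $x_i \cdot (\mathbf{Q}\x)_i/2$ and $b_i x_i$ add two further multiplications and two additions, and accumulating across the $d$ rows of $\mathbf{Q}$ contributes another $d-1$ additions. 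Consolidating, the forward computation of $f(\x)$ uses at most $d(s+2)$ quantum multiplications and $d(s+2)$ quantum additions, for a T-count of $d(s+2)(c_{\mathrm{add}}+c_{\mathrm{mult}})$; the uncompute doubles this to $2 d(s+2)(c_{\mathrm{add}}+c_{\mathrm{mult}})$.

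Next I would account for the QFT and kinetic-phase steps. Each of the $d$ continuous variables is stored in its own $q$-qubit register, and the shifted QFT acts as a tensor product of one $q$-qubit QFT per register; hence $\mathbf{F}_s$ and $\mathbf{F}^{-1}_s$ each contribute $d\, c_{\mathrm{aqft}}$, giving $2 d\, c_{\mathrm{aqft}}$ in total per iteration. For $\exp(-i s a_j \mathbf{L})$, the diagonal entries of $\mathbf{L}$ in momentum coordinates are a sum of squares $\sum_{k=1}^d p_k^2$ of the $d$ registers, which is computed and uncomputed using only $O(d)$ additions and multiplications — a cost strictly dominated by the $2d(s+2)(c_{\mathrm{add}}+c_{\mathrm{mult}})$ term (since $s\ge 1$), so it is absorbed into the potential-phase contribution in the final bound.

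Adding the three contributions gives a per-iteration T-count of $2\bigl[(c_{\mathrm{add}}+c_{\mathrm{mult}})(s+2)+c_{\mathrm{aqft}}\bigr]d$, and multiplying by the number of Trotter steps $R$ yields the claimed bound $2\bigl((c_{\mathrm{add}}+c_{\mathrm{mult}})(s+2)+c_{\mathrm{aqft}}\bigr)dR$. The main delicate point in the argument is getting the count $d(s+2)$ right — in particular, showing that the sparse quadratic form $\tfrac12 \x^\top \mathbf{Q}\x+\mathbf{b}^\top \x$ really does admit a reversible evaluation that grows linearly (rather than quadratically) in $d$ and linearly in $s$. Once that combinatorial bookkeeping is set up, the remaining steps are essentially transcribing the compute/uncompute template and the tensor-product structure of $\mathbf{F}_s$.
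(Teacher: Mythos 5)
Your overall structure matches the paper's: decompose one Trotter step into the potential phase $\exp(-i s b_j \mathbf{V})$, the two shifted QFTs, and the kinetic phase $\exp(-i s a_j \mathbf{L})$, count each using the compute/uncompute (two-query) doubling, and multiply by $R$. Your QFT accounting ($2 c_{aqft} d$ per step) also agrees with the paper. However, there is a genuine bookkeeping error in how you allocate the arithmetic budget, and it makes your total exceed the claimed bound rather than equal it.

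Concretely, you charge the potential evaluation $d(s+2)$ multiplications and $d(s+2)$ additions, so your potential-phase term alone is $2(c_{add}+c_{mult})(s+2)d$ --- i.e., the \emph{entire} arithmetic budget appearing in the proposition. You then assert that the kinetic-phase arithmetic (the $O(d)$ additions and multiplications needed to evaluate the diagonal entries of $\mathbf{L}$ in momentum space) is ``strictly dominated'' and therefore ``absorbed into the potential-phase contribution.'' That absorption is not valid: domination lets you hide a term inside a big-$O$ bound, but the proposition asserts an exact expression, and your potential-phase count plus the QFT count already saturates it; adding any strictly positive kinetic cost yields a total strictly larger than $2\bigl((c_{add}+c_{mult})(s+2)+c_{aqft}\bigr)dR$. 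The paper's accounting splits the $(s+2)$ factor differently: the oracle $O_f$ is charged $(sd+d)$ additions and multiplications (i.e., $(s+1)$ pairs per dimension, one per nonzero of $\mathbf{Q}$ plus one for the linear term), doubled to $2(c_{add}+c_{mult})(s+1)d$ for the two oracle queries, and the remaining $2(c_{add}+c_{mult})d$ is exactly the kinetic-phase cost of computing the entries of $\mathbf{L}$. Potential $(s+1)$ plus kinetic $1$ gives $(s+2)$ per dimension, and the sum then matches the proposition. To repair your argument, tighten the potential-phase count to $(s+1)d$ arithmetic pairs and charge the kinetic phase its own $d$ pairs explicitly, instead of trying to absorb it into a budget you have already spent.
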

\begin{proof}
    Suppose the initial guess state $\ket{\psi_0}$ is the uniform superposition state so it be efficiently prepared (T-count = 0). The implementation of $\exp(-is b_j \mathbf{V})$ uses two queries to $O_f$ (see Figure 1-1 in \cite[Section 1.2]{childs2004quantum}). There are $(sd+d)$ additions and multiplications, respectively, in the evaluation of $f$. Therefore, the T-count of $O_f$ is $(c_{add}+c_{mult})(sd+d)$. This means the T-count in $\exp(-is b_j \mathbf{V})$ is $2(c_{add}+c_{mult})(sd+d)$. In computing the matrix elements of $\mathbf{L}$, there are at least $d$ multiplications and $d$ additions; hence, the T-count is $2(c_{add}+c_{mult})d$ in the implementation of $\exp(-is a_j \mathbf{L})$. The QSFT for one dimension can be implemented with one approximate QFT and two diagonal operators \cite[Lemma 5]{childs2021high}. Even ignoring the T-count in the two diagonal operators, QSFT has (at least) the T-count of $c_{aqft}d$ for $d$ dimensions. Therefore, a single iterative step in the digital implementation uses $2((c_{add}+c_{mult})(s+2)+c_{aqft})d$. This number times $R$ is the total T-count in the quantum algorithm.
\end{proof}

In \tab{resource-analysis}, we evaluate the T-count in the digital implementation of QHD for solving the quadratic programming problems from \sec{qp-benchmark} based on \prop{tcount-algo} (we set $R=1000$).\footnote{In \sec{implementation_details}, we compute the effective QHD evolution time $T\approx 681$ in our D-Wave implementation. In fact, to simulate the same quantum evolution with digital quantum computer, the number of iterations $R$ will be much bigger than $T$. Here, we just take a mild estimate $R=1000$.} The T-count in the low-precision implementation (i.e., $3$-qubit representation of a continuous variable) is already greater than $5\times 10^{8}$. However, near-term digital quantum computers merely promise the controllability of hundreds of qubits and thousands of gates. This means the resources required by QHD to solve real-life problems are beyond the capability of any near-term digital quantum computers.

\bibliographystyle{myhamsplain}
\bibliography{ref-supp}

\providecommand{\bysame}{\leavevmode\hbox to3em{\hrulefill}\thinspace}
\begin{thebibliography}{10}

\bibitem{betancourt:symplectic}
Michael Betancourt, Michael~I Jordan, and Ashia~C Wilson, \emph{On symplectic
  optimization}, 2018,
  \href{https://arxiv.org/abs/1802.03653}{arXiv:1802.03653}.

\bibitem{boixo2014evidence}
Sergio Boixo, Troels~F R{\o}nnow, Sergei~V Isakov, Zhihui Wang, David Wecker,
  Daniel~A Lidar, John~M Martinis, and Matthias Troyer, \emph{Evidence for
  quantum annealing with more than one hundred qubits}, Nature physics
  \textbf{10} (2014), no.~3, 218--224.

\bibitem{brandao2017quantum}
Fernando~GSL Brandao and Krysta~M Svore, \emph{Quantum speed-ups for solving
  semidefinite programs}, 2017 IEEE 58th Annual Symposium on Foundations of
  Computer Science (FOCS), pp.~415--426, IEEE, 2017.

\bibitem{burer:nonconvex}
Samuel Burer and Adam~N Letchford, \emph{On nonconvex quadratic programming
  with box constraints}, SIAM Journal on Optimization \textbf{20} (2009),
  no.~2, 1073--1089.

\bibitem{burer2012non}
Samuel Burer and Adam~N Letchford, \emph{Non-convex mixed-integer nonlinear
  programming: A survey}, Surveys in Operations Research and Management Science
  \textbf{17} (2012), no.~2, 97--106.

\bibitem{chakrabarti2020quantum}
Shouvanik Chakrabarti, Andrew~M Childs, Tongyang Li, and Xiaodi Wu,
  \emph{Quantum algorithms and lower bounds for convex optimization}, Quantum
  \textbf{4} (2020), 221.

\bibitem{chen2019deep}
Jiasi Chen and Xukan Ran, \emph{Deep learning with edge computing: A review},
  Proceedings of the IEEE \textbf{107} (2019), no.~8, 1655--1674.

\bibitem{cohen:portfolio}
Jeffrey Cohen, Alex Khan, and Clark Alexander, \emph{Portfolio optimization of
  60 stocks using classical and quantum algorithms}, 2020,
  \href{https://arxiv.org/abs/2008.08669}{arXiv:2008.08669}.

\bibitem{cplex2009v12}
IBM~ILOG CPLEX, \emph{V12. 1: User’s manual for {CPLEX}}, International
  Business Machines Corporation \textbf{46} (2009), no.~53, 157.

\bibitem{dostal2009optimal}
Zdenek Dost{\'a}l, \emph{Optimal quadratic programming algorithms: with
  applications to variational inequalities}, vol.~23, Springer Science \&
  Business Media, 2009.

\bibitem{farhi:quantum}
Edward Farhi, Jeffrey Goldstone, Sam Gutmann, Joshua Lapan, Andrew Lundgren,
  and Daniel Preda, \emph{A quantum adiabatic evolution algorithm applied to
  random instances of an {NP}-complete problem}, Science \textbf{292} (2001),
  no.~5516, 472--475.

\bibitem{furini:qplib}
Fabio Furini, Emiliano Traversi, Pietro Belotti, Antonio Frangioni, Ambros
  Gleixner, Nick Gould, Leo Liberti, Andrea Lodi, Ruth Misener, Hans
  Mittelmann, et~al., \emph{{QPLIB}: a library of quadratic programming
  instances}, Mathematical Programming Computation \textbf{11} (2019), no.~2,
  237--265.

\bibitem{gill2005snopt}
Philip~E Gill, Walter Murray, and Michael~A Saunders, \emph{Snopt: An {SQP}
  algorithm for large-scale constrained optimization}, SIAM Review \textbf{47}
  (2005), no.~1, 99--131.

\bibitem{glorot2010understanding}
Xavier Glorot and Yoshua Bengio, \emph{Understanding the difficulty of training
  deep feedforward neural networks}, Proceedings of the thirteenth
  international conference on artificial intelligence and statistics,
  pp.~249--256, JMLR Workshop and Conference Proceedings, 2010.

\bibitem{gurobi}
{Gurobi Optimization, LLC}, \emph{{Gurobi} optimizer reference manual}, 2022.

\bibitem{inagaki2016coherent}
Takahiro Inagaki, Yoshitaka Haribara, Koji Igarashi, Tomohiro Sonobe, Shuhei
  Tamate, Toshimori Honjo, Alireza Marandi, Peter~L McMahon, Takeshi Umeki,
  Koji Enbutsu, et~al., \emph{A coherent {I}sing machine for 2000-node
  optimization problems}, Science \textbf{354} (2016), no.~6312, 603--606.

\bibitem{main:dwave-qpu-characteristics}
D-Wave~Systems Inc., \emph{{D}-{W}ave {S}olver {D}ocumentation}, 2021,
  \href{https://docs.dwavesys.com/docs/latest/doc_physical_properties.html}{QPU-Specific
  Characteristics}.

\bibitem{dwave-sys-doc}
D-Wave~Systems Inc., \emph{{D}-{W}ave {S}ystem {D}ocumentation}, 2021,
  \url{https://docs.dwavesys.com/docs/latest/index.html}.

\bibitem{jain2017non}
Prateek Jain, Purushottam Kar, et~al., \emph{Non-convex optimization for
  machine learning}, Foundations and Trends in Machine Learning \textbf{10}
  (2017), no.~3-4, 142--363.

\bibitem{kalev2019quantum}
Amir Kalev, Tongyang Li, Cedric Yen-Yu Lin, Krysta~M Svore, Xiaodi Wu, et~al.,
  \emph{Quantum sdp solvers: Large speed-ups, optimality, and applications to
  quantum learning}, Leibniz international proceedings in informatics (2019).

\bibitem{kawajir2006introduction}
Yoshiaki Kawajir, Carl Laird, and A~Wachter, \emph{Introduction to {I}popt: A
  tutorial for downloading, installing, and using {I}popt}, Tech. report, Tech.
  Rep., Carnegie Mellon University, 2006.

\bibitem{killoran2019strawberry}
Nathan Killoran, Josh Izaac, Nicol{\'a}s Quesada, Ville Bergholm, Matthew Amy,
  and Christian Weedbrook, \emph{{Strawberry Fields}: A software platform for
  photonic quantum computing}, Quantum \textbf{3} (2019), 129,
  \href{https://arxiv.org/abs/1804.03159}{arXiv:1804.03159}.

\bibitem{lecun2015deep}
Yann LeCun, Yoshua Bengio, and Geoffrey Hinton, \emph{Deep learning}, Nature
  \textbf{521} (2015), no.~7553, 436--444.

\bibitem{li2019sublinear}
Tongyang Li, Shouvanik Chakrabarti, and Xiaodi Wu, \emph{Sublinear quantum
  algorithms for training linear and kernel-based classifiers}, International
  Conference on Machine Learning, pp.~3815--3824, PMLR, 2019.

\bibitem{morton:numerical}
Keith~W Morton and David~Francis Mayers, \emph{Numerical solution of partial
  differential equations: an introduction}, Cambridge University Press, 2005.

\bibitem{nenciu:linear}
Gheorghe Nenciu, \emph{Linear adiabatic theory. {E}xponential estimates},
  Communications in Mathematical Physics \textbf{152} (1993), no.~3, 479--496.

\bibitem{nocedal1999numerical}
Jorge Nocedal and Stephen~J Wright, \emph{Numerical optimization}, Springer,
  1999.

\bibitem{park:general}
Jaehyun Park and Stephen Boyd, \emph{General heuristics for nonconvex
  quadratically constrained quadratic programming}, 2017,
  \href{https://arxiv.org/abs/1703.07870}{arXiv:1703.07870}.

\bibitem{potok:adiabatic}
Thomas Potok et~al., \emph{Adiabatic quantum linear regression}, Scientific
  Reports \textbf{11} (2021), no.~1, 1--10.

\bibitem{rebentrost:quantum}
Patrick Rebentrost, Maria Schuld, Leonard Wossnig, Francesco Petruccione, and
  Seth Lloyd, \emph{Quantum gradient descent and {N}ewton’s method for
  constrained polynomial optimization}, New Journal of Physics \textbf{21}
  (2019), no.~7, 073023.

\bibitem{ronnow:defining}
Troels~F R{\o}nnow, Zhihui Wang, Joshua Job, Sergio Boixo, Sergei~V Isakov,
  David Wecker, John~M Martinis, Daniel~A Lidar, and Matthias Troyer,
  \emph{Defining and detecting quantum speedup}, Science \textbf{345} (2014),
  no.~6195, 420--424.

\bibitem{ruder2016overview}
Sebastian Ruder, \emph{An overview of gradient descent optimization
  algorithms}, 2016, \href{https://arxiv.org/abs/1609.04747}{arXiv:1609.04747}.

\bibitem{safran2018spurious}
Itay Safran and Ohad Shamir, \emph{Spurious local minima are common in
  two-layer {R}elu neural networks}, International Conference on Machine
  Learning, pp.~4433--4441, PMLR, 2018.

\bibitem{van2020convex}
Joran van Apeldoorn, Andr{\'a}s Gily{\'e}n, Sander Gribling, and Ronald
  de~Wolf, \emph{Convex optimization using quantum oracles}, Quantum \textbf{4}
  (2020), 220.

\bibitem{van2020quantum}
Joran Van~Apeldoorn, Andr{\'a}s Gily{\'e}n, Sander Gribling, and Ronald
  de~Wolf, \emph{Quantum {SDP}-solvers: Better upper and lower bounds}, Quantum
  \textbf{4} (2020), 230.

\bibitem{weinan2020machine}
E~Weinan, \emph{Machine learning and computational mathematics}, 2020,
  \href{https://arxiv.org/abs/2009.14596}{arXiv:2009.14596}.

\bibitem{wibisono:variational}
Andre Wibisono, Ashia~C Wilson, and Michael~I Jordan, \emph{A variational
  perspective on accelerated methods in optimization}, Proceedings of the
  National Academy of Sciences \textbf{113} (2016), no.~47, E7351--E7358.

\bibitem{wurtz2022industry}
Jonathan Wurtz, Pedro Lopes, Nathan Gemelke, Alexander Keesling, and Shengtao
  Wang, \emph{Industry applications of neutral-atom quantum computing solving
  independent set problems}, 2022,
  \href{https://arxiv.org/abs/2205.08500}{arXiv:2205.08500}.

\bibitem{zhang2021quantum}
Chenyi Zhang, Jiaqi Leng, and Tongyang Li, \emph{Quantum algorithms for
  escaping from saddle points}, Quantum \textbf{5} (2021), 529.

\end{thebibliography}


\providecommand{\bysame}{\leavevmode\hbox to3em{\hrulefill}\thinspace}
\begin{thebibliography}{10}

\bibitem{Al-Roomi2015}
Ali~R. Al-Roomi, \emph{{Unconstrained Single-Objective Benchmark Functions
  Repository}}, 2015, \url{https://www.al-roomi.org/benchmarks/unconstrained}.

\bibitem{an2021time}
Dong An, Di~Fang, and Lin Lin, \emph{Time-dependent unbounded {H}amiltonian
  simulation with vector norm scaling}, Quantum \textbf{5} (2021), 459.

\bibitem{betancourt2018symplectic}
Michael Betancourt, Michael~I Jordan, and Ashia~C Wilson, \emph{On symplectic
  optimization}, 2018,
  \href{https://arxiv.org/abs/1802.03653}{arXiv:1802.03653}.

\bibitem{boyd2001chebyshev}
John~P Boyd, \emph{Chebyshev and {F}ourier spectral methods}, Courier
  Corporation, 2001.

\bibitem{burer2009nonconvex}
Samuel Burer and Adam~N Letchford, \emph{On nonconvex quadratic programming
  with box constraints}, SIAM Journal on Optimization \textbf{20} (2009),
  no.~2, 1073--1089.

\bibitem{childs2022quantum}
Andrew~M. Childs, Jiaqi Leng, Tongyang Li, Jin-Peng Liu, and Chenyi Zhang,
  \emph{Quantum simulation of real-space dynamics}, Quantum \textbf{6} (2022),
  860, \href{https://arxiv.org/abs/2203.17006}{arXiv:2203.17006}.

\bibitem{childs2021high}
Andrew~M Childs, Jin-Peng Liu, and Aaron Ostrander, \emph{High-precision
  quantum algorithms for partial differential equations}, Quantum \textbf{5}
  (2021), 574.

\bibitem{childs2018toward}
Andrew~M Childs, Dmitri Maslov, Yunseong Nam, Neil~J Ross, and Yuan Su,
  \emph{Toward the first quantum simulation with quantum speedup}, Proceedings
  of the National Academy of Sciences \textbf{115} (2018), no.~38, 9456--9461.

\bibitem{childs2004quantum}
Andrew~Macgregor Childs, \emph{Quantum information processing in continuous
  time}, Ph.D. thesis, Massachusetts Institute of Technology, 2004.

\bibitem{cohen2020portfolio}
Jeffrey Cohen, Alex Khan, and Clark Alexander, \emph{Portfolio optimization of
  60 stocks using classical and quantum algorithms}, 2020,
  \href{https://arxiv.org/abs/2008.08669}{arXiv:2008.08669}.

\bibitem{cubitt2018universal}
Toby~S Cubitt, Ashley Montanaro, and Stephen Piddock, \emph{Universal quantum
  {H}amiltonians}, Proceedings of the National Academy of Sciences \textbf{115}
  (2018), no.~38, 9497--9502.

\bibitem{dattani2019quadratization}
Nike Dattani, \emph{Quadratization in discrete optimization and quantum
  mechanics}, 2019, \href{https://arxiv.org/abs/1901.04405}{arXiv:1901.04405}.

\bibitem{farhi2001quantum}
Edward Farhi, Jeffrey Goldstone, Sam Gutmann, Joshua Lapan, Andrew Lundgren,
  and Daniel Preda, \emph{A quantum adiabatic evolution algorithm applied to
  random instances of an {NP}-complete problem}, Science \textbf{292} (2001),
  no.~5516, 472--475.

\bibitem{feynman2010quantum}
Richard~P Feynman, Albert~R Hibbs, and Daniel~F Styer, \emph{Quantum mechanics
  and path integrals}, Courier Corporation, 2010.

\bibitem{furini2019qplib}
Fabio Furini, Emiliano Traversi, Pietro Belotti, Antonio Frangioni, Ambros
  Gleixner, Nick Gould, Leo Liberti, Andrea Lodi, Ruth Misener, Hans
  Mittelmann, et~al., \emph{{QPLIB}: a library of quadratic programming
  instances}, Mathematical Programming Computation \textbf{11} (2019), no.~2,
  237--265.

\bibitem{griffiths2018introduction}
David~J Griffiths and Darrell~F Schroeter, \emph{Introduction to quantum
  mechanics (third edition)}, Cambridge University Press, 2018.

\bibitem{haener2018quantum}
Thomas Haener, Mathias Soeken, Martin Roetteler, and Krysta~M Svore,
  \emph{Quantum circuits for floating-point arithmetic}, International
  Conference on Reversible Computation, pp.~162--174, Springer, 2018.

\bibitem{harris2018phase}
R~Harris, Y~Sato, AJ~Berkley, M~Reis, F~Altomare, MH~Amin, K~Boothby, P~Bunyk,
  C~Deng, C~Enderud, et~al., \emph{Phase transitions in a programmable quantum
  spin glass simulator}, Science \textbf{361} (2018), no.~6398, 162--165.

\bibitem{hatano2005finding}
Naomichi Hatano and Masuo Suzuki, \emph{Finding exponential product formulas of
  higher orders}, Quantum annealing and other optimization methods, Springer,
  2005, pp.~37--68.

\bibitem{hislop2012introduction}
Peter~D Hislop and Israel~Michael Sigal, \emph{Introduction to spectral theory:
  With applications to {S}chr{\"o}dinger operators}, vol. 113, Springer Science
  \& Business Media, 2012.

\bibitem{dwave-minorembedding}
D-Wave~Systems Inc., \emph{{D}-{W}ave {O}cean documentation}, 2021,
  \href{https://docs.ocean.dwavesys.com/en/stable/concepts/embedding.html}{Concepts:
  Minor-Embedding}.

\bibitem{dwave-qpu-characteristics}
D-Wave~Systems Inc., \emph{{D}-{W}ave solver documentation}, 2021,
  \href{https://docs.dwavesys.com/docs/latest/doc_physical_properties.html}{QPU-Specific
  Characteristics}.

\bibitem{dwave-manual}
D-Wave~Systems Inc., \emph{{D}-{W}ave solver documentation: Getting started
  with {D}-{W}ave solvers}, 2021,
  \href{https://docs.dwavesys.com/docs/latest/c_gs_2.html}{What is Quantum
  Annealing?}

\bibitem{dwave-problem-reformulate}
D-Wave~Systems Inc., \emph{{D}-{W}ave solver documentation: Problem-solving
  handbook}, 2021,
  \href{https://docs.dwavesys.com/docs/latest/handbook_reformulating.html}{Reformulating
  a Problem}.

\bibitem{dwave-solver}
D-Wave~Systems Inc., \emph{{D}-{W}ave solver documentation: {QPU} solver
  datasheet}, 2021,
  \href{https://docs.dwavesys.com/docs/latest/c_qpu_annealing.html#}{Annealing
  Implementation and Controls}.

\bibitem{jamil2013literature}
Momin Jamil and Xin-She Yang, \emph{A literature survey of benchmark functions
  for global optimisation problems}, International Journal of Mathematical
  Modelling and Numerical Optimisation \textbf{4} (2013), no.~2, 150--194.

\bibitem{kivlichan2017bounding}
Ian~D. Kivlichan, Nathan Wiebe, Ryan Babbush, and Al{\'a}n Aspuru-Guzik,
  \emph{Bounding the costs of quantum simulation of many-body physics in real
  space}, Journal of Physics A: Mathematical and Theoretical \textbf{50}
  (2017), no.~30, 305301,
  \mbox{\href{http://arxiv.org/abs/arXiv:1608.05696}{arXiv:1608.05696}}.

\bibitem{maddison2018hamiltonian}
Chris~J Maddison, Daniel Paulin, Yee~Whye Teh, Brendan O'Donoghue, and Arnaud
  Doucet, \emph{Hamiltonian descent methods}, 2018,
  \href{https://arxiv.org/abs/1809.05042}{arXiv:1809.05042}.

\bibitem{mccormick1976computability}
Garth~P McCormick, \emph{Computability of global solutions to factorable
  nonconvex programs: Part {I}—{C}onvex underestimating problems},
  Mathematical programming \textbf{10} (1976), no.~1, 147--175.

\bibitem{mcgeoch2020dwave}
Catherine McGeoch and Pau Farr{\'e}, \emph{The {D}-{W}ave {A}dvantage system:
  An overview}, Tech. report, D-Wave Systems Inc., 2020.

\bibitem{morton2005numerical}
Keith~W Morton and David~Francis Mayers, \emph{Numerical solution of partial
  differential equations: an introduction}, Cambridge university press, 2005.

\bibitem{nam2020approximate}
Yunseong Nam, Yuan Su, and Dmitri Maslov, \emph{Approximate quantum {F}ourier
  transform with o (n log (n)) t gates}, NPJ Quantum Information \textbf{6}
  (2020), no.~1, 1--6.

\bibitem{nenciu1993linear}
Gheorghe Nenciu, \emph{Linear adiabatic theory. {E}xponential estimates},
  Communications in mathematical physics \textbf{152} (1993), no.~3, 479--496.

\bibitem{nesterov2013introductory}
Yurii Nesterov, \emph{Introductory lectures on convex optimization: A basic
  course}, vol.~87, Springer Science \& Business Media, 2013.

\bibitem{nesterov1983method}
Yurii~E Nesterov, \emph{A method for solving the convex programming problem
  with convergence rate o (1/k\^{}2)}, Dokl. akad. nauk Sssr, vol. 269,
  pp.~543--547, 1983.

\bibitem{pardalos1991quadratic}
Panos~M Pardalos and Stephen~A Vavasis, \emph{Quadratic programming with one
  negative eigenvalue is {NP}-hard}, Journal of Global Optimization \textbf{1}
  (1991), no.~1, 15--22.

\bibitem{park2017general}
Jaehyun Park and Stephen Boyd, \emph{General heuristics for nonconvex
  quadratically constrained quadratic programming}, 2017,
  \href{https://arxiv.org/abs/1703.07870}{arXiv:1703.07870}.

\bibitem{potok2021adiabatic}
Thomas Potok et~al., \emph{Adiabatic quantum linear regression}, Scientific
  reports \textbf{11} (2021), no.~1, 1--10.

\bibitem{quillen2018integrators}
Alice Quillen, \emph{Phy411 lecture notes part 7 – integrators}, 2018,
  \href{https://astro.pas.rochester.edu/~aquillen/phy411/lecture7.pdf}{PDF}.

\bibitem{rebentrost2019quantum}
Patrick Rebentrost, Maria Schuld, Leonard Wossnig, Francesco Petruccione, and
  Seth Lloyd, \emph{Quantum gradient descent and {N}ewton’s method for
  constrained polynomial optimization}, New Journal of Physics \textbf{21}
  (2019), no.~7, 073023.

\bibitem{roland2002quantum}
J{\'e}r{\'e}mie Roland and Nicolas~J Cerf, \emph{Quantum search by local
  adiabatic evolution}, Physical Review A \textbf{65} (2002), no.~4, 042308.

\bibitem{ronnow2014defining}
Troels~F R{\o}nnow, Zhihui Wang, Joshua Job, Sergio Boixo, Sergei~V Isakov,
  David Wecker, John~M Martinis, Daniel~A Lidar, and Matthias Troyer,
  \emph{Defining and detecting quantum speedup}, Science \textbf{345} (2014),
  no.~6195, 420--424.

\bibitem{saffman2010quantum}
Mark Saffman, Thad~G Walker, and Klaus M{\o}lmer, \emph{Quantum information
  with {R}ydberg atoms}, Reviews of modern physics \textbf{82} (2010), no.~3,
  2313.

\bibitem{sakurai2011modern}
JJ~Sakurai and Napolitano J, \emph{Modern quantum mechanics (second edition)},
  Addison-Wesley, 2011.

\bibitem{shi2020learning}
Bin Shi, Weijie~J Su, and Michael~I Jordan, \emph{On learning rates and
  {S}chr\"odinger operators}, 2020,
  \href{https://arxiv.org/abs/2004.06977}{arXiv:2004.06977}.

\bibitem{su2014differential}
Weijie Su, Stephen~P Boyd, and Emmanuel~J Candes, \emph{A differential equation
  for modeling {N}esterov's accelerated gradient method: Theory and insights.},
  NIPS, vol.~14, pp.~2510--2518, 2014.

\bibitem{simulationlib}
S.~Surjanovic and D.~Bingham, \emph{Virtual library of simulation experiments:
  Test functions and datasets}, Retrieved March 29, 2022, from
  \url{http://www.sfu.ca/~ssurjano}.

\bibitem{suzuki1991general}
Masuo Suzuki, \emph{General theory of fractal path integrals with applications
  to many-body theories and statistical physics}, Journal of Mathematical
  Physics \textbf{32} (1991), no.~2, 400--407.

\bibitem{vavasis1990quadratic}
Stephen~A Vavasis, \emph{Quadratic programming is in {NP}}, Information
  Processing Letters \textbf{36} (1990), no.~2, 73--77.

\bibitem{verdon2018universal}
Guillaume Verdon, Jason Pye, and Michael Broughton, \emph{A universal training
  algorithm for quantum deep learning}, 2018,
  \href{https://arxiv.org/abs/1806.09729}{arXiv:1806.09729}.

\bibitem{wibisono2016variational}
Andre Wibisono, Ashia~C Wilson, and Michael~I Jordan, \emph{A variational
  perspective on accelerated methods in optimization}, Proceedings of the
  National Academy of Sciences \textbf{113} (2016), no.~47, E7351--E7358.

\end{thebibliography}

\end{document}